\documentclass[lettersize,journal]{IEEEtran}
\usepackage[utf8]{inputenc}
\usepackage{amsfonts}
\usepackage{amsmath}
\usepackage{amsthm}
\usepackage{amssymb}
\usepackage{bm}
\usepackage{color}
\usepackage{graphicx}
\usepackage{float}
\usepackage[capitalise]{cleveref}
\newtheorem{theorem}{Theorem}
\newtheorem{lemma}{Lemma}

\newtheorem{remark}{Remark}

\usepackage{xcolor}

\usepackage{tikz,pgfplots}
\usetikzlibrary{arrows,shapes,backgrounds,patterns,fadings,matrix,arrows,calc,
	intersections,decorations.markings,
	positioning,arrows.meta}
\usepgfplotslibrary{fillbetween}
\usepgfplotslibrary{statistics}
\pgfplotsset{width=5\columnwidth /5, compat = 1.13,
	height = 60\columnwidth /100, grid= major,
	legend cell align = left, ticklabel style = {font=\scriptsize},
	every axis label/.append style={font=\small},
	legend style = {font={\scriptsize}},title style={yshift=-7pt, font = \small} }
\usepackage{verbatim}
\usepackage{makecell}
\usepackage{multirow}
\usepackage{caption}
\usepackage{subcaption}

\hyphenation{op-tical net-works semi-conduc-tor}

\begin{document}
	\title{Coevolution of Opinion Dynamics and Recommendation System: Modeling, Analysis and Reinforcement Learning Based Manipulation}
	
	\author{Yuhong Chen$^1$,
		Xiaobing Dai$^{2*}$,
		Martin Buss$^1$,~\IEEEmembership{Fellow,~IEEE
			and Fangzhou Liu$^{3*}$,~\IEEEmembership{Member,~IEEE,}}% <-this % stops a space
		\thanks{
			$^*$Corresponding authors: Xiaobing Dai and Fangzhou Liu.
		}% <-this % stops a space
		\thanks{
			$^1$Chair of Automatic Control Engineering (LSR), School of Computation, Information and Technology (CIT), Technical University of Munich (TUM), 80333 Munich, Germany (email: yuhong.chen, mb@tum.de).
		}% <-this % stops a space
		\thanks{
			$^2$Chair of Information-oriented Control (ITR), School of Computation, Information and Technology (CIT), Technical University of Munich (TUM), 80333 Munich, Germany (email: xiaobing.dai@tum.de).
		}% <-this % stops a space
		\thanks{
			$^3$National Key Laboratory of Modeling and Simulation for Complex Systems, Harbin Institute of Technology, 150001, P.R.China (email: fangzhou.liu@hit.edu.cn).
		}% <-this % stops a space
		\thanks{
			This work is supported in part by the National Natural Science Foundation of China under Grant 62373123, by the Federal Ministry of Research, Technology and Space of Germany in the programme of “Souverän. Digital. Vernetzt.” Joint project 6G-life, project identification number: 16KISK002
		}% <-this % stops a space
	}
	
	\markboth{Journal of \LaTeX\ Class Files,~Vol.~XXX, No.~XXX, November~2024}%
	{Shell \MakeLowercase{\textit{et al.}}: Bare Demo of IEEEtran.cls for IEEE Journals}
	
	\maketitle
	
	\begin{abstract}
		In this work, we develop an analytical framework that integrates opinion dynamics with a recommendation system. By incorporating elements such as collaborative filtering, we provide a precise characterization of how recommendation systems shape interpersonal interactions and influence opinion formation. Moreover, the property of the coevolution of both opinion dynamics and recommendation systems is also shown. 
		Specifically, the convergence of this coevolutionary system is theoretically proved, and the mechanisms behind filter bubble formation are elucidated. 
		Our analysis of the maximum number of opinion clusters shows how recommendation system parameters affect opinion grouping and polarization. Additionally, we incorporate the influence of propagators into our model and propose a reinforcement learning-based solution. The analysis and the propagation solution are demonstrated in simulations using the Yelp data set.
	\end{abstract}
	
	\begin{IEEEkeywords}
		Opinion dynamics, recommendation systems, coevolution, filter bubble, propagation
	\end{IEEEkeywords}
	
	\IEEEpeerreviewmaketitle
	
	\section{Introduction}

	The evolution of human opinion is always a hot topic from ancient times to the present, such that the governors can understand social influence, predict social behavior and enhance policy-making. 
    With the emergence of the internet and big data, the evolution of human opinion has undergone significant changes, thereby attracting increasing attention from researchers.

    \subsection{Related Works}
    In the era of traditional communication without the Internet, people primarily formed connections through geographical proximity and familial ties. This relationship is typically stable and seldom changed, such that the evolution of human opinions can be regarded as based on a fixed communication network in past studies. However, with the rise of the internet and social media, the establishment and breaking of the connection has become much easier than in the past, leading to highly dynamic networks \cite{kwak_fragile_2011} which then influences the update of opinions \cite{del_vicario_echo_2016}. Moreover, individuals' opinions also significantly influent the change of connections. Specifically, two individuals with similar opinions are easier to get connected online, while two with opposite opinions may dissolve the existing connection \cite{cinelli_echo_2021}. In this context, the dynamics of online opinions is, in fact, a process of co-evolution between opinions and the network. Since online interaction currently dominates most people’s social lives, studying the co-evolutionary process of opinion and network is of great importance.
	
	The opinion dynamics are widely studied based on a fixed network, where the evolution of individual opinion in the static communication topology is explored. To describe such opinion dynamics, some foundational models \cite{degroot_reaching_1974, french_jr_formal_1956, friedkin_social_1990} provide valuable insights into the mechanisms of opinion propagation and social influence. Considering the dynamic social networks resulting from the possible connection shift between individuals, opinion propagation within time-varying networks is also explored \cite{moreau_stability_2005, munz_consensus_2011, he_reinforcement-learning-based_2023}. However, these works neglect the impact of opinions on the network changes. 
	The Bounded Confidence Model \cite{krause_discrete_2000}, \cite{lorenz_continuous_2007-2} is a co-evolutionary framework, in which agents ignore opinions outside their confidence interval. Consequently, the interaction graph is determined by pairwise distance metrics in opinion space and co-evolves with the opinions of all agents.
	This implies that the graph depends specifically on the mutual distances between individuals, co-evolving with their opinions. Other notable models exploring co-evolving opinion dynamics include \cite{gu_co-evolution_2017, kang_coevolution_2022, liu_emergence_2023, liu_modeling_2024}.
	The coevolutionary models contribute a deep and detailed understanding of opinion change by considering the adaptability of social connections.
	
	Despite such advancements, most coevolutionary models still fall short of capturing the complexity of modern online social networks, where the recommendation system plays an important role \cite{berjani2011recommendation}.
	Recommendation systems, mediating a large share of current online interactions, are widely used in online platforms like social media, e-commerce and streaming services. 
	By using algorithms for suggesting content, products, or connections based on user opinions, recommendation systems shape the connection between individuals and influence the structure of networks \cite{isinkaye2015recommendation}. 
	While they facilitate the connection formation process for individuals, the recommendation systems may induce ``filter bubbles'' and ``echo chambers''.
	Specifically, users are more frequently exposed to others who align with their preexisting opinions, reinforcing opinions and limiting exposure to diverse perspectives. 
	The influence of recommendation systems on both the formation of new connections and the reinforcement of existing ones adds complexity to opinion dynamics studies.
	This makes the traditional coevolutionary models unsuitable for opinion evolution within modern online social networks with recommendation systems.
	
	Currently, only limited studies specifically explore the impact of recommendation systems on public opinion. 
	For instance, \cite{bellina_effect_2023, piao_humanai_2023, vendeville_opening_2023} investigate the indirect impact of the recommended items on the change of opinions, but they did not consider the formation and evolution of group opinions from the perspective of opinion dynamics.
	Some have recently attempted to explore opinion dynamics in the context of recommendation systems \cite{cinus_effect_2022, perra_modelling_2019, pescetelli_indirect_2022, ramaciotti_morales_auditing_2021, santos_link_2021, tornberg_how_2022}. 
	However, these works only provide empirical results via simulations and lack a comprehensive analysis of the underlying theoretical principle.  
	The theoretical analysis of opinion dynamics in systems with recommendation mechanisms is conducted in \cite{rossi_closed_2022}, where the authors investigate a tractable analytical model of users interacting with an online news aggregator, aiming to elucidate the feedback loop between the evolution of user opinions and personalized content recommendations. Similarly, \cite{iannelli_filter_2022} explores a multi-state voter model, where users interact with “personalized information” that reflects their historically dominant opinions, with a given probability. 
	However, in both studies \cite{rossi_closed_2022, iannelli_filter_2022}, the recommendation system only provides content selection methods for e.g., news items, and neglects its effect on user–user connections.
	Therefore, the analysis of the co-evolution of interpersonal networks is excluded from their works.
	Furthermore, substantial research focuses on modeling and predicting opinion dynamics within systems with recommendation algorithms, overlooking the potential influence exerted by propagators to steer public opinion.

    \subsection{Contributions}
	In this work, we propose an analytical framework combined with opinion dynamics and a recommendation system to study their coevolution. We provide a precise characterization of how recommendation systems influence interpersonal networks, accommodating complex behaviors such as collaborative filtering. The detailed contribution is summarized as follows.
	\begin{itemize}
		\item Theoretically, we prove the convergence of this complex system, offering an explanation for the formation of filter bubbles.
		\item By establishing a relationship between the upper bound on the number of opinion clusters and the solution to the sphere-packing problem, we demonstrate how recommendation system parameters impact opinion grouping and polarization. 
		\item We incorporate the role of propagators into the proposed framework, to shape group opinions. Specifically, the propagation behavior is regarded as an optimization problem, which minimizes the influence cost while achieving the formation of the predefined group opinions.
		\item We address this optimal control problem for propagators within the complex co-evolving system by employing a reinforcement-learning-based controller.
	\end{itemize}
	Furthermore, the effectiveness of the proposed framework and the controller is demonstrated via simulations, where the feasibility of controlling opinion propagation within the framework is shown.
	
	The remaining content of this paper is structured as follows.
	\cref{section_preliminary} presents the foundation of the opinion dynamics and recommendation system. In \cref{section_analysis}, the co-evolutionary framework combining opinion dynamics with recommendation systems is proposed, and its performance is analyzed.
	The effect of propagators and their reinforcement learning-based algorithm are introduced in \cref{section_control}.
	\cref{section_experiment} encompasses the execution of numerical simulations, followed by the conclusion of this paper in \cref{section_conclusion}.
	
	\section{Preliminaries and Problem Setting}
	\label{section_preliminary}
	
	In this section, the fundamental knowledge of opinion dynamics and recommendation systems is introduced in \cref{subsection_opinion_dynamics} and \cref{subsection_recommendation_system} respectively, inducing the problem setting of this paper.
	
	\subsection{Multidimensional Opinion Dynamics}
	\label{subsection_opinion_dynamics}
	
	In this work, we consider a system including $n \in \mathbb{N}_+$ users, and each user $i$ has a multi-dimensional individual opinion $\bm{x}_i \in [0,1]^m$ to $m \in \mathbb{N}_+$ items, i.e., $\bm{x}_i = [x_i^1, \cdots, x_i^m]^T$ for $\forall i = 1, \cdots, m$.
	The opinion of each user $i$ at step $k \in \mathbb{N}$ is affected by the opinions from other users at the same time, forming the opinion dynamics as
	\begin{align} \label{eqn_single_opinion_dynamics}
		\bm{x}_i(k + 1) = \sum\nolimits_{j = 1}^n w_{i,j}(k) \bm{x}_j(k), &&\forall i=1,\cdots,n,
	\end{align}
	where the weights $w_{i,j}(k) \in [0,1] \subset \mathbb{R}$ denote the influence strength from user $j$ to $i$ at step $k$.
	Moreover, $w_{i,j}(k)$ also reflects the connectivity between each user, which is also defined as a time-varying graph $\mathcal{G}(k) = \{ \mathcal{V}, \bm{W}(k) \}$ with vertex set $\mathcal{V} = \{ 1, \cdots, n \}$ and adjacent matrix $\bm{W}(k) = [w_{i,j}(k)]_{i,j = 1,\cdots,n}\in [0,1]^{n \times n} \subset \mathbb{R}^{n \times n}$.
	Define the concatenation of $\bm{x}_i$ as $\bm{X} = [\bm{x}_1^T, \cdots, \bm{x}_n^T] \in [0,1]^{n \times m}$, then the overall multi-dimensional opinion dynamics is written as
	\begin{align} \label{eqn_Degr}
		\bm{X}(k+1) = \bm{W}(k) \bm{X}(k),
	\end{align}
	following the time-varying French-DeGroot model.
	Note that \eqref{eqn_Degr} illustrates how opinions $\bm{X}(k)$ are influenced by the network $\bm{W}(k)$.
	The topology reflected by $\bm{W}(k)$ is affected by the recommendation system, particularly in modern communication over the Internet, which is shown in the next subsection.
	
	\subsection{Recommendation System with Different Metrics}
	\label{subsection_recommendation_system}
	
	Recommendation systems aim to build the connection between two users $i$ and $j$ with $i, j \in \mathcal{V}$ according to the similarity of individual opinions $\bm{x}_i$ and $\bm{x}_j$. 
	To measure the similarity of two opinions $\bm{x}_i$ and $\bm{x}_j$, several methods are employed, such as Euclidean- and angle-based methods.
	Specifically, the Euclidean distance between $\bm{x}_i$ and $\bm{x}_j$ denotes
	\begin{align}\label{dis}
		\mathrm{dis}(\bm{x}_i, \bm{x}_j) = \| \bm{x}_i - \bm{x}_j \|, && \forall i, j \in \mathcal{V},
	\end{align}
	which is composed of the absolute value of $x_i^p - x_j^p$ for every dimension $p = 1, \cdots, m$.
	However, the Euclidean-based measurement overlooks the importance of the ratio between two dimensions $x_i^p$ and $x_i^q$ for any user $i \in \mathcal{V}$ and $p,q = 1,\cdots,m$.
	This induces potentially low similarity between users with similar opinion combinations but different strengths, i.e., for users $i$ and $j$ with $\bm{x}_i = \xi \bm{x}_j$, $\xi \in \mathbb{R}_{0,+}$.
	Taking the ratio $x_i^p / x_i^q$ for any $p,q = 1,\cdots,m$ into consideration, the angle-based cosine similarity is used, which is defined as
	\begin{align} \label{cs}
		\mathrm{sim}(\bm{x}_i,\bm{x}_j) = \frac{ \bm{x}_i^T \bm{x}_j }{\| \bm{x}_i \| \| \bm{x}_j \|},
	\end{align}
	reflecting the cosine of the angle between $\bm{x}_i$ and $\bm{x}_j$.
	While the cosine returns the value in $[-1,1]$, the domain of $\bm{x}_i \in [0,1]^m$ ensures the output of $\mathrm{sim}(\cdot,\cdot)$ is always non-negative.
	
	\begin{remark}
		Note that $\mathrm{dis}(\cdot, \cdot)$ and $\mathrm{sim}(\cdot, \cdot)$ reflect the similarity of users' opinions in a totally opposite way.
		Specifically, $\mathrm{dis}(\cdot, \cdot)$ results in a small value for users with close opinions, while $\mathrm{sim}(\cdot, \cdot)$ returns a large value.
	\end{remark}
	
	In this section, we introduce the fundamental knowledge of opinion dynamics and recommendation systems. Specifically, we aim to develop a comprehensive framework that integrates opinion dynamics with recommendation systems, enabling an analytical examination of the co-evolution between individual opinions and the underlying communication network.
	
	\section{Framework for Opinion Dynamics with Recommendation System}
	\label{section_analysis}
	
	In this section, we propose a framework combining the opinion dynamics with recommendation systems (ODRS) in \cref{subsection_ODRS}.
	Moreover, the convergence and clustering property of the opinion evolution within the proposed framework is analyzed in \cref{subsection_convergency} and \cref{subsection_clustering}, respectively.
	
	\subsection{ODRS Framework}
	\label{subsection_ODRS}
	
	Consider the weights $w_{ij}(\cdot)$ in \eqref{eqn_single_opinion_dynamics} reflect the influence strength from user $j$ to $i$ with $i, j \in \mathcal{V}$, which is determined by recommendation systems within the modern online communication environment, as shown in \cref{subsection_recommendation_system}.
	To represent the influence strength, the degree of connectivity function $s(\cdot, \cdot): [0,1]^m \times [0,1]^m \to \mathbb{R}_{0,+}$ is introduced for both distance- and angle-based methods in \cref{subsection_recommendation_system} with $\mathrm{dis}(\cdot, \cdot)$ and $\mathrm{sim}(\cdot, \cdot)$, respectively.
	Moreover, considering that the domain of $w_{ij}(\cdot)$ in \eqref{eqn_single_opinion_dynamics} is $[0,1]$, the value of $w_{ij}(\cdot)$ is determined by the normalization as
	\begin{equation} \label{eqn_w}
		w_{ij}(k) = \frac{s (\bm{x}_i(k), \bm{x}_j(k))}{\sum_{j \in \mathcal{V}} s (\bm{x}_i(k), \bm{x}_j(k))}
	\end{equation}
	for any $k \in \mathbb{N}$.
	Then, the opinion dynamics in \eqref{eqn_single_opinion_dynamics} becomes
	\begin{equation} \label{OD_S}
		\bm{x}_i(k + 1) = \frac{1}{\sum_{j \in \mathcal{V}} s (\bm{x}_i(k), \bm{x}_j(k))} \sum_{j \in \mathcal{V}} s (\bm{x}_i(k), \bm{x}_j(k)) \bm{x}_j(k),
	\end{equation}
	whose concatenation form with $\bm{X}$ in \eqref{eqn_Degr} is written as
	\begin{align} \label{OD_S_concatenated}
		\bm{X}(k + 1) = (\mathrm{diag}(\bm{S}(k) \bm{1}_n))^{-1} \bm{S}(k) \bm{X}(k)
	\end{align}
	with $S(k) = [s(\bm{x}_i(k), \bm{x}_j(k))]_{i,j \in \mathcal{V}}$.
	The dynamics in \eqref{OD_S} shows the influence of the recommendation system in the structure of opinion dynamics, which is strongly dependent on the value of $s (\cdot, \cdot)$.
	Next, the formulation of $s (\cdot, \cdot)$ is determined under the distance- and angle-based methods.
	
	For the distance-based method, the larger the distance between two opinions is, the less similar the users are.
	Therefore, the function $s(\cdot, \cdot)$ for distance-based method is defined as
	\begin{align} \label{s_f}
		s (\bm{x}_i, \bm{x}_j) = f(\mathrm{dis}(\bm{x}_i, \bm{x}_j)), && \forall i, j \in \mathcal{V}, 
	\end{align}
	where $f(\cdot): \mathbb{R}_{0,+} \to \mathbb{R}_{0,+}$ is a non-increasing function.
	In contrast, for the angle-based method, it holds that the larger the cosine similarity is, the smaller the angle between the users' opinions is.
	This also indicates the more similar the users are, such that the degree of connectivity $s(\cdot, \cdot)$ should be monotonically increasing with respect to cosine similarity $\mathrm{sim}(\cdot, \cdot)$.
	Formally, the degree of connectivity function $s(\cdot, \cdot)$ is defined as
	\begin{align} \label{s_g}
		s (\bm{x}_i, \bm{x}_j) = g(\mathrm{sim}(\bm{x}_i, \bm{x}_j)), && \forall i, j \in \mathcal{V}, 
	\end{align}
	with $g: [0,1] \to \mathbb{R}_{0,+}$ as a non-decreasing function.
	It is obvious to see that larger $s (\bm{x}_i, \bm{x}_j)$ indicates a larger effect from user $j$ to user $i$.
	Moreover, $s (\bm{x}_i, \bm{x}_j) = 0$ means the opinion $\bm{x}_j$ will not be recommended to user $i$. 
	
	Furthermore, considering the fact that not every user is connected to the others, especially with very low similarity in individual opinion, the function $s (\bm{x}_i, \bm{x}_j)$ is not continuous but truncated.
	Specifically, for any user $i \in \mathcal{V}$, the recommendation system usually shows only a few other users, who have sufficiently similar opinions as $i$, instead of exploring everyone.
	This behavior is formulated for the distance-based method as
	\begin{align} \label{f_bar}
		f(\mathrm{dis}(\bm{x}_i, \bm{x}_j)) = \begin{cases}
			1 &, \text{if}~ \mathrm{dis}(\bm{x}_i,\bm{x}_j) \le \sqrt{m} (1 - \epsilon) \\
			0 &, \text{otherwise}
		\end{cases},
	\end{align}
	where $\epsilon \in [0,1] \subset \mathbb{R}$ is a predefined threshold.
	Considering the state domain of $\bm{x}_i$ as $[0,1]^m$ for any $i \in \mathcal{V}$ inducing
	\begin{align}
		\mathrm{dis}(\bm{x}_i,\bm{x}_j) = \sqrt{\sum_{p=1}^m (x_i^p - x_j^p)^2} \le \sqrt{\sum_{p=1}^m 1} = \sqrt{m},
	\end{align}
	it is easy to see $f(\mathrm{dis}(\bm{x}_i, \bm{x}_j)) = \mathrm{dis}(\bm{x}_i,\bm{x}_j)$ for $\epsilon = 0$.
	Moreover, for $\epsilon = 1$, only the users with identical opinions affect each other.
	For angle-based method using cosine similarity, the function $g(\cdot)$ is chosen as
	\begin{align} \label{g_bar}
		g(\mathrm{sim}(\bm{x}_i, \bm{x}_j)) \!=\! \begin{cases}
			\mathrm{sim}(\bm{x}_i, \bm{x}_j) &, \text{if}~ \mathrm{sim}(\bm{x}_i,\bm{x}_j) \!\ge\! \epsilon \\
			0 &, \text{otherwise}
		\end{cases},
	\end{align}
	where $\epsilon = 0$ induces $g(\mathrm{sim}(\bm{x}_i, \bm{x}_j)) = \mathrm{sim}(\bm{x}_i, \bm{x}_j)$.
	For $\epsilon = 1$, the nonzero $g(\mathrm{sim}(\bm{x}_i, \bm{x}_j))$ indicates $\bm{x}_i = \xi \bm{x}_j$ with $\xi \in \mathbb{R}_+$.
	
	In summary, the ODRS framework is shown as in \cref{fig_structure}, whose performance is analyzed in the following subsections.
	\begin{figure}[t]
		\centering
		\includegraphics[width = 0.48\textwidth]{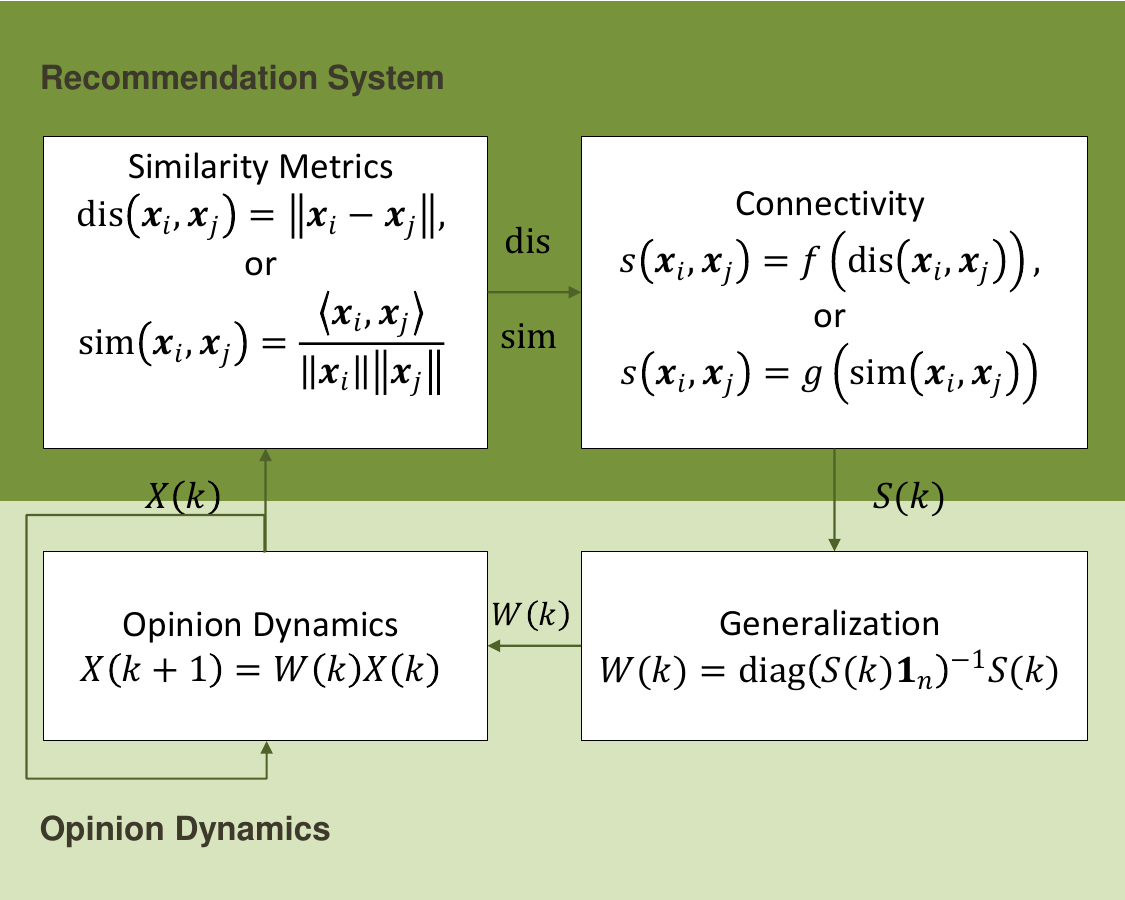}
		\caption{Structure of opinion dynamics with recommendation system}
		\label{fig_structure}
	\end{figure}
	
	\subsection{Convergency Analysis}
	\label{subsection_convergency}
	
	In this subsection, the convergence property of the proposed ODRS framework is analyzed using both distance- and angle-based methods.
	Before the formal analysis, some intermediate results are shown as follows.
	
	\begin{lemma} \label{lemma_range_x}
		Consider the ODRS system \eqref{OD_S_concatenated} with $w_{ij}$ defined in \eqref{eqn_w} and initial state as $\bm{x}_i(0) \in [0,1]^m$ for all $i \in \mathcal{V}$, then it has $\bm{x}_i(k)  \in [0,1]^m$ for any $k \in \mathbb{N}$.
	\end{lemma}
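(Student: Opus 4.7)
The natural approach is a straightforward induction on the time index $k$, exploiting the fact that \eqref{OD_S} expresses $\bm{x}_i(k+1)$ as a convex combination of the $\bm{x}_j(k)$. The base case $k=0$ is given by hypothesis, so the core of the argument is the inductive step, which reduces to verifying that the weights $w_{ij}(k)$ defined by \eqref{eqn_w} are nonnegative and row-stochastic whenever $\bm{x}_i(k) \in [0,1]^m$ for every $i \in \mathcal{V}$.

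Concretely, I would proceed as follows. First, I would note that the degree-of-connectivity function $s(\cdot,\cdot)$ is nonnegative by construction, both in its distance form \eqref{s_f}--\eqref{f_bar} and its angle form \eqref{s_g}--\eqref{g_bar}, since $f$ and $g$ take values in $\mathbb{R}_{0,+}$. Hence every numerator in \eqref{eqn_w} satisfies $s(\bm{x}_i(k),\bm{x}_j(k)) \ge 0$. Second, I would argue that the denominator $\sum_{j \in \mathcal{V}} s(\bm{x}_i(k),\bm{x}_j(k))$ is strictly positive, so that $w_{ij}(k)$ is well defined. This is the only delicate point, and it follows from the self-term $j = i$: for the distance-based rule, $\mathrm{dis}(\bm{x}_i,\bm{x}_i) = 0 \le \sqrt{m}(1-\epsilon)$ for every admissible $\epsilon \in [0,1]$, giving $f(0) = 1$; for the angle-based rule, $\mathrm{sim}(\bm{x}_i,\bm{x}_i) = 1 \ge \epsilon$, giving $g(1) = 1$. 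Thus the self-loop alone contributes at least $1$ to the row sum, so the normalization in \eqref{eqn_w} is always legitimate and $\sum_{j \in \mathcal{V}} w_{ij}(k) = 1$.

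With row-stochasticity in hand, the inductive step is immediate: each component $x_i^p(k+1) = \sum_{j \in \mathcal{V}} w_{ij}(k)\, x_j^p(k)$ is a convex combination of numbers in $[0,1]$, so it lies in $[0,1]$, which gives $\bm{x}_i(k+1) \in [0,1]^m$ and closes the induction.

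The only point where some care is needed is the well-definedness of the weights, and more specifically the angle-based case when an initial opinion could be the zero vector (in which case $\mathrm{sim}(\bm{x}_i,\bm{x}_i)$ is formally undefined in \eqref{cs}). I would handle this either by noting that the nontrivial setting assumes $\bm{x}_i \neq \bm{0}$ so that cosine similarity is defined, or by adopting the natural convention $g(\mathrm{sim}(\bm{x}_i,\bm{x}_i)) = 1$ for the self-term, which is consistent with \eqref{g_bar}. Beyond this, no estimates are required: the result is essentially the classical invariance of the unit hypercube under a row-stochastic linear map, applied coordinatewise.
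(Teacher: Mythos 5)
Your proposal is correct and follows essentially the same route as the paper: the weights in \eqref{eqn_w} form a nonnegative, row-stochastic matrix, so each component of $\bm{x}_i(k+1)$ is a convex combination of values in $[0,1]$, and induction closes the argument. You additionally verify the strict positivity of the normalizing denominator via the self-term (and the zero-vector caveat for cosine similarity), which the paper's proof takes for granted; this is a welcome refinement but not a different method.
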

	\begin{proof}
		Considering the form of $w_{ij}(k)$ in \eqref{eqn_w}, the weight matrix $\mathbf{W}(k)$ is a row-stochastic and non-negative matrix for any $k \in \mathbb{N}$, such that
		\begin{align}
			x_i^p(k+1) =\sum\limits_{j=1}^{n} w_{ij}(k) x_j^p(k) \leq \sum\limits_{j=1}^{n} w_{ij}(k) = 1
		\end{align}
		holds for any $p = 1, \cdots, m$ and $i \in \mathcal{V}$.
		Similarly, it has
		\begin{align}
			x_i^p(k+1) &=\sum\limits_{j=1}^{n} w_{ij}(k) x_j^p(k) \geq \sum\limits_{k=1}^{m} 0 = 0,
		\end{align}
		resulting in $0 \leq  x_i^p(k+1)  \leq 1$ and $\bm{x}_i(k + 1) \in [0,1]^m$, which concludes the proof.
	\end{proof}
	
	\begin{remark}
		\cref{lemma_range_x} indicates that the state domain $[0,1]^m$ is forward invariant within the proposed ODRS framework.
		It is worth noting that the result in Lemma 1 relies only on the initial value $\bm{x}_i(0)$ and normalized weights $w_{ij}$ as defined in \eqref{eqn_w}. Consequently, Lemma 1 remains valid in the convergence analysis of both distance-based and angle-based methods.
	\end{remark}
	
	With the result in \cref{lemma_range_x}, the convergence of the ODRS with distance- and angle-based methods is investigated below.
	
	\subsubsection{Convergence of ODRS with distance-based method}
	
	We first analyze the convergence of the proposed ODRS framework with the distance-based method using \eqref{s_f} and \eqref{f_bar}, which is shown as follows.
	
	\begin{theorem}
		Consider the co-evolutionary opinion dynamics with the recommendation system in \eqref{OD_S} using the distance-based method with \eqref{s_f}, and \eqref{f_bar}. The coevolution system terminates.
	\end{theorem}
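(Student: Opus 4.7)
The plan is to treat the dynamics \eqref{OD_S} under \eqref{s_f}--\eqref{f_bar} as a multidimensional Hegselmann--Krause bounded-confidence model, and to prove that in finite time the weight matrix $\bm{W}(k)$ becomes constant, which is the sense in which the coevolution ``terminates''. First, I would observe that the truncation in \eqref{f_bar} makes $f$ a $\{0,1\}$-valued function, so \eqref{OD_S} collapses to the neighborhood averaging
\begin{align*}
    \bm{x}_i(k+1) = \frac{1}{|N_i(k)|}\sum_{j \in N_i(k)} \bm{x}_j(k),
\end{align*}
where $N_i(k) = \{j\in\mathcal{V} : \|\bm{x}_i(k)-\bm{x}_j(k)\|\le \sqrt{m}(1-\epsilon)\}$. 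The relation $j\in N_i(k)$ is symmetric, so the induced interaction graph $\mathcal{G}(k)$ is undirected, and $i\in N_i(k)$ guarantees $|N_i(k)|\ge 1$. By \cref{lemma_range_x}, opinions remain in the compact cube $[0,1]^m$ for every $k\in\mathbb{N}$.

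Second, I would introduce a Lyapunov-type functional such as
\begin{align*}
    V(k) = \sum_{i,j \in \mathcal{V}} w_{ij}(k)\,\|\bm{x}_j(k)-\bm{x}_i(k)\|^2,
\end{align*}
or the sum of squared pairwise distances restricted to currently adjacent pairs. Applying Jensen's inequality to each row-average together with the convexity of $\|\cdot\|^2$ yields $V(k+1)\le V(k)$, with strict decrease whenever some $N_i(k)$ contains opinions that are not all equal. Within each connected component of $\mathcal{G}(k)$, the opinions then contract toward the component's centroid.

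Third, I would exploit that the collection of possible adjacency patterns on $n$ nodes is finite (at most $2^{n(n-1)/2}$). Combined with the Lyapunov decrease this lets me argue that the graph can only change finitely many times: (i) once two agents are pulled within the confidence radius and remain there, the Lyapunov contraction forces them to stay connected; (ii) two clusters whose inter-cluster distance strictly exceeds $\sqrt{m}(1-\epsilon)$ cannot subsequently form an edge, because averaging within a cluster keeps the new opinion in the convex hull of that cluster, which lives in a ball disjoint from the other cluster's ball. Hence there exists $k^\star\in\mathbb{N}$ with $\bm{W}(k)=\bm{W}^\star$ for all $k\ge k^\star$, and beyond this step \eqref{OD_S_concatenated} becomes a time-invariant row-stochastic iteration whose limit is a piecewise-constant clustering of opinions.

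The main obstacle I anticipate is the third step, namely ruling out edges that flicker on and off near the threshold $\sqrt{m}(1-\epsilon)$. The Lyapunov decrease alone bounds aggregate motion, not individual distance crossings, so I would handle this by splitting edges into those that eventually appear and stay on versus those that eventually disappear and stay off, using the strict contraction inside each persistent connected component to conclude that ``near-threshold'' configurations cannot recur indefinitely once $V(k)$ is sufficiently small.
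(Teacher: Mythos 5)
Your reduction of \eqref{OD_S} with \eqref{s_f}--\eqref{f_bar} to a multidimensional Hegselmann--Krause model is exactly the paper's viewpoint, but where the paper then simply invokes the known (and nontrivial) termination result of Nedi\'c and Touri \cite{nedic_multidimensional_2012}, your self-contained argument has a genuine gap already at the Lyapunov step. The monotonicity $V(k+1)\le V(k)$ for $V(k)=\sum_{i,j\in\mathcal{V}} w_{ij}(k)\,\|\bm{x}_i(k)-\bm{x}_j(k)\|^2$ is asserted, not derived: the weights $w_{ij}(k+1)$ are evaluated on the \emph{new} graph, and Jensen's inequality applied row by row controls neither the cross terms between distinct neighborhoods nor the reweighting caused by edges appearing or disappearing. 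No simple quadratic functional of this type is known to be monotone for HK dynamics; the published proofs resort to absolute-probability (adjoint) sequences or carefully constructed energy functions precisely because this naive step fails.

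The more serious failure is in your third step: both structural claims are false for $m\ge 2$. Connected agents can disconnect (this is how clusters form), and, contrary to claim (ii), two disconnected components \emph{can} later form an edge, because pairwise distances above the threshold do not keep the convex hulls apart. Concretely, take $m=2$, $\epsilon = 1-1/\sqrt{2}$ so that the threshold is $\sqrt{m}(1-\epsilon)=1$, and initial opinions $\bm{x}_1=(0,1)$, $\bm{x}_2=(0,0)$, $\bm{x}_3=(0.9,0.5)$ in $[0,1]^2$. Then $\|\bm{x}_1-\bm{x}_2\|=1\le 1$ while $\|\bm{x}_i-\bm{x}_3\|=\sqrt{1.06}>1$ for $i=1,2$, so $\{1,2\}$ and $\{3\}$ are separate components; after one step of \eqref{OD_S} the first component collapses to $(0,0.5)$, which is at distance $0.9\le 1$ from $\bm{x}_3$, and the components merge. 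Hence your argument that the adjacency pattern can change only finitely often does not follow from the finiteness of the pattern set plus claims (i)--(ii), and the ``flickering edge'' issue you flag at the end is not a residual technicality but the central difficulty of the theorem. As it stands the proposal does not establish termination; the paper avoids the difficulty by citing the multidimensional HK termination/convergence results rather than reproving them.
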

	\begin{proof}
		See \cite{nedic_multidimensional_2012}.
	\end{proof}
	
	In fact, with the threshold $\epsilon$ shown in \eqref{f_bar}, the system effectively becomes a multidimensional Hegselmann-Krause (HK) model, where a polynomial upper bound on the convergence time is shown as in \cite{bhattacharyya_convergence_2013,etesami_game-theoretic_2015-1}.
	Furthermore, convergence is proven for such multidimensional HK models with a broad class of norms to measure distances \cite{proskurnikov_recurrent_2020}. More results can be found in \cite{bernardo_bounded_2024}, which highlight the extensibility of the proposed ODRS framework with the distance-based method.
	
	\subsubsection{Convergence of ODRS with angle-based method}
	
	Next, the convergence of ODRS with the angle-based method is investigated, for which we define the diameter of the opinion matrix $\bm{X}$, i.e., $d(\bm{X})$.
	Specifically, the diameter $d(\bm{X})$ is defined as the largest angle between two opinions in $\bm{X}$, which is written as
	\begin{equation}
		d(\bm{X}) = \max\nolimits_{i,j \in \mathcal{V}} \{ \arccos (\mathrm{sim}(\bm{x}_i, \bm{x}_j)) \}
	\end{equation}
	using the cosine similarity defined in \eqref{cs}.
	It is easy to see that $d(\bm{X})$ reflects the maximal difference of the opinions for users $\mathcal{V}$, whose evolution under ODRS with angle-based method \eqref{s_g} and \eqref{g_bar} is shown as follows.
	
	\begin{theorem} \label{theorem_angle_convergence}
		%\textcolor{red}{nur RS}
		Consider the co-evolutionary opinion dynamics with the recommendation system in \eqref{OD_S} using the angle-based method with \eqref{s_g}, and \eqref{g_bar}.
		The diameter $d(\bm{X})$ of the opinion set $\bm{X}$ satisfies
		\begin{align}
			d(\bm{X}(k + 1)) \le d(\bm{X}(k)), && \forall k \in \mathbb{N}
		\end{align}
		for any initial state $\bm{X}(0) \in [0,1]^{m \times n}$.
	\end{theorem}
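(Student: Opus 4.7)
The plan is to exploit two features of \eqref{OD_S}: each updated opinion is a non-negative combination of the previous opinions, and by \cref{lemma_range_x} every opinion stays in the non-negative orthant $[0,1]^m$. The latter confines every pairwise angle to $[0,\pi/2]$, which is exactly the regime in which ``taking a conic combination cannot increase the pairwise angle''.

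First I would verify that the update is well defined and genuinely conic: the weights $w_{ij}(k)$ in \eqref{eqn_w} are non-negative, and the self-similarity $g(\mathrm{sim}(\bm{x}_i,\bm{x}_i)) = 1$ ensures the denominator in \eqref{eqn_w} is strictly positive and, together with orthant invariance, rules out any degeneration to $\bm{x}_i(k) = \bm{0}$. Hence each $\bm{x}_i(k+1) = \sum_{j\in\mathcal{V}} w_{ij}(k)\bm{x}_j(k)$ lies in the conic hull of $\{\bm{x}_j(k)\}_{j\in\mathcal{V}}$.

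The core step is the auxiliary claim: if $\bm{e}_1,\ldots,\bm{e}_n \in [0,1]^m\setminus\{\bm{0}\}$ have all pairwise angles at most $\theta \in [0,\pi/2]$, then any two non-negative combinations $\bm{u} = \sum_j \alpha_j \bm{e}_j$ and $\bm{v} = \sum_j \beta_j \bm{e}_j$ satisfy $\arccos(\mathrm{sim}(\bm{u},\bm{v})) \le \theta$. This would follow from the two-sided estimate
\begin{equation*}
\bm{u}^T \bm{v} \;=\; \sum_{i,j} \alpha_i \beta_j\, \bm{e}_i^T \bm{e}_j \;\ge\; \cos\theta \sum_{i,j} \alpha_i \beta_j \|\bm{e}_i\|\|\bm{e}_j\|,
\end{equation*}
combined with $\|\bm{u}\|\|\bm{v}\| \le \bigl(\sum_i \alpha_i\|\bm{e}_i\|\bigr)\bigl(\sum_j \beta_j\|\bm{e}_j\|\bigr)$ from the triangle inequality; since $\cos\theta \ge 0$, dividing gives $\mathrm{sim}(\bm{u},\bm{v})\ge \cos\theta$. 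Instantiating the claim with $\theta = d(\bm{X}(k))$, $\bm{e}_j = \bm{x}_j(k)$, and $(\alpha_j,\beta_j)=(w_{ij}(k),w_{lj}(k))$ then bounds every new pairwise angle by $d(\bm{X}(k))$, and the maximum over $(i,l)$ yields the theorem.

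The main obstacle is the sign condition $\cos\theta \ge 0$: without it, the triangle-inequality bound on $\|\bm{u}\|\|\bm{v}\|$ would combine with a negative factor in the wrong direction and the monotonicity would fail. Consequently the essential non-trivial input is \cref{lemma_range_x}, which keeps the dynamics inside a cone of aperture at most $\pi/2$ where the diameter is monotone under conic combinations; the rest of the argument is a direct algebraic manipulation on the update rule.
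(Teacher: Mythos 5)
Your proposal is correct and follows essentially the same route as the paper's proof: the auxiliary ``conic combination'' claim is exactly the paper's chain of estimates, namely lower-bounding each cross term $\bm{x}_r^T(k)\bm{x}_s(k)$ by $\cos(d(\bm{X}(k)))\,\|\bm{x}_r(k)\|\|\bm{x}_s(k)\|$, upper-bounding the denominator via the triangle inequality, and using the non-negativity of the opinions from \cref{lemma_range_x} to justify the division, before applying the monotonicity of $\arccos(\cdot)$. Your explicit remark that the sign condition $\cos\theta \ge 0$ (i.e., orthant invariance) is what makes the denominator bound go in the right direction is the same observation the paper uses implicitly, just stated more carefully.
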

	\begin{proof}
		We consider the opinions $\bm{x}_i(k + 1), \bm{x}_j(k + 1)$ for any two users $i, j \in \mathcal{V}$ at step $k + 1$, which are written as
		\begin{align}
			&\bm{x}_i(k + 1) = \sum\nolimits_{r = 1}^n w_{ir}(k) \bm{x}_r(k), \\
			&\bm{x}_j(k + 1) = \sum\nolimits_{s = 1}^n w_{js}(k) \bm{x}_s(k)
		\end{align}
		with $k \in \mathbb{N}$ following ODRS framework in \eqref{OD_S}.
		Then, the cosine similarity between $\bm{x}_i(k + 1)$ and $\bm{x}_j(k + 1)$ defined in \eqref{cs} is then written as
		\begin{align}
			\mathrm{sim}&(\bm{x}_i(k + 1), \bm{x}_j(k + 1)) \nonumber \\
			&= \frac{(\sum_{r = 1}^n w_{ir}(k) \bm{x}_r(k))^T (\sum_{s = 1}^n w_{js}(k) \bm{x}_s(k))}{\| \sum_{r = 1}^n w_{ir}(k) \bm{x}_r(k) \| \| \sum_{s = 1}^n w_{js}(k) \bm{x}_s(k) \|} \\
			&= \frac{\sum_{r = 1}^n \sum_{s = 1}^n w_{ir}(k) w_{js}(k) \bm{x}_r^T(k) \bm{x}_s(k)}{\| \sum_{r = 1}^n w_{ir}(k) \bm{x}_r(k) \| \| \sum_{s = 1}^n w_{js}(k) \bm{x}_s(k) \|}. \nonumber
		\end{align}
		Considering the result in \cref{lemma_range_x} with the initial condition $\bm{X}(0) \in [0,1]^{m \times n}$, it is easy to see $\bm{x}_i(k) \in [0,1]^m$ for any $i \in \mathcal{V}$ and $k \in \mathbb{N}$, such that
		\begin{align}
			\bm{x}_r^T(k) \bm{x}_s(k) = \sum\nolimits_{p = 1}^m x_r^p x_s^p \ge \sum\nolimits_{p = 1}^m 0 = 0
		\end{align}
		for any $r, s \in \mathcal{V}$.
		Therefore, the value of $\mathrm{sim}(\bm{x}_i(k + 1), \bm{x}_j(k + 1))$ is lower bounded by
		\begin{align} \label{eqn_sim_next_2}
			\mathrm{sim}&(\bm{x}_i(k + 1), \bm{x}_j(k + 1)) \nonumber \\
			&\ge \frac{\sum_{r = 1}^n \sum_{s = 1}^n w_{ir}(k) w_{js}(k) \bm{x}_r^T(k) \bm{x}_s(k)}{(\sum_{r = 1}^n w_{ir}(k) \| \bm{x}_r(k) \|)  (\sum_{s = 1}^n w_{js}(k) \| \bm{x}_s(k) \|)} \\
			&= \frac{\sum_{r = 1}^n \sum_{s = 1}^n w_{ir}(k) w_{js}(k) \bm{x}_r^T(k) \bm{x}_s(k)}{\sum_{r = 1}^n \sum_{s = 1}^n w_{ir}(k) w_{js}(k) \| \bm{x}_r(k) \| \| \bm{x}_s(k) \|} \nonumber
		\end{align}
		using the triangle inequality.
		Moreover, define $\underline{i}(k), \underline{j}(k) \in \mathcal{V}$ as the user indices satisfying
		\begin{align}
			\mathrm{sim}(\bm{x}_{\underline{i}(k)}(k), \bm{x}_{\underline{j}(k)}(k)) \!\!=\!\! \min\nolimits_{i,j \!\in\! \mathcal{V}} \{ \mathrm{sim}(\bm{x}_i(k), \bm{x}_j(k)) \},
		\end{align}
		then it is obvious to see
		\begin{align} \label{eqn_lowerbound_xrxs}
			\bm{x}_r^T(k) \bm{x}_s(k) \!\!\ge\!\! \frac{\bm{x}_{\underline{i}(k)}^T(k) \bm{x}_{\underline{j}(k)}(k)}{\| \bm{x}_{\underline{i}(k)}(k) \| \| \bm{x}_{\underline{j}(k)}(k) \|} \| \bm{x}_r(k) \| \| \bm{x}_s(k) \|.
		\end{align}
		Apply \eqref{eqn_lowerbound_xrxs} into \eqref{eqn_sim_next_2}, the cosine similarity $\mathrm{sim}(\bm{x}_i(k + 1), \bm{x}_j(k + 1))$ is further bounded by
		\begin{align}
			\mathrm{sim}(\bm{x}_i&(k + 1), \bm{x}_j(k + 1)) \nonumber \\
			\ge& \frac{\sum_{r = 1}^n \sum_{s = 1}^n w_{ir}(k) w_{js}(k) \| \bm{x}_r(k) \| \| \bm{x}_s(k) \|}{\sum_{r = 1}^n \sum_{s = 1}^n w_{ir}(k) w_{js}(k) \| \bm{x}_r(k) \| \| \bm{x}_s(k) \|} \\
			&\times \frac{\bm{x}_{\underline{i}(k)}^T(k) \bm{x}_{\underline{j}(k)}(k)}{\| \bm{x}_{\underline{i}(k)}(k) \| \| \bm{x}_{\underline{j}(k)}(k) \|} \nonumber \\
			=& \min\nolimits_{i,j \in \mathcal{V}} \{ \mathrm{sim}(\bm{x}_i(k), \bm{x}_j(k)) \}. \nonumber
		\end{align}
		Furthermore, considering the definition of $d(\cdot)$ with $\arccos(\cdot)$ as a monotonically decreasing function, it has
		\begin{align}
			\arccos&(\mathrm{sim}(\bm{x}_i(k + 1), \bm{x}_j(k + 1))) \nonumber \\
			&\le \arccos(\min\nolimits_{i,j \in \mathcal{V}} \{ \mathrm{sim}(\bm{x}_i(k), \bm{x}_j(k)) \}) \\
			&= \max\nolimits_{i,j \in \mathcal{V}} \{ \arccos(\mathrm{sim}(\bm{x}_i(k), \bm{x}_j(k))) \} = d(\bm{X}(k)) \nonumber
		\end{align}
		for any $i, j \in \mathcal{V}$ and $k \in \mathbb{N}$, resulting in $d(\bm{X}(k + 1)) = \max_{i,j \in \mathcal{V}} \arccos(\mathrm{sim}(\bm{x}_i(k + 1), \bm{x}_j(k + 1))) \le d(\bm{X}(k))$.
		This concludes the proof.
	\end{proof}
	
	\cref{theorem_angle_convergence} shows the convergence of the diameter $d(\cdot)$ over time, indicating that the maximal difference of users' opinions will decrease.
	However, \cref{theorem_angle_convergence} implies no asymptotic convergence of the users' opinion on one specific value.
	Instead, the ultimate boundness of $d(\cdot)$ is ensured \cite{khalil2015nonlinear} by a well-define ultimate bound $\bar{d} \in \mathbb{R}_{0,+}$, and it allows the opinion divergence satisfying $d(\bm{X}(\infty)) \le \bar{d}$.
	Moreover, some clustering results can be obtained within the proposed ODRS framework, which is shown in the next subsection.
	
	\subsection{Clustering Analysis}
	\label{subsection_clustering}
	
	While the convergence is proved in the previous subsection inducing ultimate boundness of the opinion difference, in this subsection more detailed discussion is conducted for the behavior inner the ultimate bound.
	A typical steady-state behavior exhibited by bounded confidence opinion dynamics is the coincidence of the opinions in each disconnected subgroup of agents. This situation is called clustering \cite{bernardo_bounded_2024}. A cluster is a completely isolated component of a graph.
	
	\begin{lemma} \label{lemma_cluster}
		For the co-evolutionary ODRS in \eqref{OD_S}, using either the angle-based method with \eqref{s_g} and \eqref{g_bar} or the distance-based method with \eqref{s_f} and \eqref{f_bar}, the system achieves either consensus or clustering.
	\end{lemma}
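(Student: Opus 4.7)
The plan is to argue that, at the steady state guaranteed by the convergence results of the previous subsection, the interaction graph $\mathcal{G}(k)$ stabilizes, and on each connected component of the limiting graph the opinions must coincide (up to direction in the angle-based case, exactly in the distance-based case). The union of such components that are pairwise disconnected (because the pairwise similarity/distance lies on the inactive side of the threshold $\epsilon$ in \eqref{f_bar} or \eqref{g_bar}) is exactly the notion of clustering used in \cite{bernardo_bounded_2024}, and when there is only one component we obtain consensus.

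The first step is to use the convergence results: for the distance-based method, Theorem~1 (via \cite{nedic_multidimensional_2012,bhattacharyya_convergence_2013}) guarantees that the weights $w_{ij}(k)$ reach a constant matrix $\bm{W}^\star$ in finite time; for the angle-based method, \cref{theorem_angle_convergence} combined with \cref{lemma_range_x} confines $\bm{X}(k)$ to a compact set with monotonically nonincreasing diameter $d(\bm{X}(k))$, so $d(\bm{X}(k))$ admits a limit $\bar d$ and any limit point $\bm{X}^\star$ of the trajectory satisfies $d(\bm{X}^\star)\le\bar d$. The second step is to look at the induced limit graph $\mathcal{G}^\star$ whose edges are the pairs $(i,j)$ with $s(\bm{x}_i^\star,\bm{x}_j^\star)>0$, i.e., those satisfying $\mathrm{dis}(\bm{x}_i^\star,\bm{x}_j^\star)\le\sqrt m(1-\epsilon)$ or $\mathrm{sim}(\bm{x}_i^\star,\bm{x}_j^\star)\ge\epsilon$, respectively. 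By definition, vertices in different connected components of $\mathcal{G}^\star$ exchange no influence, which is exactly the isolation required by the definition of a cluster.

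The third and decisive step is to show that within each connected component of $\mathcal{G}^\star$ the opinions must agree. For the distance-based case this is immediate once $\bm{W}^\star$ is constant: \eqref{OD_S_concatenated} reduces to a time-invariant row-stochastic consensus iteration on each component, so the classical DeGroot theorem forces a common limit opinion on that component; moreover, if any two users from the same component had distinct limits, the averaging step in \eqref{OD_S} would contract their distance strictly in at least one coordinate, contradicting the assumption of being at equilibrium. For the angle-based case, I would exploit \eqref{eqn_sim_next_2} in the proof of \cref{theorem_angle_convergence}: the inequality $\mathrm{sim}(\bm{x}_i(k+1),\bm{x}_j(k+1))\ge\min_{r,s}\mathrm{sim}(\bm{x}_r(k),\bm{x}_s(k))$ was derived via triangle inequalities that are strict unless all vectors with positive weight are colinear, so at an equilibrium with positive pairwise weight the involved opinions must lie on a common ray, which is the consensus notion appropriate to the cosine metric.

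The main obstacle I anticipate is the angle-based case: unlike the distance-based setting, we do not have a finite-time stabilization of $\bm{W}(k)$, and \cref{theorem_angle_convergence} only bounds the diameter without asserting a limit of $\bm{X}(k)$. I would handle this by passing to a convergent subsequence (possible by compactness from \cref{lemma_range_x}), arguing that at any such limit point users that remain mutually above the threshold $\epsilon$ must be colinear by the strictness analysis above, and then showing that the induced partition is invariant under one further step of \eqref{OD_S}, so the limit is a genuine clustering configuration. Consensus is recovered as the special case where $\mathcal{G}^\star$ has a single connected component.
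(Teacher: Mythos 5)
Your proposal is necessarily a different route from the paper's, because the paper does not argue this lemma at all: its proof is a pointer to the literature (Lemma~1 and Theorem~13 of \cite{proskurnikov_tutorial_2018}), which supply exactly the convergence-and-clustering results for bounded-confidence/averaging dynamics that you are trying to rebuild by hand. Your distance-based half is essentially fine and consistent with what the paper relies on: Theorem~1 gives finite termination, and at a terminal (fixed) configuration connected agents must coincide. One remark there: your justification ``the averaging step would contract their distance strictly in at least one coordinate'' is not quite right as stated (two connected agents can fail to contract when they also average with far-away neighbors); the clean argument is the extreme-point one — within a component, an agent maximizing $\bm{v}^T\bm{x}_i$ for any direction $\bm{v}$ can only be a fixed point if all its neighbors attain the same maximum, and connectivity then forces $\bm{v}^T\bm{x}_i$ constant on the component for every $\bm{v}$, hence coincidence. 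The appeal to DeGroot is superfluous once finite termination is in hand.

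The genuine gap is in the angle-based half. First, \cref{theorem_angle_convergence} together with \cref{lemma_range_x} only yields monotonicity of $d(\bm{X}(k))$ and compactness, hence subsequential limit points; it does not yield convergence of $\bm{X}(k)$, and a limit point of a trajectory is not an equilibrium of the update map — all the more so here, where the map is discontinuous because of the threshold in \eqref{g_bar} — so your ``at an equilibrium with positive pairwise weight the opinions are colinear'' step does not transfer to limit points without substantial extra work. Second, the equality analysis of \eqref{eqn_sim_next_2} constrains only the pairs attaining the extremal angle (the minimum similarity), not every pair above the threshold, so ``all users that remain mutually above $\epsilon$ are colinear in the limit'' is a leap. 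Third, even granting colinearity at a limit point, clustering in the sense used by the lemma (coincidence of opinions within each isolated component, following \cite{bernardo_bounded_2024}) requires equality of the limiting opinions, not directional alignment; your ``one further averaging step'' fix only works if exact colinearity is reached in finite time, which the angle-based dynamics does not generally achieve. What is missing is precisely a convergence theorem for averaging dynamics whose nonzero weights are uniformly bounded below (here by $\epsilon/n$) and type-symmetric/cut-balanced, which guarantees that every $\bm{x}_i(k)$ converges and that agents interacting infinitely often share the same limit — this is the content the paper imports wholesale by citing \cite{proskurnikov_tutorial_2018}, and without it (or an equivalent argument) your angle-based case remains unproved.
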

	\begin{proof}
		See \cite{proskurnikov_tutorial_2018} Lemma 1, and Theorem 13.
	\end{proof}
	
	\cref{lemma_cluster} shows that the individual opinions following the ODRS framework achieve clustering, i.e., separating the opinions into several groups.
	Intuitively, clustering is formed due to the potential disconnection between users, which is determined by the parameter $\epsilon$ in \eqref{f_bar} and \eqref{g_bar} for distance- and angle-based methods.
	Therefore, it is intended to analyze the influence of $\epsilon$ on the clustering behavior reflected by the number of clustered groups, which is shown as follows.
	
	\subsubsection{Cluster Numbers with distance-based method}
	
	For the distance-based method with \eqref{s_f} and \eqref{f_bar}, the convergence case is discussed.
	Define two different clustered user groups without connection as $\mathcal{X}_i$ and $\mathcal{X}_j$ with $i, j \in \mathbb{N}$, whose converged opinion denote $\bar{\bm{x}}_i$ and $\bar{\bm{x}}_j$ respectively.
	Note that it holds $\bar{\bm{x}}_i \in [0,1]^m$ for $\forall i \in \mathbb{N}$ due to \cref{lemma_range_x}.
	From the formation in \eqref{f_bar}, it is easy to see the disconnection between $\mathcal{X}_i$ and $\mathcal{X}_j$ indicates $\mathrm{dis}(\bar{\bm{x}}_i, \bar{\bm{x}}_j) > \sqrt{m}(1 - \epsilon)$ for any $i, j \in \mathbb{N}$.
	Then, the number of clustered opinion groups equals to the cardinality of $\{ \bar{\bm{x}}_i \}_{i \in \mathbb{N}}$, such that $\mathrm{dis}(\bar{\bm{x}}_i, \bar{\bm{x}}_j) > \sqrt{m}(1 - \epsilon)$ for any $i, j \in \mathbb{N}$.
	While the exact number of clustered opinion groups is hard to obtain, its upper bound is relevant to the covering number, which is shown as follows.
	
	\begin{theorem} \label{theorem_cluster_distance}
		Consider the convergence state for the co-evolutionary opinion dynamics with recommendation system and $n$ users in \eqref{OD_S} using the distance-based method with \eqref{s_f}, and \eqref{f_bar} with predefined $\epsilon \in [0,1]$.
		The number of the clustered opinion groups is upper bounded by $\min\{ N_f(1 - \epsilon), n \}$ with $N_f(\tau) = \lfloor 1 / \tau + 1 \rfloor$.
	\end{theorem}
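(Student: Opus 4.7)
The plan is to convert the cluster-counting question into a packing problem inside $[0,1]^m$ and then invoke a one-dimensional estimate, treating the two sides of the minimum separately.

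First, I would describe the convergence state using the machinery already available. By \cref{lemma_cluster} the users partition at steady state into $K$ mutually disconnected clusters, whose common converged opinions $\bar{\bm x}_1,\dots,\bar{\bm x}_K$ lie in $[0,1]^m$ because of \cref{lemma_range_x}. Two distinct cluster centers being disconnected means that $f(\mathrm{dis}(\bar{\bm x}_i,\bar{\bm x}_j))=0$ in \eqref{f_bar}, which is equivalent to
\begin{equation*}
\|\bar{\bm x}_i-\bar{\bm x}_j\| > \sqrt{m}\,(1-\epsilon)\qquad\forall\, i\neq j.
\end{equation*}
So the set $\{\bar{\bm x}_1,\dots,\bar{\bm x}_K\}$ is a $\sqrt{m}(1-\epsilon)$-separated subset of the hypercube $[0,1]^m$, and $K$ is exactly its cardinality. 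The bound $K\leq n$ is then immediate, since each of the $K$ clusters contains at least one of the $n$ users.

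To obtain the other side of the minimum, $K\leq N_f(1-\epsilon)$, I would use the elementary one-dimensional packing fact that an interval of length $L$ admits at most $\lfloor L/d+1\rfloor$ points with pairwise distance strictly greater than $d$: sorting the points and summing the $K-1$ consecutive gaps yields $(K-1)d < L$. The diameter of $[0,1]^m$ is $\sqrt{m}$, so after projecting the cluster centers onto the direction realising the maximum spread of $\{\bar{\bm x}_i\}$, the projected values live in an interval of length at most $\sqrt{m}$ while (ideally) retaining pairwise strict separation at the level $\sqrt{m}(1-\epsilon)$. Substituting $L=\sqrt{m}$ and $d=\sqrt{m}(1-\epsilon)$ into the 1D estimate gives
\begin{equation*}
K\leq \bigl\lfloor \tfrac{1}{1-\epsilon}+1 \bigr\rfloor = N_f(1-\epsilon),
\end{equation*}
which, combined with $K\leq n$, yields the claim.

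The main obstacle is justifying the projection step cleanly: Euclidean separation is not generally preserved under an arbitrary linear projection, so one has to choose the projection direction with care (for instance, the principal axis of the finite set $\{\bar{\bm x}_i\}$, or a direction aligned with the diameter of this set). An alternative route that avoids the projection question altogether is a volumetric sphere-packing estimate, comparing the total volume of the disjoint balls of radius $\tfrac{1}{2}\sqrt{m}(1-\epsilon)$ centred at the $\bar{\bm x}_i$ with the volume of an $\tfrac{1}{2}\sqrt{m}(1-\epsilon)$-expansion of the cube; this is the reading of the result as a consequence of the sphere-packing problem that is emphasised in the introduction. Either route supplies the bound $N_f(1-\epsilon)$, and the combination with $K\leq n$ completes the proof.
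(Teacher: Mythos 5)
Your setup matches the paper's: by \cref{lemma_cluster} and \cref{lemma_range_x} the converged cluster opinions $\bar{\bm x}_1,\dots,\bar{\bm x}_K\in[0,1]^m$ are pairwise more than $\sqrt{m}(1-\epsilon)$ apart because of \eqref{f_bar}, and $K\le n$ is immediate. The genuine gap is precisely the step you flag and then leave unresolved: projecting the centers onto one direction does \emph{not} retain the separation level $\sqrt{m}(1-\epsilon)$, and no choice of direction (principal axis, diameter direction, or otherwise) repairs this, since two points at distance $1$ in $\mathbb{R}^m$ can project onto the same value. Concretely, for $m=2$ and $\epsilon=1/2$ the four corners of $[0,1]^2$ are pairwise at distance at least $1>\sqrt{2}(1-\epsilon)$, so they form four mutually disconnected, stationary clusters, whereas your target bound would be $N_f(1/2)=3$. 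This shows the one-dimensional packing estimate cannot be transported to the cube by projection: the cardinality of a $\sqrt{m}(1-\epsilon)$-separated subset of $[0,1]^m$ genuinely scales like the $m$-th power of $\lfloor 1/(1-\epsilon)+1\rfloor$, not like $\lfloor 1/(1-\epsilon)+1\rfloor$ itself. Your fallback volumetric sphere-packing estimate has the same feature --- it compares $m$-dimensional volumes and therefore also yields a count exponential in $m$ --- so neither of your two routes produces the dimension-free constant $N_f(1-\epsilon)$.

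For comparison, the paper's own proof is a grid (covering-number) argument rather than a projection: it partitions $[0,1]^m$ into $\bar i^{\,m}$ half-open boxes of side $1/\bar i$ with $\bar i=\lfloor 1/(1-\epsilon)+1\rfloor$, observes that two converged opinions in the same box are within $\sqrt{m}(1-\epsilon)$ of each other and hence connected, concludes that each box contains at most one cluster center, and finally intersects with $K\le n$. Note, however, that what this argument actually bounds is the number of boxes, i.e.\ $N_f(1-\epsilon)^m$; the quantity $N_f(1-\epsilon)$ is only the per-dimension number of divisions. So the obstruction you identified is real and is not overcome by the paper's route either: to close your proof you should replace the projection step by the grid partition and state the resulting bound as $\min\{N_f(1-\epsilon)^m,\,n\}$, or otherwise supply an argument (which neither you nor the paper provides) for the exponent-free bound.
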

	\begin{proof}
		The proof employs the concept of covering numbers.
		Specifically, choose the grid factor as $\delta \in (0,1) \subset \mathbb{R}$, and define the maximal number of division for $[0,1]$ along each dimension as $\bar{i} = \lfloor 1 / \tau + 1 \rfloor$.
		Moreover, define the $m$-dimensional set as $\mathbb{X}(\bm{i}) = [(i_1 - 1) / \bar{i}, i_1 / \bar{i}) \times \cdots \times [(i_m - 1) / \bar{i}, i_m / \bar{i})$, where $\bm{i} = [i_1, \cdots, i_m]^T \in \{ 1, \cdots, \bar{i} \}^m \subset \mathbb{N}^m$.
		Note that the sets $\mathbb{X}(\bm{i})$ satisfy
		\begin{align}
			\mathbb{X}(\bm{i}) \cap \mathbb{X}(\bm{i}') = \emptyset, \qquad \forall \bm{i} \ne \bm{i}' \in \{ 1, \cdots, \bar{i} \}^m.
		\end{align}
		For any converged opinion $\bar{\bm{x}}_r = [\bar{x}_r^1, \cdots, \bar{x}_r^m]^T$ and $\bar{\bm{x}}_s = [\bar{x}_s^1, \cdots, \bar{x}_s^m]^T$ with $r \ne s \in \mathbb{N}$ and assuming $\bar{\bm{x}}_r, \bar{\bm{x}}_s \in \mathbb{X}(\bm{i})$ for a specific $\bm{i} \in \{ 1, \cdots, \bar{i} \}^m$, the Euclidean distance between $\bar{\bm{x}}_r$ and $\bar{\bm{x}}_s$ is bounded by
		\begin{align}
			\| \bar{\bm{x}}_r - \bar{\bm{x}}_s \|^2 = \sum_{j = 1}^m (\bar{x}_r^j - \bar{x}_s^j)^2 \le \sum_{j = 1}^m \tau^2 = m \tau^2,
		\end{align}
		resulting in $\| \bar{\bm{x}}_r - \bar{\bm{x}}_s \| \le \sqrt{m} \tau$.
		Let $\tau = 1 - \epsilon$, and it is obvious that $\bar{\bm{x}}_r$ and $\bar{\bm{x}}_s$ belong to different set $\mathbb{X}(\cdot)$, i.e., $\bar{\bm{x}}_r \in \mathbb{X}(\bm{i}_r)$ and $\bar{\bm{x}}_s \in \mathbb{X}(\bm{i}_s)$ with $\bm{i}_r \ne \bm{i}_s \in \{ 1, \cdots, \bar{i} \}^m$.
		Moreover, assume $\bar{i}^m$ converged opinions exist in $[0,1]^m$ and consider
		\begin{align}
			[0,1]^m \subseteq \bigcup_{i_1 = 1}^{\bar{i}} \cdots \bigcup_{i_m = 1}^{\bar{i}} \mathbb{X}([i_1, \cdots, i_m]^T),
		\end{align}
		then any additional converged opinion $\bar{\bm{x}}_{i^*}$ must share one of the set $\mathbb{X}(\cdot)$ with one other converged opinion, denoted as $\bar{\bm{x}}_{j^*}$ with $i^*, j^* \in \mathbb{N}$.
		This induces the connection between $\bar{\bm{x}}_{i^*}$ and $\bar{\bm{x}}_{j^*}$, which means $\bar{\bm{x}}_{i^*}$ is not the converged opinion of a new cluster.
		This also indicated $\bar{i}^m$ is the maximal number of converged opinions corresponding to the threshold $\epsilon$, which concludes the proof by additionally considering the clustering number is smaller than the user number.
	\end{proof}
	
	\cref{theorem_cluster_distance} shows the upper bound of the number of potential clustered opinion groups, which is monotonically increasing with respect to $\epsilon$.
	This observation is intuitive since a larger $\epsilon$ induces a high probability of losing the connection between two users by considering the definition of $f(\cdot)$ in \eqref{f_bar}.
	Note that the upper bound is relevant to the covering number $N_f(\cdot)$ with the grid factor $1 - \epsilon$, which can also be calculated by using different ways for different conservatism \cite{lederer2019uniform}.

	\subsubsection{Cluster Numbers with angle-based method}
	
	Similarly, as in the discussion for the distance-based method, the upper bound of the number of clustered opinion groups is investigated for the angle-based method.
	Specifically, considering \eqref{s_g} and \eqref{g_bar}, it is easy to see that two clustered groups $\mathcal{X}_i$ and $\mathcal{X}_j$ with converged opinions $\bar{\bm{x}}_i$ and $\bar{\bm{x}}_j$ satisfy $\mathrm{sim}(\bar{\bm{x}}_i, \bar{\bm{x}}_j) < \epsilon$ for any $i, j \in \mathbb{N}$.
	This also means the angle between $\bar{\bm{x}}_i$ and $\bar{\bm{x}}_j$ is greater than $\arccos(\epsilon)$.
	Therefore, obtaining the upper bound for opinion clusters is identical to finding the maximum number of vectors in a sphere, where the angle between any two vectors is lower bounded by $\arccos(\epsilon)$.
	Such an upper bound is related to the solution for the spherical codes problem, which is generally described as ``Place $N_s$ points on a sphere in $m$ dimensions to maximize the
	minimum angle $\theta$ (or equivalently, the minimum distance) between them.'' \cite{delsarte_spherical_1991}. A solution for the spherical code problem can be written as $(m, N_s, \theta(m, N_s))$.
	The formal expression of the relation to the spherical code problem is shown as follows.
	\begin{theorem} \label{theorem_cluster_angle}
		Consider the convergence state for the co-evolutionary opinion dynamics with recommendation system in \eqref{OD_S} using the angle-based method with \eqref{s_g}, and \eqref{g_bar} with predefined $\epsilon \in [0,1]$.
		In particular, the system includes $n$ users with $m$-dimensional opinions.
		If there exist solutions for spherical code problem, $(m, N_s, \theta)$, where $\theta > \arccos(\epsilon)$. The number of the clustered opinion groups is upper bounded by $\min\{ N_s, n \}$.
	\end{theorem}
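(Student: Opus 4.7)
The plan is to mirror the structure of the proof of \cref{theorem_cluster_distance} but to substitute a sphere-packing argument for the grid-based covering argument. First I would appeal to \cref{lemma_cluster} to reduce the claim to a finite geometric configuration: at convergence the population decomposes into disconnected clusters, and within each cluster the opinions coalesce to a common representative $\bar{\bm{x}}_i$, which by \cref{lemma_range_x} lies in $[0,1]^m$. For any two distinct clusters $\mathcal{X}_i$ and $\mathcal{X}_j$ there is no edge in the recommendation graph between their members, so $g(\mathrm{sim}(\bar{\bm{x}}_i,\bar{\bm{x}}_j)) = 0$; the truncation rule in \eqref{g_bar} then forces $\mathrm{sim}(\bar{\bm{x}}_i,\bar{\bm{x}}_j) < \epsilon$, and since $\arccos$ is monotone decreasing the angular separation satisfies $\arccos(\mathrm{sim}(\bar{\bm{x}}_i,\bar{\bm{x}}_j)) > \arccos(\epsilon)$.

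Next I would transport the representatives onto the unit sphere via the normalisation $\hat{\bm{x}}_i = \bar{\bm{x}}_i / \|\bar{\bm{x}}_i\|$ (the degenerate case $\bar{\bm{x}}_i = \bm{0}$, which would make $\mathrm{sim}(\cdot,\cdot)$ undefined, is ruled out by a standard positivity argument on the forward-invariant orthant). Because $\mathrm{sim}(\cdot,\cdot)$ is invariant under positive scaling of its arguments, the pairwise angles between the $\hat{\bm{x}}_i$ inherit the lower bound $> \arccos(\epsilon)$, and $\{\hat{\bm{x}}_i\}$ becomes a collection of unit vectors in the non-negative orthant of $S^{m-1}$ whose pairwise angular separation strictly exceeds $\arccos(\epsilon)$. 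The question of how many cluster representatives can coexist is thus identical to asking how many points can be packed on $S^{m-1}$ under that angular constraint, which is exactly the spherical-code problem. If $(m, N_s, \theta)$ is a spherical-code solution with $\theta > \arccos(\epsilon)$, denoting the maximum number of unit vectors achievable at angular separation $\theta$, then the configuration $\{\hat{\bm{x}}_i\}$ can contain at most $N_s$ elements, for otherwise it would yield a denser packing than the optimal code allows. Combining with the trivial cap by the user count $n$ then delivers the stated bound $\min\{N_s, n\}$.

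The main obstacle I anticipate is reconciling the strict inequality $> \arccos(\epsilon)$ coming from \eqref{g_bar} with the non-strict formulation $\geq \theta$ typically used in the spherical-code literature; the hypothesis $\theta > \arccos(\epsilon)$ is the hinge that permits this reconciliation, since one can always select $\theta$ just above $\arccos(\epsilon)$ and observe that cluster representatives, by the clustering criterion, are separated by at least such a $\theta$. A secondary point to handle cleanly is the justification that opinions within each cluster actually coincide on a single $\bar{\bm{x}}_i$, but this is immediate from the row-stochasticity of $\bm{W}(k)$ and the standard clustering theory invoked in \cref{lemma_cluster}.
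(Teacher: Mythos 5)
Your reduction to the converged cluster representatives---no inter-cluster edge, so \eqref{g_bar} forces $\mathrm{sim}(\bar{\bm{x}}_i,\bar{\bm{x}}_j)<\epsilon$, hence pairwise angles exceeding $\arccos(\epsilon)$, followed by normalisation onto the unit sphere---matches the setup the paper uses (it appears in the text preceding \cref{theorem_cluster_angle} rather than inside the proof), and your appeal to \cref{lemma_cluster} for within-cluster coincidence is fine. The gap is in the counting step. The paper's proof is a contradiction argument run with the inequality $\arccos(\epsilon)>\theta$: if at least $N_s$ cluster representatives existed, they would be $N_s$ points whose minimum pairwise angle exceeds $\arccos(\epsilon)>\theta(m,N_s)$, contradicting the fact that $\theta(m,N_s)$ is the \emph{maximum} achievable minimum angle for $N_s$ points. (The statement's ``$\theta>\arccos(\epsilon)$'' and the proof's ``$\arccos(\epsilon)>\theta$'' are inconsistent in the paper itself, and it is the proof's direction that actually yields the bound.) You instead took the statement's direction $\theta>\arccos(\epsilon)$ together with a dual reading of $N_s$ (maximum number of points at separation $\ge\theta$), and there the argument genuinely fails: representatives separated by more than $\arccos(\epsilon)$ need not be separated by at least $\theta$, so having more than $N_s$ of them contradicts nothing. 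Concretely, take $m=3$, $\arccos(\epsilon)=50^\circ$ and the code $(3,3,120^\circ)$, so $\theta=120^\circ>\arccos(\epsilon)$ and $N_s=3$; the four unit vectors $\bm{e}_1,\bm{e}_2,\bm{e}_3,(1,1,1)^T/\sqrt{3}$ lie in $[0,1]^3$, are pairwise more than $50^\circ$ apart, and (taken as initial opinions) are already four disconnected singleton clusters of the ODRS dynamics, violating the claimed bound $\min\{N_s,n\}=3$.

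Your proposed ``hinge''---selecting $\theta$ just above $\arccos(\epsilon)$---does not repair this. The value $\theta$ is fixed by the hypothesis, not chosen after the configuration is known, and for a fixed $\theta>\arccos(\epsilon)$ the clustering criterion only gives separations strictly greater than $\arccos(\epsilon)$, which may lie strictly between $\arccos(\epsilon)$ and $\theta$. If instead you pick $\theta$ adaptively below the realised minimum separation, then $N_s$ becomes a configuration-dependent quantity and is no longer the code parameter appearing in the statement. The fix is to argue in the paper's direction: assume $\theta=\theta(m,N_s)\le\arccos(\epsilon)$ (equivalently, read $N_s$ as the largest $N$ with $\theta(m,N)>\arccos(\epsilon)$); then, if $K$ clusters exist, their representatives witness $\theta(m,K)>\arccos(\epsilon)$, and monotonicity of $\theta(m,\cdot)$ in the number of points gives $K\le N_s$, with the trivial cap $n$ added at the end.
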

	\begin{proof}
		According to the definition of the sphere packing problem, $\theta$ is the maximum value of the minimum angle when placing $N_s$ points on the sphere. The following proof is by contradiction: if $\arccos(\epsilon) > \theta$ and it is possible to place $N_s$ points on the sphere with distances greater than $\arccos(\epsilon)$, then the maximum value of the minimum angle when placing $N_s$ points on the sphere should be $\arccos(\epsilon)$, which contradicts the original assumption. 
		
		Therefore, given the least pair-wise angle $\arccos(\epsilon) > \theta$, one can only place less than $N_s$ points on the sphere, which means, 
		the number of the clustered opinion groups is upper bounded by $N_s$. Considering also the number of users (points) in the graph as another upper bound, the proof is done.
	\end{proof}
	
	\cref{theorem_cluster_angle} shows the relationship between the clustering problem and the spherical codes problem. 
	Note that the solution to the sphere packing problem is difficult to express in a closed analytical form. However, numerous studies have proposed approximations. For instance, in~\cite{fejes1953lagerungen}, the \textit{Tóth bound} for the minimum distance \( d \) between any two points in an optimal configuration of \( N \) points on the unit sphere in three-dimensional space was derived as:
	\begin{equation}
		d_{\min} \leq \sqrt{4 - \csc^2 \left( \frac{N \pi}{6(N-2)} \right)}.
	\end{equation}
	This bound provides an analytical approximation of the minimum distance between points.
	
	Furthermore, more extensive results have been presented in~\cite{hamkins1996design, _spherical_}. Reference~\cite{_spherical_}, in particular, provides a comprehensive table summarizing all currently known and mathematically proven results. This table serves as the basis for the experimental setup discussed later in this paper.
	
	Until here, the performance of the autonomous ODRS system is studied.
	Next, the effect of the propagator is studied in the following section.
	
	\section{Opinion Manipulation in ODRS Framework}
	\label{section_control}
	
	In this section, the propagators are introduced to shape the user opinions within the proposed ODRS framework, which is formulated in \cref{subsection_propagator_formulation}.
	Then, the reinforcement learning algorithm is introduced in \cref{subsection_RL} to obtain the optimal influence strategy.
	
	\subsection{Formulation of Opinion Manipulation Problem}
	\label{subsection_propagator_formulation}
	
	In this subsection, $n_e \in \mathbb{N}$ external propagators are considered to guide the opinion evolution.
	The opinions for the propagators are denoted as $\bm{u}_i \in [0,1]^m$ for $i \in \mathcal{V}_e = \{ 1, \cdots, n_e \} \subset \mathbb{N}$.
	Considering the recommendation system treats the propagator opinions $\bm{u}_i$ identical as the user opinions $\bm{x}_j$ with $i \in \mathcal{V}_e$ and $j \in \mathcal{V}$, the affected opinion dynamics for the users is written as
	\begin{align}
		\bm{x}_i(k+1) = \sum_{j=1}^n w_{ij}^x(k) \bm{x}_j(k) + \sum_{j=1}^{n_e} w_{ij}^u(k) \bm{u}_j(k),
	\end{align}
	where the weights $w_{i,j}^x(k)$ and $w_{i,j}^u(k)$ are calculated as
	\begin{align}
		&w_{i,j}^x = \frac{s(\bm{x}_i, \bm{x}_j)}{\sum_{p = 1}^n s(\bm{x}_i, \bm{x}_p) + \sum_{p = 1}^{n_e} s(\bm{x}_i, \bm{u}_p)} , \\
		&w_{i,j}^u = \frac{s(\bm{x}_i, \bm{u}_j)}{\sum_{p = 1}^n s(\bm{x}_i, \bm{x}_p) + \sum_{p = 1}^{n_e} s(\bm{x}_i, \bm{u}_p)} 
	\end{align}
	following the normalization concept similar to that in \cref{eqn_w}.
	Concatenate the opinions of users and propagators as $\bm{X}$ and $\bm{U} = [\bm{u}_1^T, \cdots, \bm{u}_{n_e}^T] \in [0,1]^{n_e \times m}$ respectively, such that the concatenated opinion dynamics is written as
	\begin{align} \label{discrete model}
		\bm{X}(k+1) &= \bm{W}^x(\bm{X}(k)) \bm{X}(k) + \bm{W}^u(\bm{X}(k), \bm{U}(k)) \bm{U}(k)
	\end{align}
	with $\bm{W}^u(\bm{X}(k), \bm{U}(k)) = [w_{i,j}^u(k)]_{i \in \mathcal{V}, j \in \mathcal{V}_e} \in \mathbb{R}^{n \times n_e}$ and $\bm{W}^x(\bm{X}(k)) = [w_{i,j}^x(k)]_{i,j \in \mathcal{V}} \in \mathbb{R}^{n \times n}$.
	
	The propagators intend to guide the user opinions to a prescribed value denoted as $\bm{x}_c \in [0,1]^m$ with less effort, which is formulated as a receding horizon optimization problem with predictive horizon $N \in \mathbb{N}_+$ as
	\begin{align} \label{eqn_RL_optimization}
		&\min_{\bm{U}^N} J_N(\bm{U}^N) = J_N^x(\bm{X}^N) + J_N^u(\bm{U}^N)  \\
		\text{s.t.}~ &  \bm{X}(k+1) = \bm{W}^x(\bm{X}(k)) \bm{X}(k) + \bm{W}^u(\bm{X}(k), \bm{U}(k)) \bm{U}(k), \nonumber \\
		&\forall k = 0, \cdots,N-1 \nonumber
	\end{align}
	where $\bm{U}^N = \{ \bm{U}(0), \cdots, \bm{U}(N-1)\}$ and $\bm{X}^N = \{ \bm{X}(1), \cdots, \bm{X}(N)\}$.
	The opinion guidance cost $J_N^x(\bm{X}^N)$ and influences cost $J_N^u(\bm{U}^N)$ are written as
	\begin{align}
		&J_N^x(\bm{X}^N) = \sum\nolimits_{i=1}^n \sum\nolimits_{k=1}^N  (\bm{x}_i(k) -\bm{x}^c)^T \bm{P}_x (\bm{x}_i(k) - \bm{x}^c), \nonumber \\
		&J_N^u(\bm{U}^N) = \sum\nolimits_{i=1}^{n_e} \sum\nolimits_{k=0}^{N-1}  \bm{u}_i^T(k) \bm{P}_u \bm{u}_i(k) 
	\end{align}
	with given coefficient matrices $\bm{P}_x, \bm{P}_u \in \mathbb{R}^{m \times m}$.
	While the cost function $J_N(\bm{U}^N)$ is in the quadratic form, the non-linearity and discontinuity in $\bm{W}^x(\bm{X}(k))$ and $\bm{W}^u(\bm{X}(k), \bm{U}(k))$ induces large challenge to obtain the analytical solution of \eqref{eqn_RL_optimization}.
    There exist some heuristic methods to solve such a complex optimal control problem \cite{gao2021quasi}, but their solution is fixed for a specific setting.
	Moreover, an adaptive influence strategy is required for different initial opinions $\bm{X}(0)$ and desired opinions $\bm{x}_c$, such that the reinforcement learning becomes a suitable solving tool for \eqref{eqn_RL_optimization}.
	
	\subsection{Reinforcement Learning for Optimal Opinion Manipulation}
	\label{subsection_RL}
	
	Considering the strong non-linearity induced by the recommendation system, solving the optimization problem in \eqref{eqn_RL_optimization} with conventional methods, e.g., gradient descent strategy, may result in a sub-optimal solution with merely a local minimum.
	To address this issue, deep reinforcement learning is employed, taking both exploration and exploitation into consideration and aiming to find the global optimal solution.
	Specifically, the overall structure of the reinforcement learning framework is shown in \cref{fig_RL_structure}.
	The environment composed of opinion dynamics and recommendation system takes $\bm{U}(k)$ as input, and returns the user opinion in the next step, i.e., $\bm{X}(k+1)$, and the reward function $r(k)$ is defined as
	\begin{align}
		&r(k) = - \mathrm{tr}(\bm{E}(k) \bm{P}_x \bm{E}^T(k)) + \mathrm{tr}(\bm{U}(k - 1) \bm{P}_u \bm{U}^T(k - 1) ), \nonumber \\
		&\bm{E}(k) = \bm{X}(k) - (\bm{1}_n \otimes \bm{x}^c)^T
	\end{align}
	for given $\bm{x}^c$.
	It is obvious to see $J_N(\bm{U}^N)$ in \eqref{eqn_RL_optimization} is the negative cumulative reward of $r(k)$, i.e., $J_N(\bm{U}^N) = - \sum_{k = 1}^N r(k)$.
	The agent for control input generation uses the proximal policy optimization (PPO) structure, which follows the actor-critic framework.
	After the reshape layer converting the matrix input $\bm{X}(k)$ to a vector, the optimal accumulated reward $V(k) = \sum_{k=1}^N r(k)$ from state $\bm{X}(k)$ is estimated, which serves guidance for training the actor-network.
	The actor-network generates the control input $\bm{U}(k)$ based on the information of current state $\bm{X}(k)$ and estimated accumulated reward $V(k)$.
	The detailed design of neural networks in actor and critic networks is shown in \cref{section_experiment}.
	
	\begin{figure}[t]
		\centering
		\includegraphics[width = 0.48\textwidth]{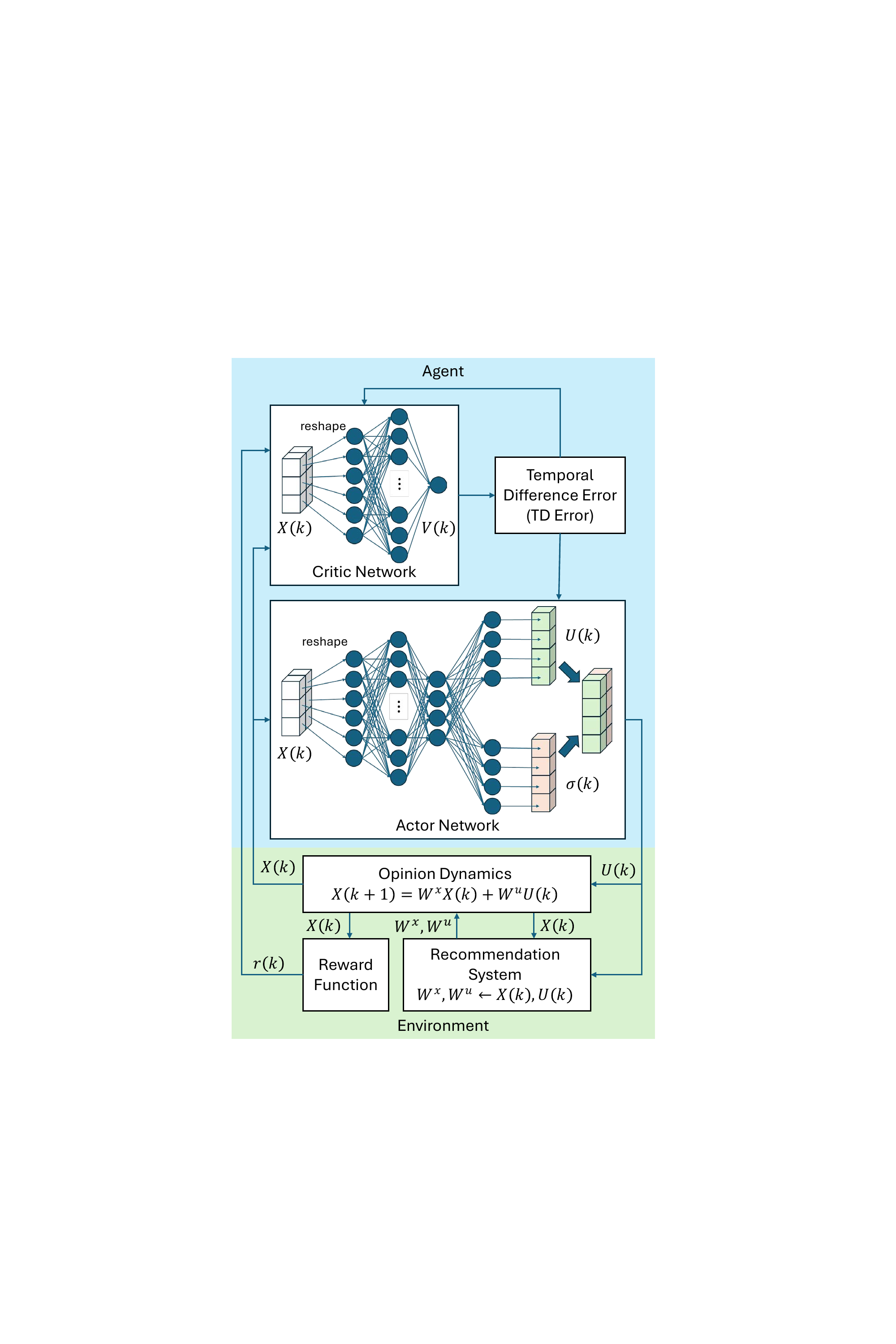}
		\caption{Structure of reinforcement learning with PPO agent for opinion dynamics with recommendation system}
		\label{fig_RL_structure}
	\end{figure}

    \begin{remark}
        In this paper, the actions of all propagators are coordinated through a centralized RL framework to pursue a common target. 
        In practice, however, different propagators may aim at distinct objectives \cite{chen2025modeling}, e.g., during elections where they support different candidates. 
        In such cases, each propagator must make decisions based only on user information. 
        The optimal individual actions can be studied through a game-theoretic RL perspective, where solutions correspond to a Nash equilibrium. 
        Furthermore, since in reality propagators often operate asynchronously, the problem can also be formulated as a distributed optimal control task \cite{chen2025distributed} and addressed using sequential RL.
    \end{remark}
    
	\section{Simulation}
	
	This chapter uses simulations to validate the preceding analysis and to demonstrate the effectiveness of the reinforcement-learning-based control strategy.
	
	\label{section_experiment}
	
	\subsection{Validation of the Analysis}
	
	\subsubsection{Experiment Setup}
	
	The experiments are conducted using MATLAB simulations based on the Yelp dataset\cite{_yelp_}. Yelp is a platform for discovering and reviewing local businesses. We use a subset of the Yelp dataset, focusing on user IDs, business IDs, and ratings, which serve as the users, items, and initial values in our opinion matrix.
	Based on this real data, we carry out simulations under different parameter settings in both distance-based and angle-based frameworks.

	\subsubsection{Distance-Based Analysis}
	
	In the system evolution experiments, we use 3 items and 14 users. From the results in \cref{fig:distance}, we observe that regardless of the bound settings, the system always reaches convergence. When \( \sqrt{m}(1-\epsilon) = 0.1 \), only points within a distance of $0.1$ are connected. Due to the discrete nature of the Yelp ratings (0, 0.25, 0.75, 1), if \( \sqrt{m}(1-\epsilon) \) is set below $0.25$, the network remains unchanged since the minimum distance between any two distinct vectors is larger than 0.25. Consequently, users do not form connections, and their opinions remain unchanged.

	\begin{figure*}[t]
		\centering
		\begin{subfigure}{1\textwidth}
			\begin{tikzpicture}
				\def\file{pictures/dis_ep0.1_op1.txt}
				\begin{axis}[xlabel={$k$},ylabel={$x_i^1$},
					xmin=0.2, ymin = -0.1, xmax = 9.8,ymax=1.1,legend columns=14,
					width=0.35\textwidth,height=3.5cm,legend style={at={(3.42,1.15)},anchor=east}]
					
					\addplot[red!80, thick, draw opacity = 0.5]      table[x = _0 , y  = _1 ]{\file};
					\addplot[blue!80, thick, draw opacity = 0.5]      table[x = _0 , y  = _2 ]{\file};
					\addplot[green!80, thick, draw opacity = 0.5]      table[x = _0 , y  = _3 ]{\file};
					\addplot[cyan!80, thick, draw opacity = 0.5]      table[x = _0, y  = _4 ]{\file};
					\addplot[pink!80, thick, draw opacity = 0.5]      table[x = _0 , y  = _5 ]{\file};
					\addplot[teal!80, thick, draw opacity = 0.5]      table[x = _0 , y  = _6 ]{\file};
					\addplot[violet!80, thick, draw opacity = 0.5]      table[x = _0 , y  = _7 ]{\file};
					\addplot[red!80, dashed, thick, draw opacity = 0.5]      table[x = _0 , y  = _8 ]{\file};
					\addplot[blue!80, dashed, thick, draw opacity = 0.5]      table[x = _0 , y  = _9 ]{\file};
					\addplot[green!80, dashed, thick, draw opacity = 0.5]      table[x = _0 , y  = _10 ]{\file};
					\addplot[cyan!80, dashed, thick, draw opacity = 0.5]      table[x = _0 , y  = _11 ]{\file};
					\addplot[pink!80, dashed, thick, draw opacity = 0.5]      table[x = _0 , y  = _12 ]{\file};
					\addplot[teal!80, dashed, thick, draw opacity = 0.5]      table[x = _0 , y  = _13 ]{\file};
					\addplot[violet!80, dashed, thick, draw opacity = 0.5]      table[x = _0 , y  = _14 ]{\file};
					
					\legend{
						$\!\! \bm{x}_1 \!\!$, $\!\! \bm{x}_2 \!\!$, $\!\! \bm{x}_3 \!\!$, $\!\! \bm{x}_4 \!\!$, $\!\! \bm{x}_5 \!\!$, $\!\! \bm{x}_6 \!\!$, $\!\! \bm{x}_7 \!\!$, 
						$\!\! \bm{x}_8 \!\!$, $\!\! \bm{x}_9 \!\!$, $\!\! \bm{x}_{10} \!\!$, $\!\! \bm{x}_{11} \!\!$, $\!\! \bm{x}_{12} \!\!$, $\!\! \bm{x}_{13} \!\!$, $\!\! \bm{x}_{14} \!\!$}
				\end{axis}
				\def\file{pictures/dis_ep0.1_op2.txt}
				\begin{axis}[xlabel={$k$},ylabel={$x_i^2$},
					xmin=0.2, ymin = -0.1, xmax = 9.8,ymax=1.1,legend columns=7,
					width=0.35\textwidth,height=3.5cm,legend style={at={(1,1.25)},anchor=east},
					ylabel shift=-0.2cm,
					xshift=5.8cm]
					
					\addplot[red!80, thick, draw opacity = 0.5]      table[x = _0 , y  = _1 ]{\file};
					\addplot[blue!80, thick, draw opacity = 0.5]      table[x = _0 , y  = _2 ]{\file};
					\addplot[green!80, thick, draw opacity = 0.5]      table[x = _0 , y  = _3 ]{\file};
					\addplot[cyan!80, thick, draw opacity = 0.5]      table[x = _0, y  = _4 ]{\file};
					\addplot[pink!80, thick, draw opacity = 0.5]      table[x = _0 , y  = _5 ]{\file};
					\addplot[teal!80, thick, draw opacity = 0.5]      table[x = _0 , y  = _6 ]{\file};
					\addplot[violet!80, thick, draw opacity = 0.5]      table[x = _0 , y  = _7 ]{\file};
					\addplot[red!80, dashed, thick, draw opacity = 0.5]      table[x = _0 , y  = _8 ]{\file};
					\addplot[blue!80, dashed, thick, draw opacity = 0.5]      table[x = _0 , y  = _9 ]{\file};
					\addplot[green!80, dashed, thick, draw opacity = 0.5]      table[x = _0 , y  = _10 ]{\file};
					\addplot[cyan!80, dashed, thick, draw opacity = 0.5]      table[x = _0 , y  = _11 ]{\file};
					\addplot[pink!80, dashed, thick, draw opacity = 0.5]      table[x = _0 , y  = _12 ]{\file};
					\addplot[teal!80, dashed, thick, draw opacity = 0.5]      table[x = _0 , y  = _13 ]{\file};
					\addplot[violet!80, dashed, thick, draw opacity = 0.5]      table[x = _0 , y  = _14 ]{\file};
					
				\end{axis}
				\def\file{pictures/dis_ep0.1_op3.txt}
				\begin{axis}[xlabel={$k$},ylabel={$x_i^2$},
					xmin=0.2, ymin = -0.1, xmax = 9.8,ymax=1.1,legend columns=7,
					width=0.35\textwidth,height=3.5cm,legend style={at={(1,1.25)},anchor=east},
					ylabel shift=-0.2cm,
					xshift=11.6cm]
					
					\addplot[red!80, thick, draw opacity = 0.5]      table[x = _0 , y  = _1 ]{\file};
					\addplot[blue!80, thick, draw opacity = 0.5]      table[x = _0 , y  = _2 ]{\file};
					\addplot[green!80, thick, draw opacity = 0.5]      table[x = _0 , y  = _3 ]{\file};
					\addplot[cyan!80, thick, draw opacity = 0.5]      table[x = _0, y  = _4 ]{\file};
					\addplot[pink!80, thick, draw opacity = 0.5]      table[x = _0 , y  = _5 ]{\file};
					\addplot[teal!80, thick, draw opacity = 0.5]      table[x = _0 , y  = _6 ]{\file};
					\addplot[violet!80, thick, draw opacity = 0.5]      table[x = _0 , y  = _7 ]{\file};
					\addplot[red!80, dashed, thick, draw opacity = 0.5]      table[x = _0 , y  = _8 ]{\file};
					\addplot[blue!80, dashed, thick, draw opacity = 0.5]      table[x = _0 , y  = _9 ]{\file};
					\addplot[green!80, dashed, thick, draw opacity = 0.5]      table[x = _0 , y  = _10 ]{\file};
					\addplot[cyan!80, dashed, thick, draw opacity = 0.5]      table[x = _0 , y  = _11 ]{\file};
					\addplot[pink!80, dashed, thick, draw opacity = 0.5]      table[x = _0 , y  = _12 ]{\file};
					\addplot[teal!80, dashed, thick, draw opacity = 0.5]      table[x = _0 , y  = _13 ]{\file};
					\addplot[violet!80, dashed, thick, draw opacity = 0.5]      table[x = _0 , y  = _14 ]{\file};
					
				\end{axis}
			\end{tikzpicture}
			\vspace{-0.3cm}
			\caption{
				$\sqrt{m}(1-\epsilon) = 0.1$
			}
		\end{subfigure}
		\begin{subfigure}{1\textwidth}
			\begin{tikzpicture}
				\def\file{pictures/dis_ep0.4_op1.txt}
				\begin{axis}[xlabel={$k$},ylabel={$x_i^1$},
					xmin=0.2, ymin = -0.1, xmax = 9.8,ymax=1.1,legend columns=14,
					width=0.35\textwidth,height=3.5cm,legend style={at={(3.42,1.15)},anchor=east}]
					
					\addplot[red!80, thick, draw opacity = 0.5]      table[x = _0 , y  = _1 ]{\file};
					\addplot[blue!80, thick, draw opacity = 0.5]      table[x = _0 , y  = _2 ]{\file};
					\addplot[green!80, thick, draw opacity = 0.5]      table[x = _0 , y  = _3 ]{\file};
					\addplot[cyan!80, thick, draw opacity = 0.5]      table[x = _0, y  = _4 ]{\file};
					\addplot[pink!80, thick, draw opacity = 0.5]      table[x = _0 , y  = _5 ]{\file};
					\addplot[teal!80, thick, draw opacity = 0.5]      table[x = _0 , y  = _6 ]{\file};
					\addplot[violet!80, thick, draw opacity = 0.5]      table[x = _0 , y  = _7 ]{\file};
					\addplot[red!80, dashed, thick, draw opacity = 0.5]      table[x = _0 , y  = _8 ]{\file};
					\addplot[blue!80, dashed, thick, draw opacity = 0.5]      table[x = _0 , y  = _9 ]{\file};
					\addplot[green!80, dashed, thick, draw opacity = 0.5]      table[x = _0 , y  = _10 ]{\file};
					\addplot[cyan!80, dashed, thick, draw opacity = 0.5]      table[x = _0 , y  = _11 ]{\file};
					\addplot[pink!80, dashed, thick, draw opacity = 0.5]      table[x = _0 , y  = _12 ]{\file};
					\addplot[teal!80, dashed, thick, draw opacity = 0.5]      table[x = _0 , y  = _13 ]{\file};
					\addplot[violet!80, dashed, thick, draw opacity = 0.5]      table[x = _0 , y  = _14 ]{\file};
					
					\legend{
						$\!\! \bm{x}_1 \!\!$, $\!\! \bm{x}_2 \!\!$, $\!\! \bm{x}_3 \!\!$, $\!\! \bm{x}_4 \!\!$, $\!\! \bm{x}_5 \!\!$, $\!\! \bm{x}_6 \!\!$, $\!\! \bm{x}_7 \!\!$, 
						$\!\! \bm{x}_8 \!\!$, $\!\! \bm{x}_9 \!\!$, $\!\! \bm{x}_{10} \!\!$, $\!\! \bm{x}_{11} \!\!$, $\!\! \bm{x}_{12} \!\!$, $\!\! \bm{x}_{13} \!\!$, $\!\! \bm{x}_{14} \!\!$}
				\end{axis}
				\def\file{pictures/dis_ep0.4_op2.txt}
				\begin{axis}[xlabel={$k$},ylabel={$x_i^2$},
					xmin=0.2, ymin = -0.1, xmax = 9.8,ymax=1.1,legend columns=7,
					width=0.35\textwidth,height=3.5cm,legend style={at={(1,1.25)},anchor=east},
					ylabel shift=-0.2cm,
					xshift=5.8cm]
					
					\addplot[red!80, thick, draw opacity = 0.5]      table[x = _0 , y  = _1 ]{\file};
					\addplot[blue!80, thick, draw opacity = 0.5]      table[x = _0 , y  = _2 ]{\file};
					\addplot[green!80, thick, draw opacity = 0.5]      table[x = _0 , y  = _3 ]{\file};
					\addplot[cyan!80, thick, draw opacity = 0.5]      table[x = _0, y  = _4 ]{\file};
					\addplot[pink!80, thick, draw opacity = 0.5]      table[x = _0 , y  = _5 ]{\file};
					\addplot[teal!80, thick, draw opacity = 0.5]      table[x = _0 , y  = _6 ]{\file};
					\addplot[violet!80, thick, draw opacity = 0.5]      table[x = _0 , y  = _7 ]{\file};
					\addplot[red!80, dashed, thick, draw opacity = 0.5]      table[x = _0 , y  = _8 ]{\file};
					\addplot[blue!80, dashed, thick, draw opacity = 0.5]      table[x = _0 , y  = _9 ]{\file};
					\addplot[green!80, dashed, thick, draw opacity = 0.5]      table[x = _0 , y  = _10 ]{\file};
					\addplot[cyan!80, dashed, thick, draw opacity = 0.5]      table[x = _0 , y  = _11 ]{\file};
					\addplot[pink!80, dashed, thick, draw opacity = 0.5]      table[x = _0 , y  = _12 ]{\file};
					\addplot[teal!80, dashed, thick, draw opacity = 0.5]      table[x = _0 , y  = _13 ]{\file};
					\addplot[violet!80, dashed, thick, draw opacity = 0.5]      table[x = _0 , y  = _14 ]{\file};
					
				\end{axis}
				\def\file{pictures/dis_ep0.4_op3.txt}
				\begin{axis}[xlabel={$k$},ylabel={$x_i^2$},
					xmin=0.2, ymin = -0.1, xmax = 9.8,ymax=1.1,legend columns=7,
					width=0.35\textwidth,height=3.5cm,legend style={at={(1,1.25)},anchor=east},
					ylabel shift=-0.2cm,
					xshift=11.6cm]
					
					\addplot[red!80, thick, draw opacity = 0.5]      table[x = _0 , y  = _1 ]{\file};
					\addplot[blue!80, thick, draw opacity = 0.5]      table[x = _0 , y  = _2 ]{\file};
					\addplot[green!80, thick, draw opacity = 0.5]      table[x = _0 , y  = _3 ]{\file};
					\addplot[cyan!80, thick, draw opacity = 0.5]      table[x = _0, y  = _4 ]{\file};
					\addplot[pink!80, thick, draw opacity = 0.5]      table[x = _0 , y  = _5 ]{\file};
					\addplot[teal!80, thick, draw opacity = 0.5]      table[x = _0 , y  = _6 ]{\file};
					\addplot[violet!80, thick, draw opacity = 0.5]      table[x = _0 , y  = _7 ]{\file};
					\addplot[red!80, dashed, thick, draw opacity = 0.5]      table[x = _0 , y  = _8 ]{\file};
					\addplot[blue!80, dashed, thick, draw opacity = 0.5]      table[x = _0 , y  = _9 ]{\file};
					\addplot[green!80, dashed, thick, draw opacity = 0.5]      table[x = _0 , y  = _10 ]{\file};
					\addplot[cyan!80, dashed, thick, draw opacity = 0.5]      table[x = _0 , y  = _11 ]{\file};
					\addplot[pink!80, dashed, thick, draw opacity = 0.5]      table[x = _0 , y  = _12 ]{\file};
					\addplot[teal!80, dashed, thick, draw opacity = 0.5]      table[x = _0 , y  = _13 ]{\file};
					\addplot[violet!80, dashed, thick, draw opacity = 0.5]      table[x = _0 , y  = _14 ]{\file};
					
				\end{axis}
			\end{tikzpicture}
			\vspace{-0.3cm}
			\caption{
				$\sqrt{m}(1-\epsilon) = 0.4$
			}
		\end{subfigure}
		\begin{subfigure}{1\textwidth}
			\begin{tikzpicture}
				\def\file{pictures/dis_ep0.9_op1.txt}
				\begin{axis}[xlabel={$k$},ylabel={$x_i^1$},
					xmin=0.2, ymin = -0.1, xmax = 9.8,ymax=1.1,legend columns=14,
					width=0.35\textwidth,height=3.5cm,legend style={at={(3.42,1.15)},anchor=east}]
					
					\addplot[red!80, thick, draw opacity = 0.5]      table[x = _0 , y  = _1 ]{\file};
					\addplot[blue!80, thick, draw opacity = 0.5]      table[x = _0 , y  = _2 ]{\file};
					\addplot[green!80, thick, draw opacity = 0.5]      table[x = _0 , y  = _3 ]{\file};
					\addplot[cyan!80, thick, draw opacity = 0.5]      table[x = _0, y  = _4 ]{\file};
					\addplot[pink!80, thick, draw opacity = 0.5]      table[x = _0 , y  = _5 ]{\file};
					\addplot[teal!80, thick, draw opacity = 0.5]      table[x = _0 , y  = _6 ]{\file};
					\addplot[violet!80, thick, draw opacity = 0.5]      table[x = _0 , y  = _7 ]{\file};
					\addplot[red!80, dashed, thick, draw opacity = 0.5]      table[x = _0 , y  = _8 ]{\file};
					\addplot[blue!80, dashed, thick, draw opacity = 0.5]      table[x = _0 , y  = _9 ]{\file};
					\addplot[green!80, dashed, thick, draw opacity = 0.5]      table[x = _0 , y  = _10 ]{\file};
					\addplot[cyan!80, dashed, thick, draw opacity = 0.5]      table[x = _0 , y  = _11 ]{\file};
					\addplot[pink!80, dashed, thick, draw opacity = 0.5]      table[x = _0 , y  = _12 ]{\file};
					\addplot[teal!80, dashed, thick, draw opacity = 0.5]      table[x = _0 , y  = _13 ]{\file};
					\addplot[violet!80, dashed, thick, draw opacity = 0.5]      table[x = _0 , y  = _14 ]{\file};
					
					\legend{
						$\!\! \bm{x}_1 \!\!$, $\!\! \bm{x}_2 \!\!$, $\!\! \bm{x}_3 \!\!$, $\!\! \bm{x}_4 \!\!$, $\!\! \bm{x}_5 \!\!$, $\!\! \bm{x}_6 \!\!$, $\!\! \bm{x}_7 \!\!$, 
						$\!\! \bm{x}_8 \!\!$, $\!\! \bm{x}_9 \!\!$, $\!\! \bm{x}_{10} \!\!$, $\!\! \bm{x}_{11} \!\!$, $\!\! \bm{x}_{12} \!\!$, $\!\! \bm{x}_{13} \!\!$, $\!\! \bm{x}_{14} \!\!$}
				\end{axis}
				\def\file{pictures/dis_ep0.9_op1.txt}
				\begin{axis}[xlabel={$k$},ylabel={$x_i^2$},
					xmin=0.2, ymin = -0.1, xmax = 9.8,ymax=1.1,legend columns=7,
					width=0.35\textwidth,height=3.5cm,legend style={at={(1,1.25)},anchor=east},
					ylabel shift=-0.2cm,
					xshift=5.8cm]
					
					\addplot[red!80, thick, draw opacity = 0.5]      table[x = _0 , y  = _1 ]{\file};
					\addplot[blue!80, thick, draw opacity = 0.5]      table[x = _0 , y  = _2 ]{\file};
					\addplot[green!80, thick, draw opacity = 0.5]      table[x = _0 , y  = _3 ]{\file};
					\addplot[cyan!80, thick, draw opacity = 0.5]      table[x = _0, y  = _4 ]{\file};
					\addplot[pink!80, thick, draw opacity = 0.5]      table[x = _0 , y  = _5 ]{\file};
					\addplot[teal!80, thick, draw opacity = 0.5]      table[x = _0 , y  = _6 ]{\file};
					\addplot[violet!80, thick, draw opacity = 0.5]      table[x = _0 , y  = _7 ]{\file};
					\addplot[red!80, dashed, thick, draw opacity = 0.5]      table[x = _0 , y  = _8 ]{\file};
					\addplot[blue!80, dashed, thick, draw opacity = 0.5]      table[x = _0 , y  = _9 ]{\file};
					\addplot[green!80, dashed, thick, draw opacity = 0.5]      table[x = _0 , y  = _10 ]{\file};
					\addplot[cyan!80, dashed, thick, draw opacity = 0.5]      table[x = _0 , y  = _11 ]{\file};
					\addplot[pink!80, dashed, thick, draw opacity = 0.5]      table[x = _0 , y  = _12 ]{\file};
					\addplot[teal!80, dashed, thick, draw opacity = 0.5]      table[x = _0 , y  = _13 ]{\file};
					\addplot[violet!80, dashed, thick, draw opacity = 0.5]      table[x = _0 , y  = _14 ]{\file};
					
				\end{axis}
				\def\file{pictures/dis_ep0.9_op1.txt}
				\begin{axis}[xlabel={$k$},ylabel={$x_i^2$},
					xmin=0.2, ymin = -0.1, xmax = 9.8,ymax=1.1,legend columns=7,
					width=0.35\textwidth,height=3.5cm,legend style={at={(1,1.25)},anchor=east},
					ylabel shift=-0.2cm,
					xshift=11.6cm]
					
					\addplot[red!80, thick, draw opacity = 0.5]      table[x = _0 , y  = _1 ]{\file};
					\addplot[blue!80, thick, draw opacity = 0.5]      table[x = _0 , y  = _2 ]{\file};
					\addplot[green!80, thick, draw opacity = 0.5]      table[x = _0 , y  = _3 ]{\file};
					\addplot[cyan!80, thick, draw opacity = 0.5]      table[x = _0, y  = _4 ]{\file};
					\addplot[pink!80, thick, draw opacity = 0.5]      table[x = _0 , y  = _5 ]{\file};
					\addplot[teal!80, thick, draw opacity = 0.5]      table[x = _0 , y  = _6 ]{\file};
					\addplot[violet!80, thick, draw opacity = 0.5]      table[x = _0 , y  = _7 ]{\file};
					\addplot[red!80, dashed, thick, draw opacity = 0.5]      table[x = _0 , y  = _8 ]{\file};
					\addplot[blue!80, dashed, thick, draw opacity = 0.5]      table[x = _0 , y  = _9 ]{\file};
					\addplot[green!80, dashed, thick, draw opacity = 0.5]      table[x = _0 , y  = _10 ]{\file};
					\addplot[cyan!80, dashed, thick, draw opacity = 0.5]      table[x = _0 , y  = _11 ]{\file};
					\addplot[pink!80, dashed, thick, draw opacity = 0.5]      table[x = _0 , y  = _12 ]{\file};
					\addplot[teal!80, dashed, thick, draw opacity = 0.5]      table[x = _0 , y  = _13 ]{\file};
					\addplot[violet!80, dashed, thick, draw opacity = 0.5]      table[x = _0 , y  = _14 ]{\file};
					
				\end{axis}
			\end{tikzpicture}
			\vspace{-0.3cm}
			\caption{
				$\sqrt{m}(1-\epsilon) = 0.9$
			}
		\end{subfigure}
		\vspace{-0.3cm}
		\caption{
			Distance-based method
		}
		\label{fig:distance}
	\end{figure*}
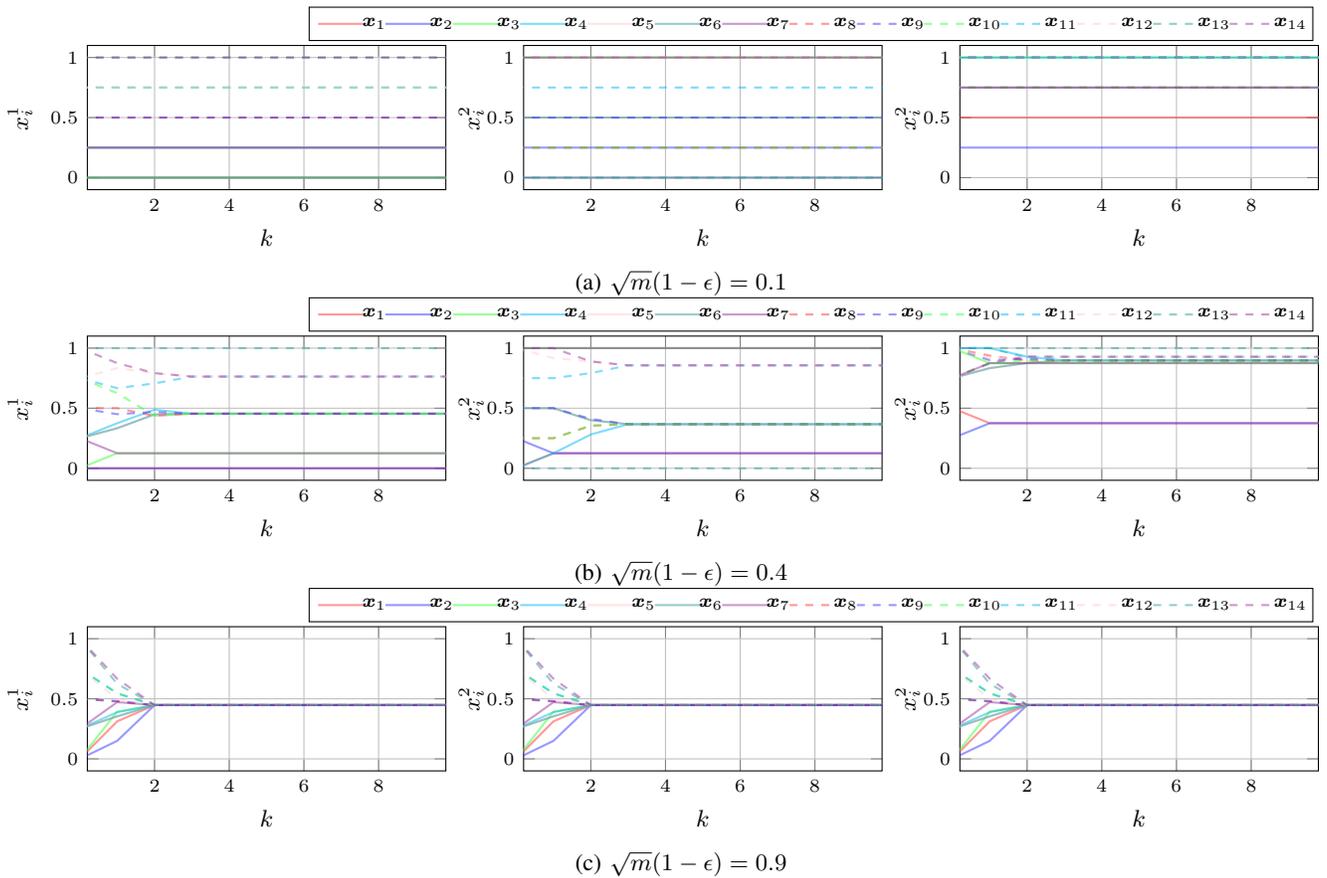
	
	When $\sqrt{m}(1-\epsilon) = 0.4$, points within a distance of $0.4$ are connected, while those beyond $0.4$ are not connected. The evolving connection graph results in multiple connected components, each achieving internal consensus.
	
	When $\sqrt{m}(1-\epsilon) = 0.9$, only points beyond a distance of $0.9$ are disconnected, resulting in an almost fully connected graph, and the entire system reaches consensus.
	
	Additionally, we plot the relationship between $\epsilon$ and the number of clusters in \cref{figure_N_epsilon_dis}, increasing the number of users to $50$ to explore clustering more thoroughly. It is evident that the number of clusters increases as the threshold for connection increases. \cref{figure_N_epsilon_dis} also indicates that our proposed expression effectively describes the upper bound of the clustering result.
	
	\begin{figure}[t] 
		\centering
		\def\file{pictures/dist_group1.txt}
		\begin{tikzpicture}
			\begin{axis}[xlabel={$\epsilon$},ylabel={Nr. Groups},
				xmin=0.01, ymin = -0.1, xmax = 0.99,ymax=49,legend columns=1,
				width=0.48\textwidth,height=3.5cm,legend pos= north west
				]
				
				\addplot[blue!80, only marks, mark=o, mark size=1]      table[x = _1 , y  = _2 ]{\file};
				\addplot[red!80, only marks, mark=o, mark size=1]      table[x = _1 , y  = _3 ]{\file};
				
				\legend{Experimental result, Theoretical upper bound}
			\end{axis}
		\end{tikzpicture}
		\vspace{-0.3cm}
		\caption{
			Number of clusters as $\epsilon$ varies in the proposed ODRS framework using the distance-based method.}
		\vspace{-0.3cm}
		\label{figure_N_epsilon_dis}
	\end{figure}
	
	\subsubsection{Angle-Based Analysis}
	
	The results for the angle-based model are generally similar to those of the distance-based model. To demonstrate the behavior of clustering and consensus, the results for $\epsilon = 0.93$, $0.97$, $0.99$ are shown in \cref{fig:angle}.
	
	\begin{figure*}[t]
		\centering
		\begin{subfigure}{1\textwidth}
			\begin{tikzpicture}
				\def\file{pictures/ang_ep0.93_op1.txt}
				\begin{axis}[xlabel={$k$},ylabel={$x_i^1$},
					xmin=0.2, ymin = -0.1, xmax = 9.8,ymax=1.1,legend columns=14,
					width=0.35\textwidth,height=3.5cm,legend style={at={(3.42,1.15)},anchor=east}]
					
					\addplot[red!80, thick, draw opacity = 0.5]      table[x = _0 , y  = _1 ]{\file};
					\addplot[blue!80, thick, draw opacity = 0.5]      table[x = _0 , y  = _2 ]{\file};
					\addplot[green!80, thick, draw opacity = 0.5]      table[x = _0 , y  = _3 ]{\file};
					\addplot[cyan!80, thick, draw opacity = 0.5]      table[x = _0, y  = _4 ]{\file};
					\addplot[pink!80, thick, draw opacity = 0.5]      table[x = _0 , y  = _5 ]{\file};
					\addplot[teal!80, thick, draw opacity = 0.5]      table[x = _0 , y  = _6 ]{\file};
					\addplot[violet!80, thick, draw opacity = 0.5]      table[x = _0 , y  = _7 ]{\file};
					\addplot[red!80, dashed, thick, draw opacity = 0.5]      table[x = _0 , y  = _8 ]{\file};
					\addplot[blue!80, dashed, thick, draw opacity = 0.5]      table[x = _0 , y  = _9 ]{\file};
					\addplot[green!80, dashed, thick, draw opacity = 0.5]      table[x = _0 , y  = _10 ]{\file};
					\addplot[cyan!80, dashed, thick, draw opacity = 0.5]      table[x = _0 , y  = _11 ]{\file};
					\addplot[pink!80, dashed, thick, draw opacity = 0.5]      table[x = _0 , y  = _12 ]{\file};
					\addplot[teal!80, dashed, thick, draw opacity = 0.5]      table[x = _0 , y  = _13 ]{\file};
					\addplot[violet!80, dashed, thick, draw opacity = 0.5]      table[x = _0 , y  = _14 ]{\file};
					
					\legend{
						$\!\! \bm{x}_1 \!\!$, $\!\! \bm{x}_2 \!\!$, $\!\! \bm{x}_3 \!\!$, $\!\! \bm{x}_4 \!\!$, $\!\! \bm{x}_5 \!\!$, $\!\! \bm{x}_6 \!\!$, $\!\! \bm{x}_7 \!\!$, 
						$\!\! \bm{x}_8 \!\!$, $\!\! \bm{x}_9 \!\!$, $\!\! \bm{x}_{10} \!\!$, $\!\! \bm{x}_{11} \!\!$, $\!\! \bm{x}_{12} \!\!$, $\!\! \bm{x}_{13} \!\!$, $\!\! \bm{x}_{14} \!\!$}
				\end{axis}
				\def\file{pictures/ang_ep0.93_op2.txt}
				\begin{axis}[xlabel={$k$},ylabel={$x_i^2$},
					xmin=0.2, ymin = -0.1, xmax = 9.8,ymax=1.1,legend columns=7,
					width=0.35\textwidth,height=3.5cm,legend style={at={(1,1.25)},anchor=east},
					ylabel shift=-0.2cm,
					xshift=5.8cm]
					
					\addplot[red!80, thick, draw opacity = 0.5]      table[x = _0 , y  = _1 ]{\file};
					\addplot[blue!80, thick, draw opacity = 0.5]      table[x = _0 , y  = _2 ]{\file};
					\addplot[green!80, thick, draw opacity = 0.5]      table[x = _0 , y  = _3 ]{\file};
					\addplot[cyan!80, thick, draw opacity = 0.5]      table[x = _0, y  = _4 ]{\file};
					\addplot[pink!80, thick, draw opacity = 0.5]      table[x = _0 , y  = _5 ]{\file};
					\addplot[teal!80, thick, draw opacity = 0.5]      table[x = _0 , y  = _6 ]{\file};
					\addplot[violet!80, thick, draw opacity = 0.5]      table[x = _0 , y  = _7 ]{\file};
					\addplot[red!80, dashed, thick, draw opacity = 0.5]      table[x = _0 , y  = _8 ]{\file};
					\addplot[blue!80, dashed, thick, draw opacity = 0.5]      table[x = _0 , y  = _9 ]{\file};
					\addplot[green!80, dashed, thick, draw opacity = 0.5]      table[x = _0 , y  = _10 ]{\file};
					\addplot[cyan!80, dashed, thick, draw opacity = 0.5]      table[x = _0 , y  = _11 ]{\file};
					\addplot[pink!80, dashed, thick, draw opacity = 0.5]      table[x = _0 , y  = _12 ]{\file};
					\addplot[teal!80, dashed, thick, draw opacity = 0.5]      table[x = _0 , y  = _13 ]{\file};
					\addplot[violet!80, dashed, thick, draw opacity = 0.5]      table[x = _0 , y  = _14 ]{\file};
					
				\end{axis}
				\def\file{pictures/ang_ep0.93_op3.txt}
				\begin{axis}[xlabel={$k$},ylabel={$x_i^2$},
					xmin=0.2, ymin = -0.1, xmax = 9.8,ymax=1.1,legend columns=7,
					width=0.35\textwidth,height=3.5cm,legend style={at={(1,1.25)},anchor=east},
					ylabel shift=-0.2cm,
					xshift=11.6cm]
					
					\addplot[red!80, thick, draw opacity = 0.5]      table[x = _0 , y  = _1 ]{\file};
					\addplot[blue!80, thick, draw opacity = 0.5]      table[x = _0 , y  = _2 ]{\file};
					\addplot[green!80, thick, draw opacity = 0.5]      table[x = _0 , y  = _3 ]{\file};
					\addplot[cyan!80, thick, draw opacity = 0.5]      table[x = _0, y  = _4 ]{\file};
					\addplot[pink!80, thick, draw opacity = 0.5]      table[x = _0 , y  = _5 ]{\file};
					\addplot[teal!80, thick, draw opacity = 0.5]      table[x = _0 , y  = _6 ]{\file};
					\addplot[violet!80, thick, draw opacity = 0.5]      table[x = _0 , y  = _7 ]{\file};
					\addplot[red!80, dashed, thick, draw opacity = 0.5]      table[x = _0 , y  = _8 ]{\file};
					\addplot[blue!80, dashed, thick, draw opacity = 0.5]      table[x = _0 , y  = _9 ]{\file};
					\addplot[green!80, dashed, thick, draw opacity = 0.5]      table[x = _0 , y  = _10 ]{\file};
					\addplot[cyan!80, dashed, thick, draw opacity = 0.5]      table[x = _0 , y  = _11 ]{\file};
					\addplot[pink!80, dashed, thick, draw opacity = 0.5]      table[x = _0 , y  = _12 ]{\file};
					\addplot[teal!80, dashed, thick, draw opacity = 0.5]      table[x = _0 , y  = _13 ]{\file};
					\addplot[violet!80, dashed, thick, draw opacity = 0.5]      table[x = _0 , y  = _14 ]{\file};
					
				\end{axis}
			\end{tikzpicture}
			\vspace{-0.3cm}
			\caption{
				$\epsilon = 0.93$
			}
		\end{subfigure}
		\begin{subfigure}{1\textwidth}
			\begin{tikzpicture}
				\def\file{pictures/ang_ep0.97_op1.txt}
				\begin{axis}[xlabel={$k$},ylabel={$x_i^1$},
					xmin=0.2, ymin = -0.1, xmax = 9.8,ymax=1.1,legend columns=14,
					width=0.35\textwidth,height=3.5cm,legend style={at={(3.42,1.15)},anchor=east}]
					
					\addplot[red!80, thick, draw opacity = 0.5]      table[x = _0 , y  = _1 ]{\file};
					\addplot[blue!80, thick, draw opacity = 0.5]      table[x = _0 , y  = _2 ]{\file};
					\addplot[green!80, thick, draw opacity = 0.5]      table[x = _0 , y  = _3 ]{\file};
					\addplot[cyan!80, thick, draw opacity = 0.5]      table[x = _0, y  = _4 ]{\file};
					\addplot[pink!80, thick, draw opacity = 0.5]      table[x = _0 , y  = _5 ]{\file};
					\addplot[teal!80, thick, draw opacity = 0.5]      table[x = _0 , y  = _6 ]{\file};
					\addplot[violet!80, thick, draw opacity = 0.5]      table[x = _0 , y  = _7 ]{\file};
					\addplot[red!80, dashed, thick, draw opacity = 0.5]      table[x = _0 , y  = _8 ]{\file};
					\addplot[blue!80, dashed, thick, draw opacity = 0.5]      table[x = _0 , y  = _9 ]{\file};
					\addplot[green!80, dashed, thick, draw opacity = 0.5]      table[x = _0 , y  = _10 ]{\file};
					\addplot[cyan!80, dashed, thick, draw opacity = 0.5]      table[x = _0 , y  = _11 ]{\file};
					\addplot[pink!80, dashed, thick, draw opacity = 0.5]      table[x = _0 , y  = _12 ]{\file};
					\addplot[teal!80, dashed, thick, draw opacity = 0.5]      table[x = _0 , y  = _13 ]{\file};
					\addplot[violet!80, dashed, thick, draw opacity = 0.5]      table[x = _0 , y  = _14 ]{\file};
					
					\legend{
						$\!\! \bm{x}_1 \!\!$, $\!\! \bm{x}_2 \!\!$, $\!\! \bm{x}_3 \!\!$, $\!\! \bm{x}_4 \!\!$, $\!\! \bm{x}_5 \!\!$, $\!\! \bm{x}_6 \!\!$, $\!\! \bm{x}_7 \!\!$, 
						$\!\! \bm{x}_8 \!\!$, $\!\! \bm{x}_9 \!\!$, $\!\! \bm{x}_{10} \!\!$, $\!\! \bm{x}_{11} \!\!$, $\!\! \bm{x}_{12} \!\!$, $\!\! \bm{x}_{13} \!\!$, $\!\! \bm{x}_{14} \!\!$}
				\end{axis}
				\def\file{pictures/ang_ep0.97_op2.txt}
				\begin{axis}[xlabel={$k$},ylabel={$x_i^2$},
					xmin=0.2, ymin = -0.1, xmax = 9.8,ymax=1.1,legend columns=7,
					width=0.35\textwidth,height=3.5cm,legend style={at={(1,1.25)},anchor=east},
					ylabel shift=-0.2cm,
					xshift=5.8cm]
					
					\addplot[red!80, thick, draw opacity = 0.5]      table[x = _0 , y  = _1 ]{\file};
					\addplot[blue!80, thick, draw opacity = 0.5]      table[x = _0 , y  = _2 ]{\file};
					\addplot[green!80, thick, draw opacity = 0.5]      table[x = _0 , y  = _3 ]{\file};
					\addplot[cyan!80, thick, draw opacity = 0.5]      table[x = _0, y  = _4 ]{\file};
					\addplot[pink!80, thick, draw opacity = 0.5]      table[x = _0 , y  = _5 ]{\file};
					\addplot[teal!80, thick, draw opacity = 0.5]      table[x = _0 , y  = _6 ]{\file};
					\addplot[violet!80, thick, draw opacity = 0.5]      table[x = _0 , y  = _7 ]{\file};
					\addplot[red!80, dashed, thick, draw opacity = 0.5]      table[x = _0 , y  = _8 ]{\file};
					\addplot[blue!80, dashed, thick, draw opacity = 0.5]      table[x = _0 , y  = _9 ]{\file};
					\addplot[green!80, dashed, thick, draw opacity = 0.5]      table[x = _0 , y  = _10 ]{\file};
					\addplot[cyan!80, dashed, thick, draw opacity = 0.5]      table[x = _0 , y  = _11 ]{\file};
					\addplot[pink!80, dashed, thick, draw opacity = 0.5]      table[x = _0 , y  = _12 ]{\file};
					\addplot[teal!80, dashed, thick, draw opacity = 0.5]      table[x = _0 , y  = _13 ]{\file};
					\addplot[violet!80, dashed, thick, draw opacity = 0.5]      table[x = _0 , y  = _14 ]{\file};
					
				\end{axis}
				\def\file{pictures/ang_ep0.97_op3.txt}
				\begin{axis}[xlabel={$k$},ylabel={$x_i^2$},
					xmin=0.2, ymin = -0.1, xmax = 9.8,ymax=1.1,legend columns=7,
					width=0.35\textwidth,height=3.5cm,legend style={at={(1,1.25)},anchor=east},
					ylabel shift=-0.2cm,
					xshift=11.6cm]
					
					\addplot[red!80, thick, draw opacity = 0.5]      table[x = _0 , y  = _1 ]{\file};
					\addplot[blue!80, thick, draw opacity = 0.5]      table[x = _0 , y  = _2 ]{\file};
					\addplot[green!80, thick, draw opacity = 0.5]      table[x = _0 , y  = _3 ]{\file};
					\addplot[cyan!80, thick, draw opacity = 0.5]      table[x = _0, y  = _4 ]{\file};
					\addplot[pink!80, thick, draw opacity = 0.5]      table[x = _0 , y  = _5 ]{\file};
					\addplot[teal!80, thick, draw opacity = 0.5]      table[x = _0 , y  = _6 ]{\file};
					\addplot[violet!80, thick, draw opacity = 0.5]      table[x = _0 , y  = _7 ]{\file};
					\addplot[red!80, dashed, thick, draw opacity = 0.5]      table[x = _0 , y  = _8 ]{\file};
					\addplot[blue!80, dashed, thick, draw opacity = 0.5]      table[x = _0 , y  = _9 ]{\file};
					\addplot[green!80, dashed, thick, draw opacity = 0.5]      table[x = _0 , y  = _10 ]{\file};
					\addplot[cyan!80, dashed, thick, draw opacity = 0.5]      table[x = _0 , y  = _11 ]{\file};
					\addplot[pink!80, dashed, thick, draw opacity = 0.5]      table[x = _0 , y  = _12 ]{\file};
					\addplot[teal!80, dashed, thick, draw opacity = 0.5]      table[x = _0 , y  = _13 ]{\file};
					\addplot[violet!80, dashed, thick, draw opacity = 0.5]      table[x = _0 , y  = _14 ]{\file};
					
				\end{axis}
			\end{tikzpicture}
			\vspace{-0.3cm}
			\caption{
				$\epsilon = 0.97$
			}
		\end{subfigure}
		\begin{subfigure}{1\textwidth}
			\begin{tikzpicture}
				\def\file{pictures/ang_ep0.99_op1.txt}
				\begin{axis}[xlabel={$k$},ylabel={$x_i^1$},
					xmin=0.2, ymin = -0.1, xmax = 9.8,ymax=1.1,legend columns=14,
					width=0.35\textwidth,height=3.5cm,legend style={at={(3.42,1.15)},anchor=east}]
					
					\addplot[red!80, thick, draw opacity = 0.5]      table[x = _0 , y  = _1 ]{\file};
					\addplot[blue!80, thick, draw opacity = 0.5]      table[x = _0 , y  = _2 ]{\file};
					\addplot[green!80, thick, draw opacity = 0.5]      table[x = _0 , y  = _3 ]{\file};
					\addplot[cyan!80, thick, draw opacity = 0.5]      table[x = _0, y  = _4 ]{\file};
					\addplot[pink!80, thick, draw opacity = 0.5]      table[x = _0 , y  = _5 ]{\file};
					\addplot[teal!80, thick, draw opacity = 0.5]      table[x = _0 , y  = _6 ]{\file};
					\addplot[violet!80, thick, draw opacity = 0.5]      table[x = _0 , y  = _7 ]{\file};
					\addplot[red!80, dashed, thick, draw opacity = 0.5]      table[x = _0 , y  = _8 ]{\file};
					\addplot[blue!80, dashed, thick, draw opacity = 0.5]      table[x = _0 , y  = _9 ]{\file};
					\addplot[green!80, dashed, thick, draw opacity = 0.5]      table[x = _0 , y  = _10 ]{\file};
					\addplot[cyan!80, dashed, thick, draw opacity = 0.5]      table[x = _0 , y  = _11 ]{\file};
					\addplot[pink!80, dashed, thick, draw opacity = 0.5]      table[x = _0 , y  = _12 ]{\file};
					\addplot[teal!80, dashed, thick, draw opacity = 0.5]      table[x = _0 , y  = _13 ]{\file};
					\addplot[violet!80, dashed, thick, draw opacity = 0.5]      table[x = _0 , y  = _14 ]{\file};
					
					\legend{
						$\!\! \bm{x}_1 \!\!$, $\!\! \bm{x}_2 \!\!$, $\!\! \bm{x}_3 \!\!$, $\!\! \bm{x}_4 \!\!$, $\!\! \bm{x}_5 \!\!$, $\!\! \bm{x}_6 \!\!$, $\!\! \bm{x}_7 \!\!$, 
						$\!\! \bm{x}_8 \!\!$, $\!\! \bm{x}_9 \!\!$, $\!\! \bm{x}_{10} \!\!$, $\!\! \bm{x}_{11} \!\!$, $\!\! \bm{x}_{12} \!\!$, $\!\! \bm{x}_{13} \!\!$, $\!\! \bm{x}_{14} \!\!$}
				\end{axis}
				\def\file{pictures/ang_ep0.99_op2.txt}
				\begin{axis}[xlabel={$k$},ylabel={$x_i^2$},
					xmin=0.2, ymin = -0.1, xmax = 9.8,ymax=1.1,legend columns=7,
					width=0.35\textwidth,height=3.5cm,legend style={at={(1,1.25)},anchor=east},
					ylabel shift=-0.2cm,
					xshift=5.8cm]
					
					\addplot[red!80, thick, draw opacity = 0.5]      table[x = _0 , y  = _1 ]{\file};
					\addplot[blue!80, thick, draw opacity = 0.5]      table[x = _0 , y  = _2 ]{\file};
					\addplot[green!80, thick, draw opacity = 0.5]      table[x = _0 , y  = _3 ]{\file};
					\addplot[cyan!80, thick, draw opacity = 0.5]      table[x = _0, y  = _4 ]{\file};
					\addplot[pink!80, thick, draw opacity = 0.5]      table[x = _0 , y  = _5 ]{\file};
					\addplot[teal!80, thick, draw opacity = 0.5]      table[x = _0 , y  = _6 ]{\file};
					\addplot[violet!80, thick, draw opacity = 0.5]      table[x = _0 , y  = _7 ]{\file};
					\addplot[red!80, dashed, thick, draw opacity = 0.5]      table[x = _0 , y  = _8 ]{\file};
					\addplot[blue!80, dashed, thick, draw opacity = 0.5]      table[x = _0 , y  = _9 ]{\file};
					\addplot[green!80, dashed, thick, draw opacity = 0.5]      table[x = _0 , y  = _10 ]{\file};
					\addplot[cyan!80, dashed, thick, draw opacity = 0.5]      table[x = _0 , y  = _11 ]{\file};
					\addplot[pink!80, dashed, thick, draw opacity = 0.5]      table[x = _0 , y  = _12 ]{\file};
					\addplot[teal!80, dashed, thick, draw opacity = 0.5]      table[x = _0 , y  = _13 ]{\file};
					\addplot[violet!80, dashed, thick, draw opacity = 0.5]      table[x = _0 , y  = _14 ]{\file};
					
				\end{axis}
				\def\file{pictures/ang_ep0.99_op3.txt}
				\begin{axis}[xlabel={$k$},ylabel={$x_i^2$},
					xmin=0.2, ymin = -0.1, xmax = 9.8,ymax=1.1,legend columns=7,
					width=0.35\textwidth,height=3.5cm,legend style={at={(1,1.25)},anchor=east},
					ylabel shift=-0.2cm,
					xshift=11.6cm]
					
					\addplot[red!80, thick, draw opacity = 0.5]      table[x = _0 , y  = _1 ]{\file};
					\addplot[blue!80, thick, draw opacity = 0.5]      table[x = _0 , y  = _2 ]{\file};
					\addplot[green!80, thick, draw opacity = 0.5]      table[x = _0 , y  = _3 ]{\file};
					\addplot[cyan!80, thick, draw opacity = 0.5]      table[x = _0, y  = _4 ]{\file};
					\addplot[pink!80, thick, draw opacity = 0.5]      table[x = _0 , y  = _5 ]{\file};
					\addplot[teal!80, thick, draw opacity = 0.5]      table[x = _0 , y  = _6 ]{\file};
					\addplot[violet!80, thick, draw opacity = 0.5]      table[x = _0 , y  = _7 ]{\file};
					\addplot[red!80, dashed, thick, draw opacity = 0.5]      table[x = _0 , y  = _8 ]{\file};
					\addplot[blue!80, dashed, thick, draw opacity = 0.5]      table[x = _0 , y  = _9 ]{\file};
					\addplot[green!80, dashed, thick, draw opacity = 0.5]      table[x = _0 , y  = _10 ]{\file};
					\addplot[cyan!80, dashed, thick, draw opacity = 0.5]      table[x = _0 , y  = _11 ]{\file};
					\addplot[pink!80, dashed, thick, draw opacity = 0.5]      table[x = _0 , y  = _12 ]{\file};
					\addplot[teal!80, dashed, thick, draw opacity = 0.5]      table[x = _0 , y  = _13 ]{\file};
					\addplot[violet!80, dashed, thick, draw opacity = 0.5]      table[x = _0 , y  = _14 ]{\file};
					
				\end{axis}
			\end{tikzpicture}
			\vspace{-0.3cm}
			\caption{
				$\epsilon = 0.99$
			}
		\end{subfigure}
		\vspace{-0.3cm}
		\caption{
			Angle-based method
		}
		\vspace{-0.3cm}
		\label{fig:angle}
	\end{figure*}
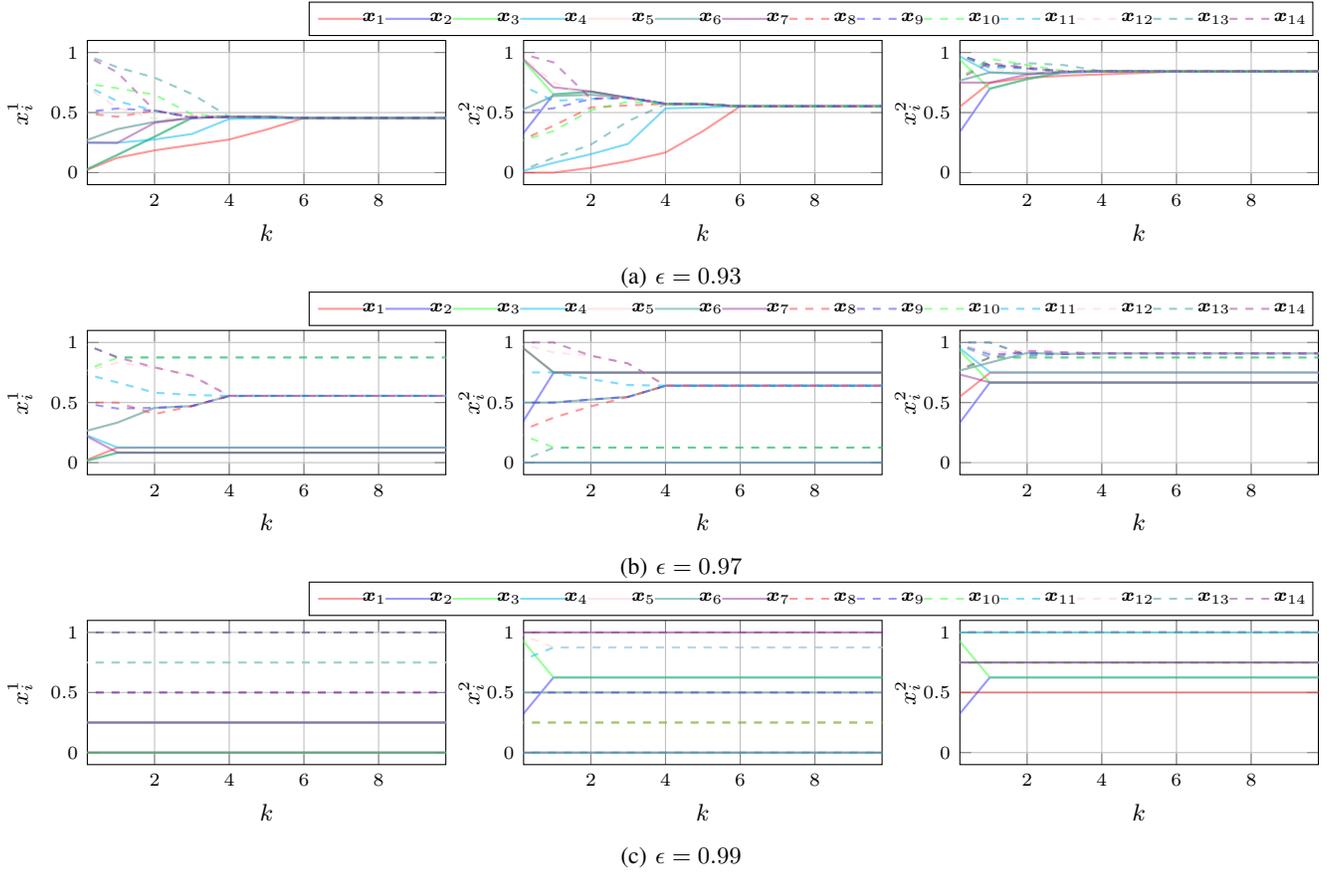
    
	\cref{figure_N_epsilon} illustrates the relationship between the angular threshold $\epsilon$ and the number of clusters. To cover more cases, the initial conditions are not taken from the Yelp dataset but are instead based on randomly generated initializations. 
	From \cref{figure_N_epsilon}, we observe that the number of clusters increases with $\epsilon$. Moreover, the figure shows that our proposed expression effectively captures the upper bound of the cluster number.
	
	\subsubsection{Discussion on $\epsilon$}
	
	Here, we further discuss the impact of $\epsilon$ on the number of clusters. Regarding sensitivity, as \cref{figure_N_epsilon_dis} and \cref{figure_N_epsilon} show for both distance- and angle-based similarities, the impact of small perturbations in $\epsilon$ is itself $\epsilon$-dependent. In particular, as $\epsilon \to 1$, a marginal increase in $\epsilon$ yields a disproportionately larger rise in the number of clusters.
	
	The impact of $\epsilon$ on the number of clusters also provides practical guidance for choosing these parameters in real applications.  
	In practice, $\epsilon$ serves as the threshold in the recommendation system, determining the minimum similarity required between two users to trigger mutual recommendations. A higher number of clusters indicates greater diversity in opinions, while fewer clusters reflect higher consensus. 
	In our results, a larger $\epsilon$ tends to split users into a greater number of smaller groups, leading to opinions being more widely dispersed across the spectrum. Conversely, a smaller $\epsilon$ tends to produce fewer but larger clusters, leading to a multipolar or even bipolar structure. The system achieves global consensus when $\epsilon$ falls below a certain threshold, which can be derived inversely using our proposed upper-bound. These results provide a basis for evaluating the societal impact of the parameters set by social media platforms.
	
	Our findings are consistent with prior empirical work. For example, the work in \cite{ramaciotti_morales_auditing_2021-1} examines similar mechanisms of recommendations and opinion dynamics from the perspective of auditing the effect, using Twitter data. They reach qualitatively comparable conclusions, namely that different recommendation principles sometimes either drive or mitigate polarization in real social networks. Similarly, \cite{bellina_effect_2023-1} introduces different parameters in their model, but with respect to the strength parameter of the similarity bias, they also conclude that when it is sufficiently large, the system tends to form polarized groups, thereby giving rise to the filter bubble effect.
	
	The experimental results validate the analysis that the parameter settings of recommendation systems significantly affect the final number of opinion clusters. This finding provides theoretical support for the idea that recommendation systems on social media platforms can lead to user fragmentation, potentially resulting in echo chambers and filter bubbles.
	
	\begin{figure}[t] 
		\centering
		\def\file{pictures/Group_epsilon.txt}
		\begin{tikzpicture}
			\begin{axis}[xlabel={$\epsilon$},ylabel={Nr. Groups},
				xmin=0.76, ymin = -0.1, xmax = 0.9999,ymax=131,legend columns=1,
				width=0.48\textwidth,height=3.5cm,legend pos= north west
				]
				
				\addplot[blue!80, only marks, mark=o, mark size=1]      table[x = Experiment_epsilon , y  = Experiment_GroupQuantity ]{\file};
				\addplot[red!80, only marks, mark=o, mark size=1]      table[x = Bound_epsilon , y  = Bound_GroupQuantity ]{\file};
				
				\legend{Experimental result, Theoretical upper bound}
			\end{axis}
		\end{tikzpicture}
		\caption{
			Number of clusters as $\epsilon$ varies in the proposed ODRS framework using the angle-based method.}
		\label{figure_N_epsilon}
	\end{figure}
	
	\subsection{Optimal Influence Strategy for Propagators}
	
	In this subsection, the effectiveness of the reinforcement learning-based control strategy is demonstrated by using the Yelp data set \cite{_yelp_}.
	To highlight the advantages of the selected Reinforcement Learning (RL) approach, we conduct a comparison with an alternative intelligent algorithm, namely Evolutionary Algorithms (EA). The following content first introduces the experimental setup, and then presents the results.
	
	We extract a subset from the Yelp dataset, which includes 8 users and opinions represented in 3 dimensions. To shape the users' opinion, $2$ propagators are introduced, whose opinions are generated via a reinforcement learning algorithm and not affected by the users.
	The target opinion $\bm{x}^c$ is randomly selected in the valid domain, i.e., $\bm{x}^c \in [0,1]^3$. For a clear illustration of the influence performance, an exemplar target is set as $\bm{x}^c = [0.8147, 0.9058, 0.1270]^T$.
	Moreover, the control horizon is set as $20$ steps.
	The threshold $\epsilon$ in \eqref{f_bar} and \eqref{g_bar} is chosen as $0.2$.
	
	The structures of the critic and actor networks are shown in \cref{fig_network_structure}.
	Specifically, the concatenated system state $\bm{X} \in \mathbb{R}^{8 \times 3}$ is reshaped into a $24$-dimensional vector, which combines with the target opinion $\bm{x}^c \in \mathbb{R}^3$ and serves as the $27$-dimensional input for critic and actor networks.
	The critic network, which generates the prediction of the cumulative reward with dimension $1$, has $1$ hidden layers, which applies a fully connected layer with $256$ hidden nodes and ReLU activation function.
	The actor network returns the concatenated mean prediction and its deviation, whose common layer consists of a fully connected layer with $256$ hidden nodes and is activated by the ReLU function.
	The $6$-dimensional output of the common layer is the result of a fully connected layer, which serves as the input for the mean and variance networks.
	Both network components for mean and variance are composed of a ReLU layer, followed by a fully connected layer with $6$-dimensional output.
	While the Softplus function is chosen for variance, the sigmoid layer is used for mean, such that the control command belongs to $(0,1)^6$.
	The generated control command will reshape to two $3$-dimensional vectors $\bm{u}_1$ and $\bm{u}_2$ in \eqref{discrete model}.
	
	\begin{figure}[t]
		\centering
		\includegraphics[width = 0.48\textwidth]{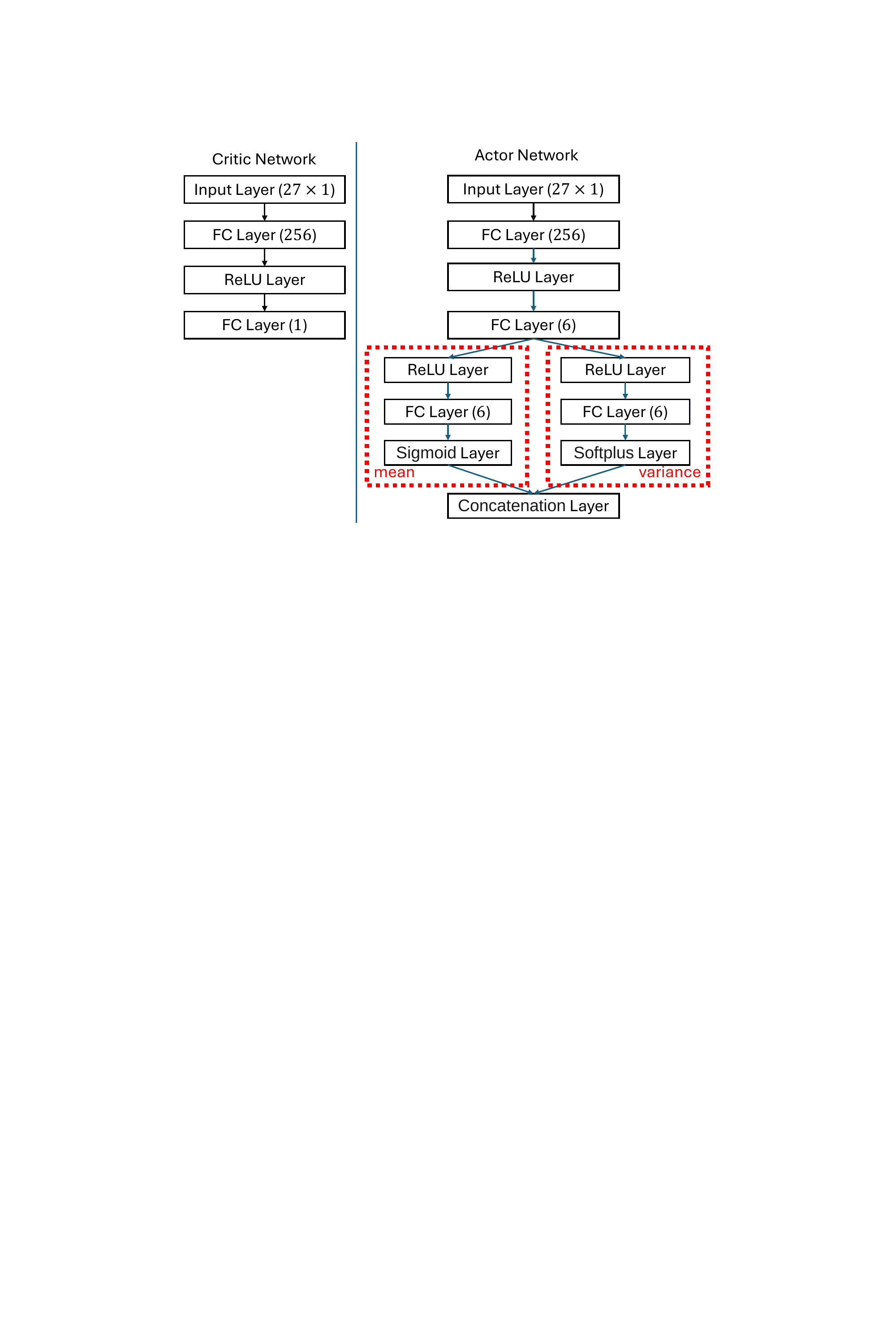}
		\vspace{-0.1cm}
		\caption{Structure of critic and actor networks in PPO reinforcement learning.}
		\vspace{-0.3cm}
		\label{fig_network_structure}
	\end{figure}
	
	The reinforcement learning-based agent is trained separately for distance- and angle-based methods, whose performance is shown as follows.
	
	\subsubsection{Distance-based Method}
	
	The reinforcement learning-based control policy is firstly trained in the proposed ODRS framework with distance-based method \eqref{s_f} and \eqref{f_bar}, where the training status is shown in \cref{figure_Training_Euclidean}.
	It is obvious to see the training converges by observing the episode reward, the average reward and the output of the critic network are almost identical after $8000$ episodes.
	The opinion evolution for users and propagators are shown in \cref{figure_State_Euclidean} and \cref{figure_Input_Euclidean} respectively, where the convergence of the user opinion to $\bm{x}^c$ is obvious.
	
	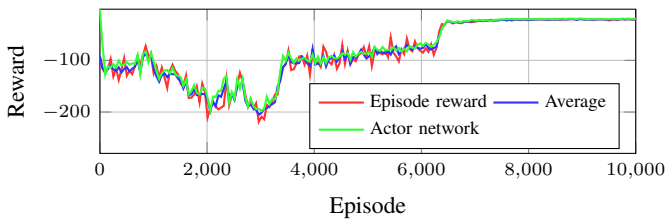
\begin{figure}[t] 
		\centering
		\def\file{pictures/Training_Euclidean.txt}
		\begin{tikzpicture}
			\begin{axis}[xlabel={Episode},ylabel={Reward},
				xmin=1, ymin = -280, xmax = 9999,ymax=-1,legend columns=2,
				width=0.48\textwidth,height=3.5cm,legend pos= south east
				]
				\addplot[red!80, thick]      table[x = EpisodeNrSet , y  = EpisodeReward ]{\file};
				\addplot[blue!80, thick]      table[x = EpisodeNrSet , y  = AverageReward ]{\file};
				\addplot[green!80, thick]      table[x = EpisodeNrSet , y  = EpisodeQ0 ]{\file};
				
				\legend{Episode reward, Average, Actor network}
			\end{axis}
		\end{tikzpicture}
		\caption{
			Training status for RL-based control strategy in the ODRS framework with the distance-based method, which is reflected by the cumulative rewards over episodes.
		}
		\label{figure_Training_Euclidean}
	\end{figure}
	
	\begin{figure}[t]
		\centering
		\begin{tikzpicture}
			
			\def\file{pictures/State_Euclidean.txt}
			\begin{axis}[xlabel={},ylabel={User $1$},
				xmin=0.1, ymin = -0.1, xmax = 19.9,ymax=1.65,legend columns=3,
				width=0.26\textwidth,height=3.5cm,legend pos=north east,
				xticklabels={,,},
				ylabel shift=-0.2cm]
				\def\file{pictures/State_Euclidean.txt}
				\addplot[red!80, thick]      table[x = k_set , y  = x11 ]{\file};
				\addplot[blue!80, thick]      table[x = k_set , y  = x12 ]{\file};
				\addplot[green!80, thick]      table[x = k_set , y  = x13 ]{\file};
				
				\def\file{pictures/State_EA_dis.txt}
				\addplot[red!80, thick, dashed]      table[x = k_set , y  = x11 ]{\file};
				\addplot[blue!80, thick, dashed]      table[x = k_set , y  = x12 ]{\file};
				\addplot[green!80, thick, dashed]      table[x = k_set , y  = x13 ]{\file};
				
				\def\file{pictures/xc.txt}
				\addplot[red!80, thick, dotted]      table[x = k_set_xc , y  = xc_1 ]{\file};
				\addplot[blue!80, thick, dotted]      table[x = k_set_xc , y  = xc_2 ]{\file};
				\addplot[green!80, thick, dotted]      table[x = k_set_xc , y  = xc_3 ]{\file};
				
				\legend{$\!\! x_1^1 \!\!$, $\!\! x_1^2 \!\!$, $\!\! x_1^3 \!\!$}
			\end{axis}
			\begin{axis}[xlabel={},ylabel={User $2$},
				xmin=0.1, ymin = -0.1, xmax = 19.9,ymax=1.65,legend columns=3,
				width=0.26\textwidth,height=3.5cm,legend pos=north east,
				xticklabels={,,},
				xshift=4cm,
				ylabel shift=-0.2cm]
				\def\file{pictures/State_Euclidean.txt}
				\addplot[red!80, thick]      table[x = k_set , y  = x21 ]{\file};
				\addplot[blue!80, thick]      table[x = k_set , y  = x22 ]{\file};
				\addplot[green!80, thick]      table[x = k_set , y  = x23 ]{\file};
				
				\def\file{pictures/State_EA_dis.txt}
				\addplot[red!80, thick, dashed]      table[x = k_set , y  = x21 ]{\file};
				\addplot[blue!80, thick, dashed]      table[x = k_set , y  = x22 ]{\file};
				\addplot[green!80, thick, dashed]      table[x = k_set , y  = x23 ]{\file};
				
				\def\file{pictures/xc.txt}
				\addplot[red!80, thick, dotted]      table[x = k_set_xc , y  = xc_1 ]{\file};
				\addplot[blue!80, thick, dotted]      table[x = k_set_xc , y  = xc_2 ]{\file};
				\addplot[green!80, thick, dotted]      table[x = k_set_xc , y  = xc_3 ]{\file};
				
				\legend{$\!\! x_2^1 \!\!$, $\!\! x_2^2 \!\!$, $\!\! x_2^3 \!\!$}
			\end{axis}
			\begin{axis}[xlabel={},ylabel={User $3$},
				xmin=0.1, ymin = -0.1, xmax = 19.9,ymax=1.65,legend columns=3,
				width=0.26\textwidth,height=3.5cm,legend pos=north east,
				xticklabels={,,},
				yshift=-2cm,
				ylabel shift=-0.2cm]
				\def\file{pictures/State_Euclidean.txt}
				\addplot[red!80, thick]      table[x = k_set , y  = x31 ]{\file};
				\addplot[blue!80, thick]      table[x = k_set , y  = x32 ]{\file};
				\addplot[green!80, thick]      table[x = k_set , y  = x33 ]{\file};
				
				\def\file{pictures/State_EA_dis.txt}
				\addplot[red!80, thick, dashed]      table[x = k_set , y  = x31 ]{\file};
				\addplot[blue!80, thick, dashed]      table[x = k_set , y  = x32 ]{\file};
				\addplot[green!80, thick, dashed]      table[x = k_set , y  = x33 ]{\file};
				
				\def\file{pictures/xc.txt}
				\addplot[red!80, thick, dotted]      table[x = k_set_xc , y  = xc_1 ]{\file};
				\addplot[blue!80, thick, dotted]      table[x = k_set_xc , y  = xc_2 ]{\file};
				\addplot[green!80, thick, dotted]      table[x = k_set_xc , y  = xc_3 ]{\file};
				
				\legend{$\!\! x_3^1 \!\!$, $\!\! x_3^2 \!\!$, $\!\! x_3^3 \!\!$}
			\end{axis}
			\begin{axis}[xlabel={},ylabel={User $4$},
				xmin=0.1, ymin = -0.1, xmax = 19.9,ymax=1.65,legend columns=3,
				width=0.26\textwidth,height=3.5cm,legend pos=north east,
				xticklabels={,,},
				yshift=-2cm,xshift=4cm,
				ylabel shift=-0.2cm]
				\def\file{pictures/State_Euclidean.txt}
				\addplot[red!80, thick]      table[x = k_set , y  = x41 ]{\file};
				\addplot[blue!80, thick]      table[x = k_set , y  = x42 ]{\file};
				\addplot[green!80, thick]      table[x = k_set , y  = x43 ]{\file};
				
				\def\file{pictures/State_EA_dis.txt}
				\addplot[red!80, thick, dashed]      table[x = k_set , y  = x41 ]{\file};
				\addplot[blue!80, thick, dashed]      table[x = k_set , y  = x42 ]{\file};
				\addplot[green!80, thick, dashed]      table[x = k_set , y  = x43 ]{\file};
				
				\def\file{pictures/xc.txt}
				\addplot[red!80, thick, dotted]      table[x = k_set_xc , y  = xc_1 ]{\file};
				\addplot[blue!80, thick, dotted]      table[x = k_set_xc , y  = xc_2 ]{\file};
				\addplot[green!80, thick, dotted]      table[x = k_set_xc , y  = xc_3 ]{\file};
				
				\legend{$\!\! x_4^1 \!\!$, $\!\! x_4^2 \!\!$, $\!\! x_4^3 \!\!$}
			\end{axis}
			\begin{axis}[xlabel={},ylabel={User $5$},
				xmin=0.1, ymin = -0.1, xmax = 19.9,ymax=1.65,legend columns=3,
				width=0.26\textwidth,height=3.5cm,legend pos=north east,
				xticklabels={,,},
				yshift=-4.0cm,
				ylabel shift=-0.2cm]
				\def\file{pictures/State_Euclidean.txt}
				\addplot[red!80, thick]      table[x = k_set , y  = x51 ]{\file};
				\addplot[blue!80, thick]      table[x = k_set , y  = x52 ]{\file};
				\addplot[green!80, thick]      table[x = k_set , y  = x53 ]{\file};
				
				\def\file{pictures/State_EA_dis.txt}
				\addplot[red!80, thick, dashed]      table[x = k_set , y  = x51 ]{\file};
				\addplot[blue!80, thick, dashed]      table[x = k_set , y  = x52 ]{\file};
				\addplot[green!80, thick, dashed]      table[x = k_set , y  = x53 ]{\file};
				
				\def\file{pictures/xc.txt}
				\addplot[red!80, thick, dotted]      table[x = k_set_xc , y  = xc_1 ]{\file};
				\addplot[blue!80, thick, dotted]      table[x = k_set_xc , y  = xc_2 ]{\file};
				\addplot[green!80, thick, dotted]      table[x = k_set_xc , y  = xc_3 ]{\file};
				
				\legend{$\!\! x_5^1 \!\!$, $\!\! x_5^2 \!\!$, $\!\! x_5^3 \!\!$}
			\end{axis}
			\begin{axis}[xlabel={},ylabel={User $6$},
				xmin=0.1, ymin = -0.1, xmax = 19.9,ymax=1.65,legend columns=3,
				width=0.26\textwidth,height=3.5cm,legend pos=north east,
				xticklabels={,,},
				yshift=-4.0cm,xshift=4cm,
				ylabel shift=-0.2cm]
				\def\file{pictures/State_Euclidean.txt}
				\addplot[red!80, thick]      table[x = k_set , y  = x61 ]{\file};
				\addplot[blue!80, thick]      table[x = k_set , y  = x62 ]{\file};
				\addplot[green!80, thick]      table[x = k_set , y  = x63 ]{\file};
				
				\def\file{pictures/State_EA_dis.txt}
				\addplot[red!80, thick, dashed]      table[x = k_set , y  = x61 ]{\file};
				\addplot[blue!80, thick, dashed]      table[x = k_set , y  = x62 ]{\file};
				\addplot[green!80, thick, dashed]      table[x = k_set , y  = x63 ]{\file};
				
				\def\file{pictures/xc.txt}
				\addplot[red!80, thick, dotted]      table[x = k_set_xc , y  = xc_1 ]{\file};
				\addplot[blue!80, thick, dotted]      table[x = k_set_xc , y  = xc_2 ]{\file};
				\addplot[green!80, thick, dotted]      table[x = k_set_xc , y  = xc_3 ]{\file};
				
				\legend{$\!\! x_6^1 \!\!$, $\!\! x_6^2 \!\!$, $\!\! x_6^3 \!\!$}
			\end{axis}
			\begin{axis}[xlabel={{\color{white} 1111111} $k$},ylabel={User $7$},
				xmin=0.1, ymin = -0.1, xmax = 19.9,ymax=1.65,legend columns=3,
				width=0.26\textwidth,height=3.5cm,legend pos=north east,
				xlabel style={text width=2.5cm},
				yshift=-6cm,
				ylabel shift=-0.2cm]
				\def\file{pictures/State_Euclidean.txt}
				\addplot[red!80, thick]      table[x = k_set , y  = x71 ]{\file};
				\addplot[blue!80, thick]      table[x = k_set , y  = x72 ]{\file};
				\addplot[green!80, thick]      table[x = k_set , y  = x73 ]{\file};
				
				\def\file{pictures/State_EA_dis.txt}
				\addplot[red!80, thick, dashed]      table[x = k_set , y  = x71 ]{\file};
				\addplot[blue!80, thick, dashed]      table[x = k_set , y  = x72 ]{\file};
				\addplot[green!80, thick, dashed]      table[x = k_set , y  = x73 ]{\file};
				
				\def\file{pictures/xc.txt}
				\addplot[red!80, thick, dotted]      table[x = k_set_xc , y  = xc_1 ]{\file};
				\addplot[blue!80, thick, dotted]      table[x = k_set_xc , y  = xc_2 ]{\file};
				\addplot[green!80, thick, dotted]      table[x = k_set_xc , y  = xc_3 ]{\file};
				
				\legend{$\!\! x_7^1 \!\!$, $\!\! x_7^2 \!\!$, $\!\! x_7^3 \!\!$}
			\end{axis}
			\begin{axis}[xlabel={{\color{white} 1111111} $k$},ylabel={User $8$},
				xmin=0.1, ymin = -0.1, xmax = 19.9,ymax=1.65,legend columns=3,
				width=0.26\textwidth,height=3.5cm,legend pos=north east,
				xlabel style={text width=2.5cm},
				yshift=-6cm,xshift=4cm,
				ylabel shift=-0.2cm]
				\def\file{pictures/State_Euclidean.txt}
				\addplot[red!80, thick]      table[x = k_set , y  = x81 ]{\file};
				\addplot[blue!80, thick]      table[x = k_set , y  = x82 ]{\file};
				\addplot[green!80, thick]      table[x = k_set , y  = x83 ]{\file};
				
				\def\file{pictures/State_EA_dis.txt}
				\addplot[red!80, thick, dashed]      table[x = k_set , y  = x81 ]{\file};
				\addplot[blue!80, thick, dashed]      table[x = k_set , y  = x82 ]{\file};
				\addplot[green!80, thick, dashed]      table[x = k_set , y  = x83 ]{\file};
				
				\def\file{pictures/xc.txt}
				\addplot[red!80, thick, dotted]      table[x = k_set_xc , y  = xc_1 ]{\file};
				\addplot[blue!80, thick, dotted]      table[x = k_set_xc , y  = xc_2 ]{\file};
				\addplot[green!80, thick, dotted]      table[x = k_set_xc , y  = xc_3 ]{\file};
				
				\legend{$\!\! x_8^1 \!\!$, $\!\! x_8^2 \!\!$, $\!\! x_8^3 \!\!$}
			\end{axis}
		\end{tikzpicture}
		\caption{
			The evolution of user opinion in ODRS framework with distance-based method. The dotted line is the target opinion, the solid line shows the RL result, and the dashed line shows the EA result.
		}
		\label{figure_State_Euclidean}
	\end{figure}
	
	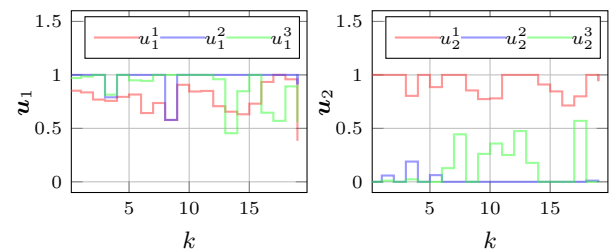
\begin{figure}[t] 
		\centering
		\def\file{pictures/Input_Euclidean.txt}
		\begin{tikzpicture}
			\begin{axis}[xlabel={$k$},ylabel={$\bm{u}_1$},
				xmin=0.2, ymin = -0.1, xmax = 19.8,ymax=1.6,legend columns=3,
				width=0.26\textwidth,height=4cm,legend pos=north east,
				ylabel shift=-0.2cm]
				\addplot[red!80, thick, draw opacity = 0.5]      table[x = k_set , y  = u11 ]{\file};
				\addplot[blue!80, thick, draw opacity = 0.5]      table[x = k_set , y  = u12 ]{\file};
				\addplot[green!80, thick, draw opacity = 0.5]      table[x = k_set , y  = u13 ]{\file};
				
				\legend{$\!\! u_1^1 \!\!$, $\!\! u_1^2 \!\!$, $\!\! u_1^3 \!\!$}
			\end{axis}
			\begin{axis}[xlabel={$k$},ylabel={$\bm{u}_2$},
				xmin=0.2, ymin = -0.1, xmax = 19.8,ymax=1.6,legend columns=3,
				width=0.26\textwidth,height=4cm,legend pos=north east,
				ylabel shift=-0.2cm,
				xshift=4cm]
				\addplot[red!80, thick, draw opacity = 0.5]      table[x = k_set , y  = u21 ]{\file};
				\addplot[blue!80, thick, draw opacity = 0.5]      table[x = k_set , y  = u22 ]{\file};
				\addplot[green!80, thick, draw opacity = 0.5]      table[x = k_set , y  = u23 ]{\file};
				
				\legend{$\!\! u_2^1 \!\!$, $\!\! u_2^2 \!\!$, $\!\! u_2^3 \!\!$}
			\end{axis}
		\end{tikzpicture}
		\caption{
			The evolution of propagator opinion in ODRS framework with distance-based method.
		}
		\label{figure_Input_Euclidean}
	\end{figure}
	
	\subsubsection{Angular Similarity}
	
	The training result for the ODRS framework with an angle-based method is shown in \cref{figure_Training_Angle}, demonstrating the convergence of the training process after $2000$ episodes.
	The opinion evolution for users and propagators is shown in \cref{figure_State_Angle} and \cref{figure_Input_Angle}, where the user opinions achieve consensus to $\bm{x}^c$.
	This demonstrates the effectiveness of the reinforcement learning-based control algorithm.
	
	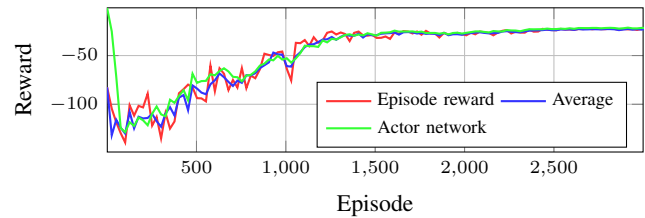
\begin{figure}[t] 
		\centering
		\def\file{pictures/Training_Angle.txt}
		\begin{tikzpicture}
			\begin{axis}[xlabel={Episode},ylabel={Reward},
				xmin=1, ymin = -149, xmax = 2999,ymax=-1,legend columns=2,
				width=0.48\textwidth,height=3.5cm,legend pos= south east
				]
				\addplot[red!80, thick]      table[x = EpisodeNrSet , y  = EpisodeReward ]{\file};
				\addplot[blue!80, thick]      table[x = EpisodeNrSet , y  = AverageReward ]{\file};
				\addplot[green!80, thick]      table[x = EpisodeNrSet , y  = EpisodeQ0 ]{\file};
				
				\legend{Episode reward, Average, Actor network}
			\end{axis}
		\end{tikzpicture}
		\caption{
			Training status for RL-based control strategy in the ODRS framework with the angle-based method, which is reflected by the cumulative rewards over episodes.
		}
		\label{figure_Training_Angle}
	\end{figure}
	
	\begin{figure}[t]
		\centering
		\begin{tikzpicture}
			\def\file{pictures/State_Angle.txt}
			
			\begin{axis}[xlabel={},ylabel={$x_i^1$},
				xmin=0.2, ymin = -0.1, xmax = 19.8,ymax=1.1,legend columns=5,
				width=0.48\textwidth,height=3.5cm,legend pos=south east,
				xticklabels={,,}]
				\def\file{pictures/State_Angle.txt}
				\addplot[red!80, thick, draw opacity = 0.5]      table[x = k_set , y  = x11 ]{\file};
				\addplot[blue!80, thick, draw opacity = 0.5]      table[x = k_set , y  = x21 ]{\file};
				\addplot[green!80, thick, draw opacity = 0.5]      table[x = k_set , y  = x31 ]{\file};
				\addplot[cyan!80, thick, draw opacity = 0.5]      table[x = k_set , y  = x41 ]{\file};
				\addplot[pink!80, thick, draw opacity = 0.5]      table[x = k_set , y  = x51 ]{\file};
				\addplot[teal!80, thick, draw opacity = 0.5]      table[x = k_set , y  = x61 ]{\file};
				\addplot[violet!80, thick, draw opacity = 0.5]      table[x = k_set , y  = x71 ]{\file};
				\addplot[yellow!80, thick, draw opacity = 0.5]      table[x = k_set , y  = x81 ]{\file};
				
				\def\file{pictures/State_EA.txt}
				\addplot[red!80, thick, draw opacity = 0.5, dashed]      table[x = k_set , y  = x11 ]{\file};
				\addplot[blue!80, thick, draw opacity = 0.5, dashed]      table[x = k_set , y  = x21 ]{\file};
				\addplot[green!80, thick, draw opacity = 0.5, dashed]      table[x = k_set , y  = x31 ]{\file};
				\addplot[cyan!80, thick, draw opacity = 0.5, dashed]      table[x = k_set , y  = x41 ]{\file};
				\addplot[pink!80, thick, draw opacity = 0.5, dashed]      table[x = k_set , y  = x51 ]{\file};
				\addplot[teal!80, thick, draw opacity = 0.5, dashed]      table[x = k_set , y  = x61 ]{\file};
				\addplot[violet!80, thick, draw opacity = 0.5, dashed]      table[x = k_set , y  = x71 ]{\file};
				\addplot[yellow!80, thick, draw opacity = 0.5, dashed]      table[x = k_set , y  = x81 ]{\file};
				
				\def\file{pictures/xc.txt}
				\addplot[black!80, thick, draw opacity = 0.5, dotted]      table[x = k_set_xc , y  = xc_2 ]{\file};
				\legend{$\! x_1^1 \!$, $\! x_2^1 \!$, $\! x_3^1 \!$, $\! x_4^1 \!$, $\! x_5^1 \!$, $\! x_6^1 \!$, $\! x_7^1 \!$, $\! x_8^1 \!$}
			\end{axis}
			
			\begin{axis}[xlabel={},ylabel={$x_i^2$},
				xmin=0.2, ymin = -0.1, xmax = 19.8,ymax=1.1,legend columns=5,
				width=0.48\textwidth,height=3.5cm,legend pos=south east,
				xticklabels={,,},
				yshift=-2.0cm]
				\def\file{pictures/State_Angle.txt}
				\addplot[red!80, thick, draw opacity = 0.5]      table[x = k_set , y  = x12 ]{\file};
				\addplot[blue!80, thick, draw opacity = 0.5]      table[x = k_set , y  = x22 ]{\file};
				\addplot[green!80, thick, draw opacity = 0.5]      table[x = k_set , y  = x32 ]{\file};
				\addplot[cyan!80, thick, draw opacity = 0.5]      table[x = k_set , y  = x42 ]{\file};
				\addplot[pink!80, thick, draw opacity = 0.5]      table[x = k_set , y  = x52 ]{\file};
				\addplot[teal!80, thick, draw opacity = 0.5]      table[x = k_set , y  = x62 ]{\file};
				\addplot[violet!80, thick, draw opacity = 0.5]      table[x = k_set , y  = x72 ]{\file};
				\addplot[yellow!80, thick, draw opacity = 0.5]      table[x = k_set , y  = x82 ]{\file};
				
				\def\file{pictures/State_EA.txt}
				\addplot[red!80, thick, draw opacity = 0.5, dashed]      table[x = k_set , y  = x12 ]{\file};
				\addplot[blue!80, thick, draw opacity = 0.5, dashed]      table[x = k_set , y  = x22 ]{\file};
				\addplot[green!80, thick, draw opacity = 0.5, dashed]      table[x = k_set , y  = x32 ]{\file};
				\addplot[cyan!80, thick, draw opacity = 0.5, dashed]      table[x = k_set , y  = x42 ]{\file};
				\addplot[pink!80, thick, draw opacity = 0.5, dashed]      table[x = k_set , y  = x52 ]{\file};
				\addplot[teal!80, thick, draw opacity = 0.5, dashed]      table[x = k_set , y  = x62 ]{\file};
				\addplot[violet!80, thick, draw opacity = 0.5, dashed]      table[x = k_set , y  = x72 ]{\file};
				\addplot[yellow!80, thick, draw opacity = 0.5, dashed]      table[x = k_set , y  = x82 ]{\file};
				
				\def\file{pictures/xc.txt}
				\addplot[black!80, thick, draw opacity = 0.5, dotted]      table[x = k_set_xc , y  = xc_1 ]{\file};
				
				\legend{$\! x_1^2 \!$, $\! x_2^2 \!$, $\! x_3^2 \!$, $\! x_4^2 \!$, $\! x_5^2 \!$, $\! x_6^2 \!$, $\! x_7^2 \!$, $\! x_8^2 \!$}
			\end{axis}
			\begin{axis}[xlabel={$k$},ylabel={$x_i^3$},
				xmin=0.2, ymin = -0.1, xmax = 19.8,ymax=1.1,legend columns=5,
				width=0.48\textwidth,height=3.5cm,legend pos=north east,
				yshift=-4.0cm]
				\def\file{pictures/State_Angle.txt}
				\addplot[red!80, thick, draw opacity = 0.5]      table[x = k_set , y  = x13 ]{\file};
				\addplot[blue!80, thick, draw opacity = 0.5]      table[x = k_set , y  = x23 ]{\file};
				\addplot[green!80, thick, draw opacity = 0.5]      table[x = k_set , y  = x33 ]{\file};
				\addplot[cyan!80, thick, draw opacity = 0.5]      table[x = k_set , y  = x43 ]{\file};
				\addplot[pink!80, thick, draw opacity = 0.5]      table[x = k_set , y  = x53 ]{\file};
				\addplot[teal!80, thick, draw opacity = 0.5]      table[x = k_set , y  = x63 ]{\file};
				\addplot[violet!80, thick, draw opacity = 0.5]      table[x = k_set , y  = x73 ]{\file};
				\addplot[yellow!80, thick, draw opacity = 0.5]      table[x = k_set , y  = x83 ]{\file};
				
				\def\file{pictures/State_EA.txt}
				\addplot[red!80, thick, draw opacity = 0.5, dashed]      table[x = k_set , y  = x13 ]{\file};
				\addplot[blue!80, thick, draw opacity = 0.5, dashed]      table[x = k_set , y  = x23 ]{\file};
				\addplot[green!80, thick, draw opacity = 0.5, dashed]      table[x = k_set , y  = x33 ]{\file};
				\addplot[cyan!80, thick, draw opacity = 0.5, dashed]      table[x = k_set , y  = x43 ]{\file};
				\addplot[pink!80, thick, draw opacity = 0.5, dashed]      table[x = k_set , y  = x53 ]{\file};
				\addplot[teal!80, thick, draw opacity = 0.5, dashed]      table[x = k_set , y  = x63 ]{\file};
				\addplot[violet!80, thick, draw opacity = 0.5, dashed]      table[x = k_set , y  = x73 ]{\file};
				\addplot[yellow!80, thick, draw opacity = 0.5, dashed]      table[x = k_set , y  = x83 ]{\file};
				
				\def\file{pictures/xc.txt}
				\addplot[black!80, thick, draw opacity = 0.5, dotted]      table[x = k_set_xc , y  = xc_3 ]{\file};
				
				\legend{$\! x_1^3 \!$, $\! x_2^3 \!$, $\! x_3^3 \!$, $\! x_4^3 \!$, $\! x_5^3 \!$, $\! x_6^3 \!$, $\! x_7^3 \!$, $\! x_8^3 \!$}
			\end{axis}
		\end{tikzpicture}
		\caption{
			The evolution of user opinion in ODRS framework with angle-based method. The dotted line is the target opinion, the solid line shows the RL result, and the dashed line shows the EA result.
		}
		\label{figure_State_Angle}
	\end{figure}
	
	\begin{figure}[t] 
		\centering
		\def\file{pictures/Input_Angle.txt}
		\begin{tikzpicture}
			\begin{axis}[xlabel={$k$},ylabel={$\bm{u}_1$},
				xmin=0.2, ymin = -0.1, xmax = 19.8,ymax=1.6,legend columns=3,
				width=0.26\textwidth,height=4cm,legend pos=north east,
				ylabel shift=-0.2cm]
				\addplot[red!80, thick, draw opacity = 0.5]      table[x = k_set , y  = u11 ]{\file};
				\addplot[blue!80, thick, draw opacity = 0.5]      table[x = k_set , y  = u12 ]{\file};
				\addplot[green!80, thick, draw opacity = 0.5]      table[x = k_set , y  = u13 ]{\file};
				
				\legend{$\!\! u_1^1 \!\!$, $\!\! u_1^2 \!\!$, $\!\! u_1^3 \!\!$}
			\end{axis}
			\begin{axis}[xlabel={$k$},ylabel={$\bm{u}_2$},
				xmin=0.2, ymin = -0.1, xmax = 19.8,ymax=1.6,legend columns=3,
				width=0.26\textwidth,height=4cm,legend pos=north east,
				ylabel shift=-0.2cm,
				xshift=4cm]
				\addplot[red!80, thick, draw opacity = 0.5]      table[x = k_set , y  = u21 ]{\file};
				\addplot[blue!80, thick, draw opacity = 0.5]      table[x = k_set , y  = u22 ]{\file};
				\addplot[green!80, thick, draw opacity = 0.5]      table[x = k_set , y  = u23 ]{\file};
				
				\legend{$\!\! u_2^1 \!\!$, $\!\! u_2^2 \!\!$, $\!\! u_2^3 \!\!$}
			\end{axis}
		\end{tikzpicture}
		\vspace{-0.3cm}
		\caption{
			The evolution of propagator opinion in ODRS framework with angle-based method.
		}
		\vspace{-0.3cm}
		\label{figure_Input_Angle}
	\end{figure}
	
	\subsubsection{Evolutionary Algorithm}

    For comparison, we develop a simple evolutionary algorithm to guide the propagators without using learning techniques.
	Specifically, we represent the control inputs over the entire time horizon as individuals in the population, and use the cost function as the fitness function. The next generation is produced using a combination of Gaussian mutation and uniform crossover. A tournament selection is employed to select elites as parents, which are then used to generate new offspring. This iterative process continues until the fitness does not improve anymore.
	
	The comparison results are given in \cref{figure_State_Euclidean} for the distance-based method and \cref{figure_State_Angle} for the angle-based method, where different line styles are used to represent the results of EA and RL.
	As the results in the figures are not sufficiently clear, we summarize the performance of the two algorithms in \cref{table_drl_ea_comparison}. As shown, both algorithms achieve comparable final costs, with RL performing slightly better than EA. In terms of tracking the target opinion, RL yields a lower average error compared to EA. However, EA converges with significantly fewer iterations than RL. Based on these observations, we conclude that while RL may produce higher-quality solutions in this scenario, EA requires less computational effort.
	\begin{table}[!t]
		\renewcommand{\arraystretch}{1.3}
		\caption{Performance Comparison Between RL and EA}
		\label{table_drl_ea_comparison}
		\centering
		\begin{tabular}{c c c c c c c}
			\hline
			\textbf{Metrics} & \textbf{Method} & \textbf{Final Cost} & \textbf{Avg Deviation} &  \textbf{Iteration Times}\\
			\hline
			Distance & RL & \textbf{203.4198} & \textbf{0.0942} & 3000\\
			% \hline
			Distance & EA  & 241.4343 & 0.1434 & \textbf{272} \\
			\hline
			Angle & RL & \textbf{232.7229} & \textbf{0.1099} & 3000\\
			% \hline
			Angle & EA  & 268.6896 & 0.1568 & \textbf{276} \\
			\hline
		\end{tabular}
	\end{table}

	\section{Conclusion}
	\label{section_conclusion}
	
	In this paper, a framework combining opinion dynamics and a recommendation system is proposed, i.e., the ODRS framework, where the connection between individuals follows either a distance- or an angle-based method.
	The proposed framework is analytical, where the convergence and clustering properties are formally proven.
	Moreover, the ODRS framework allows the introduction of a controller, i.e., propagators, whose behavior is modeled as an optimal control problem to shape users' opinions.
	Due to the high nonlinearity of the proposed framework, the optimal propagation strategy is obtained using reinforcement learning.
	The properties of the ODRS framework and the effectiveness of the reinforcement learning-based opinion-shaping strategy are demonstrated via simulations with the actual Yelp data set.
	
	We believe that our work contributes to a better quantitative understanding of the societal impacts of recommendation algorithms, supports the fair application of such systems, and offers insights toward the broader goal of promoting responsible and ethical technology design.
	
	\bibliographystyle{IEEEtran}
	\bibliography{refs}

% Generated by IEEEtran.bst, version: 1.14 (2015/08/26)
\begin{thebibliography}{10}
\providecommand{\url}[1]{#1}
\csname url@samestyle\endcsname
\providecommand{\newblock}{\relax}
\providecommand{\bibinfo}[2]{#2}
\providecommand{\BIBentrySTDinterwordspacing}{\spaceskip=0pt\relax}
\providecommand{\BIBentryALTinterwordstretchfactor}{4}
\providecommand{\BIBentryALTinterwordspacing}{\spaceskip=\fontdimen2\font plus
\BIBentryALTinterwordstretchfactor\fontdimen3\font minus
  \fontdimen4\font\relax}
\providecommand{\BIBforeignlanguage}[2]{{%
\expandafter\ifx\csname l@#1\endcsname\relax
\typeout{** WARNING: IEEEtran.bst: No hyphenation pattern has been}%
\typeout{** loaded for the language `#1'. Using the pattern for}%
\typeout{** the default language instead.}%
\else
\language=\csname l@#1\endcsname
\fi
#2}}
\providecommand{\BIBdecl}{\relax}
\BIBdecl

\bibitem{kwak_fragile_2011}
H.~Kwak, H.~Chun, and S.~Moon, ``Fragile online relationship: A first look at
  unfollow dynamics in twitter,'' in \emph{Proceedings of the {{SIGCHI
  Conference}} on {{Human Factors}} in {{Computing Systems}}}, ser. {{CHI}}
  '11.\hskip 1em plus 0.5em minus 0.4em\relax New York, NY, USA: Association
  for Computing Machinery, May 2011, pp. 1091--1100.

\bibitem{del_vicario_echo_2016}
M.~Del~Vicario, G.~Vivaldo, A.~Bessi, F.~Zollo, A.~Scala, G.~Caldarelli, and
  W.~Quattrociocchi, ``Echo chambers: {{Emotional}} contagion and group
  polarization on facebook,'' \emph{Scientific reports}, vol.~6, no.~1, p.
  37825, 2016.

\bibitem{cinelli_echo_2021}
M.~Cinelli, G.~De~Francisci~Morales, A.~Galeazzi, W.~Quattrociocchi, and
  M.~Starnini, ``The echo chamber effect on social media,'' \emph{Proceedings
  of the National Academy of Sciences}, vol. 118, no.~9, p. e2023301118, Mar.
  2021.

\bibitem{degroot_reaching_1974}
M.~H. Degroot, ``Reaching a {{Consensus}},'' \emph{Journal of the American
  Statistical Association}, vol.~69, no. 345, pp. 118--121, Mar. 1974.

\bibitem{french_jr_formal_1956}
J.~R.~P. French~Jr., ``A formal theory of social power,'' \emph{Psychological
  Review}, vol.~63, no.~3, pp. 181--194, 1956.

\bibitem{friedkin_social_1990}
N.~E. Friedkin and E.~C. Johnsen, ``Social influence and opinions,'' \emph{The
  Journal of Mathematical Sociology}, vol.~15, no. 3-4, pp. 193--206, Jan.
  1990.

\bibitem{moreau_stability_2005}
L.~Moreau, ``Stability of multiagent systems with time-dependent communication
  links,'' \emph{IEEE Transactions on Automatic Control}, vol.~50, no.~2, pp.
  169--182, Feb. 2005.

\bibitem{munz_consensus_2011}
U.~Munz, A.~Papachristodoulou, and F.~Allgower, ``Consensus in {{Multi-Agent
  Systems With Coupling Delays}} and {{Switching Topology}},'' \emph{IEEE
  Transactions on Automatic Control}, vol.~56, no.~12, pp. 2976--2982, Dec.
  2011.

\bibitem{he_reinforcement-learning-based_2023}
Q.~He, Y.~Lv, X.~Wang, J.~Li, M.~Huang, L.~Ma, and Y.~Cai,
  ``Reinforcement-{{Learning-Based Dynamic Opinion Maximization Framework}} in
  {{Signed Social Networks}},'' \emph{IEEE Transactions on Cognitive and
  Developmental Systems}, vol.~15, no.~1, pp. 54--64, Mar. 2023.

\bibitem{krause_discrete_2000}
U.~Krause, ``A discrete nonlinear and non-autonomous model of consensus
  formation,'' \emph{Communications in difference equations}, vol. 2000, pp.
  227--236, 2000.

\bibitem{lorenz_continuous_2007-2}
J.~Lorenz, ``Continuous opinion dynamics under bounded confidence: A survey,''
  \emph{International Journal of Modern Physics C}, vol.~18, no.~12, pp.
  1819--1838, Dec. 2007.

\bibitem{gu_co-evolution_2017}
Y.~Gu, Y.~Sun, and J.~Gao, ``The {{Co-Evolution Model}} for {{Social Network
  Evolving}} and {{Opinion Migration}},'' in \emph{Proceedings of the 23rd
  {{ACM SIGKDD International Conference}} on {{Knowledge Discovery}} and {{Data
  Mining}}}, ser. {{KDD}} '17.\hskip 1em plus 0.5em minus 0.4em\relax New York,
  NY, USA: Association for Computing Machinery, Aug. 2017, pp. 175--184.

\bibitem{kang_coevolution_2022}
R.~Kang and X.~Li, ``Coevolution of opinion dynamics on evolving signed
  appraisal networks,'' \emph{Automatica}, vol. 137, p. 110138, Mar. 2022.

\bibitem{liu_emergence_2023}
J.~Liu, S.~Huang, N.~Aden, N.~Johnson, and C.~Song, ``Emergence of polarization
  in coevolving networks,'' \emph{Physical Review Letters}, vol. 130, no.~3, p.
  037401, Jan. 2023.

\bibitem{liu_modeling_2024}
F.~Liu, S.~Cui, G.~Chen, W.~Mei, and H.~Gao, ``Modeling, analysis, and
  manipulation of co-evolution between appraisal dynamics and opinion
  dynamics,'' \emph{Automatica}, vol. 167, p. 111797, Sep. 2024.

\bibitem{berjani2011recommendation}
B.~Berjani and T.~Strufe, ``A recommendation system for spots in location-based
  online social networks,'' in \emph{Proceedings of the 4th workshop on social
  network systems}, 2011, pp. 1--6.

\bibitem{isinkaye2015recommendation}
F.~O. Isinkaye, Y.~O. Folajimi, and B.~A. Ojokoh, ``Recommendation systems:
  Principles, methods and evaluation,'' \emph{Egyptian informatics journal},
  vol.~16, no.~3, pp. 261--273, 2015.

\bibitem{bellina_effect_2023}
A.~Bellina, C.~Castellano, P.~Pineau, G.~Iannelli, and G.~De~Marzo, ``Effect of
  collaborative-filtering-based recommendation algorithms on opinion
  polarization,'' \emph{Physical Review E}, vol. 108, no.~5, p. 054304, Nov.
  2023.

\bibitem{piao_humanai_2023}
J.~Piao, J.~Liu, F.~Zhang, J.~Su, and Y.~Li, ``Human--{{AI}} adaptive dynamics
  drives the emergence of information cocoons,'' \emph{Nature Machine
  Intelligence}, pp. 1--11, Oct. 2023.

\bibitem{vendeville_opening_2023}
A.~Vendeville, A.~Giovanidis, E.~Papanastasiou, and B.~Guedj, ``Opening
  up~{{Echo Chambers}} via~{{Optimal Content Recommendation}},'' in
  \emph{Complex {{Networks}} and {{Their Applications XI}}}, H.~Cherifi, R.~N.
  Mantegna, L.~M. Rocha, C.~Cherifi, and S.~Miccich{\`e}, Eds.\hskip 1em plus
  0.5em minus 0.4em\relax Cham: Springer International Publishing, 2023, pp.
  74--85.

\bibitem{cinus_effect_2022}
F.~Cinus, M.~Minici, C.~Monti, and F.~Bonchi, ``The {{Effect}} of {{People
  Recommenders}} on {{Echo Chambers}} and {{Polarization}},'' \emph{Proceedings
  of the International AAAI Conference on Web and Social Media}, vol.~16, pp.
  90--101, May 2022.

\bibitem{perra_modelling_2019}
N.~Perra and L.~E. Rocha, ``Modelling opinion dynamics in the age of
  algorithmic personalisation,'' \emph{Scientific Reports}, vol.~9, no.~1, pp.
  1--11, 2019.

\bibitem{pescetelli_indirect_2022}
N.~Pescetelli, D.~Barkoczi, and M.~Cebrian, ``Indirect {{Causal Influence}} of
  a {{Single Bot}} on {{Opinion Dynamics Through}} a {{Simple Recommendation
  Algorithm}},'' in \emph{Complex {{Networks}} \& {{Their Applications X}}},
  R.~M. Benito, C.~Cherifi, H.~Cherifi, E.~Moro, L.~M. Rocha, and
  M.~{Sales-Pardo}, Eds.\hskip 1em plus 0.5em minus 0.4em\relax Cham: Springer
  International Publishing, 2022, pp. 28--41.

\bibitem{ramaciotti_morales_auditing_2021}
P.~Ramaciotti~Morales and J.-P. Cointet, ``Auditing the {{Effect}} of {{Social
  Network Recommendations}} on {{Polarization}} in {{Geometrical Ideological
  Spaces}},'' in \emph{Fifteenth {{ACM Conference}} on {{Recommender
  Systems}}}.\hskip 1em plus 0.5em minus 0.4em\relax New York, NY, USA:
  Association for Computing Machinery, Sep. 2021, pp. 627--632.

\bibitem{santos_link_2021}
F.~P. Santos, Y.~Lelkes, and S.~A. Levin, ``Link recommendation algorithms and
  dynamics of polarization in online social networks,'' \emph{Proceedings of
  the National Academy of Sciences}, vol. 118, no.~50, p. e2102141118, Dec.
  2021.

\bibitem{tornberg_how_2022}
P.~T{\"o}rnberg, ``How digital media drive affective polarization through
  partisan sorting,'' \emph{Proceedings of the National Academy of Sciences},
  vol. 119, no.~42, p. e2207159119, Oct. 2022.

\bibitem{rossi_closed_2022}
W.~S. Rossi, J.~W. Polderman, and P.~Frasca, ``The {{Closed Loop Between
  Opinion Formation}} and {{Personalized Recommendations}},'' \emph{IEEE
  Transactions on Control of Network Systems}, vol.~9, no.~3, pp. 1092--1103,
  Sep. 2022.

\bibitem{iannelli_filter_2022}
G.~Iannelli, G.~De~Marzo, and C.~Castellano, ``Filter bubble effect in the
  multistate voter model,'' \emph{Chaos: An Interdisciplinary Journal of
  Nonlinear Science}, vol.~32, no.~4, p. 043103, Apr. 2022.

\bibitem{nedic_multidimensional_2012}
A.~Nedić and B.~Touri, ``Multi-dimensional {Hegselmann}-{Krause} dynamics,''
  in \emph{2012 {IEEE} 51st {IEEE} {Conference} on {Decision} and {Control}
  ({CDC})}, Dec. 2012, pp. 68--73, iSSN: 0743-1546.

\bibitem{bhattacharyya_convergence_2013}
A.~Bhattacharyya, M.~Braverman, B.~Chazelle, and H.~L. Nguyen, ``On the
  convergence of the {{Hegselmann-Krause}} system,'' in \emph{Proceedings of
  the 4th Conference on {{Innovations}} in {{Theoretical Computer Science}}},
  ser. {{ITCS}} '13.\hskip 1em plus 0.5em minus 0.4em\relax New York, NY, USA:
  Association for Computing Machinery, Jan. 2013, pp. 61--66.

\bibitem{etesami_game-theoretic_2015-1}
S.~R. Etesami and T.~Ba{\c s}ar, ``Game-{{Theoretic Analysis}} of the
  {{Hegselmann-Krause Model}} for {{Opinion Dynamics}} in {{Finite
  Dimensions}},'' \emph{IEEE Transactions on Automatic Control}, vol.~60,
  no.~7, pp. 1886--1897, Jul. 2015.

\bibitem{proskurnikov_recurrent_2020}
A.~V. Proskurnikov, G.~C. Calafiore, and M.~Cao, ``Recurrent averaging
  inequalities in multi-agent control and social dynamics modeling,''
  \emph{Annual Reviews in Control}, vol.~49, pp. 95--112, Jan. 2020.

\bibitem{bernardo_bounded_2024}
C.~Bernardo, C.~Altafini, A.~Proskurnikov, and F.~Vasca, ``Bounded confidence
  opinion dynamics: {{A}} survey,'' \emph{Automatica}, vol. 159, p. 111302,
  Jan. 2024.

\bibitem{khalil2015nonlinear}
H.~K. Khalil, \emph{{Nonlinear Systems}}.\hskip 1em plus 0.5em minus
  0.4em\relax {Prentice-Hall}, 2002.

\bibitem{proskurnikov_tutorial_2018}
A.~V. Proskurnikov and R.~Tempo, ``A tutorial on modeling and analysis of
  dynamic social networks. {{Part II}},'' \emph{Annual Reviews in Control},
  vol.~45, pp. 166--190, Jan. 2018.

\bibitem{lederer2019uniform}
A.~Lederer, J.~Umlauft, and S.~Hirche, ``Uniform error bounds for gaussian
  process regression with application to safe control,'' \emph{Advances in
  Neural Information Processing Systems}, vol.~32, 2019.

\bibitem{delsarte_spherical_1991}
P.~Delsarte, J.~M. Goethals, and J.~J. Seidel, ``{{SPHERICAL CODES AND
  DESIGNS}},'' in \emph{Geometry and {{Combinatorics}}}, D.~G. Corneil and
  R.~Mathon, Eds.\hskip 1em plus 0.5em minus 0.4em\relax Academic Press, Jan.
  1991, pp. 68--93.

\bibitem{fejes1953lagerungen}
T.~L. Fejes, ``Lagerungen in der ebene auf der kugel und im raum,'' \emph{Die
  Grundlagen der mathematischen Wissenchaften in Einzeldarstellung}, 1953.

\bibitem{hamkins1996design}
J.~Hamkins, \emph{Design and analysis of spherical codes}.\hskip 1em plus 0.5em
  minus 0.4em\relax University of Illinois at Urbana-Champaign, 1996.

\bibitem{_spherical_}
``Spherical {{Codes}},'' http://neilsloane.com/packings/.

\bibitem{gao2021quasi}
L.~Gao, X.~Dai, M.~Kleeberger, and J.~Fottner, ``Quasi-static optimal control
  strategy of lattice boom crane based on large-scale flexible non-linear
  dynamics,'' in \emph{International Conference on Simulation and Modeling
  Methodologies, Technologies and Applications}.\hskip 1em plus 0.5em minus
  0.4em\relax Springer, 2021, pp. 153--177.

\bibitem{chen2025modeling}
Y.~Chen, C.~Li, F.~Liu, and M.~Buss, ``Modeling propagation competition between
  hostile influential groups using opinion dynamics,''
  \emph{at-Automatisierungstechnik}, vol.~73, no.~1, pp. 50--60, 2025.

\bibitem{chen2025distributed}
Y.~Chen, X.~Dai, S.~Zhang, M.~Buss, S.~Hirche, and F.~Liu, ``Distributed model
  predictive control of epidemics via transportation flow regulations,''
  \emph{IFAC-PapersOnLine}, vol.~59, no.~4, pp. 163--168, 2025.

\bibitem{_yelp_}
``Yelp {{Dataset}},'' https://www.yelp.com/dataset.

\bibitem{ramaciotti_morales_auditing_2021-1}
P.~Ramaciotti~Morales and J.-P. Cointet, ``Auditing the {{Effect}} of {{Social
  Network Recommendations}} on {{Polarization}} in {{Geometrical Ideological
  Spaces}},'' in \emph{Proceedings of the 15th {{ACM Conference}} on
  {{Recommender Systems}}}, ser. {{RecSys}} '21.\hskip 1em plus 0.5em minus
  0.4em\relax New York, NY, USA: Association for Computing Machinery, Sep.
  2021, pp. 627--632.

\bibitem{bellina_effect_2023-1}
A.~Bellina, C.~Castellano, P.~Pineau, G.~Iannelli, and G.~De~Marzo, ``Effect of
  collaborative-filtering-based recommendation algorithms on opinion
  polarization,'' \emph{Physical Review E}, vol. 108, no.~5, p. 054304, Nov.
  2023.

\end{thebibliography}
	
	\vspace{-33pt}
	\begin{IEEEbiography}[{\includegraphics[width=1in,height=1.25in,clip,keepaspectratio]{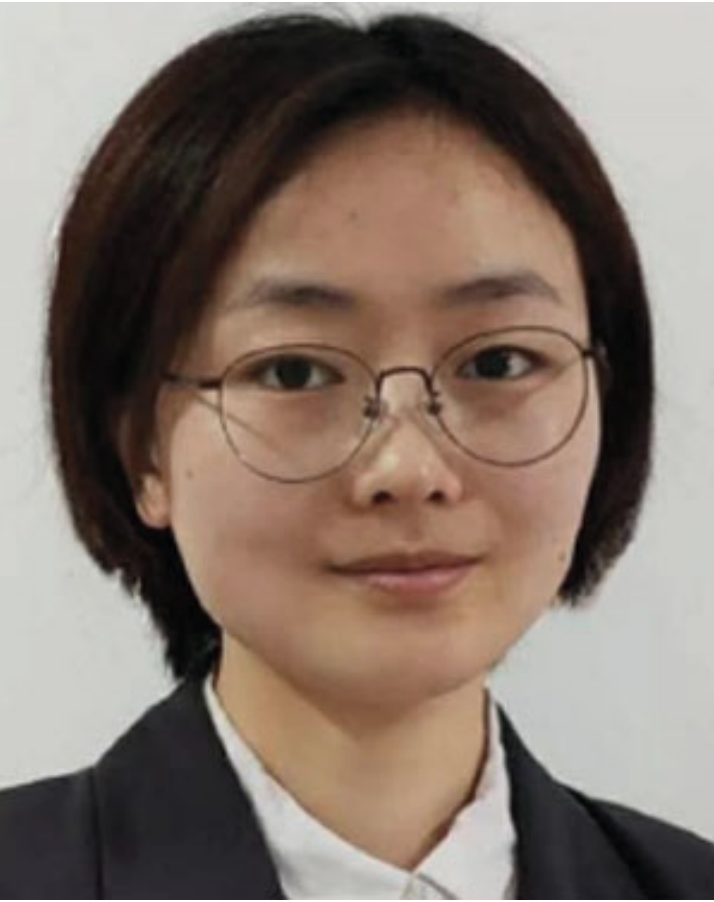}}]{Yuhong Chen}
		received the B.Eng. and M.Sc. degrees in control theory and engineering from the Harbin Institute of Technology, Harbin, China, in 2017 and 2019, respectively. She is currently working toward the Ph.D. degree with the Chair of Automatic Control Engineering, Technical University of Munich, Munich, Germany.
		
		Her current research interests include modeling, analysis, and control on social networks.
	\end{IEEEbiography}
	\vspace{-30pt}
	
	\begin{IEEEbiography}[{\includegraphics[width=1in,height=1.25in,clip,keepaspectratio]{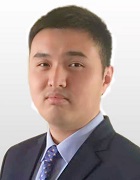}}]{Xiaobing Dai}
		received the B.Sc. mechanical engineering from the Tongji University, Shanghai, China, in 2018 with direction in mechatronics, building environment and civil engineering. He received double M.Sc degrees in Mechanical Engineering, Mechatronics and Robotics from the Technical University of Munich, Munich, Germany, in 2021. 
		Since February 2022, he is a PhD student at the Chair of Information-oriented Control, TUM School of Computation, Information and Technology at the Technical University of Munich, Munich, Germany. 
		
		His current research interests include efficient machine learning, networked control systems, and safe learning-based control.
	\end{IEEEbiography}
	\vspace{-30pt}
	
	\begin{IEEEbiography}[{\includegraphics[width=1in,height=1.25in,clip,keepaspectratio]{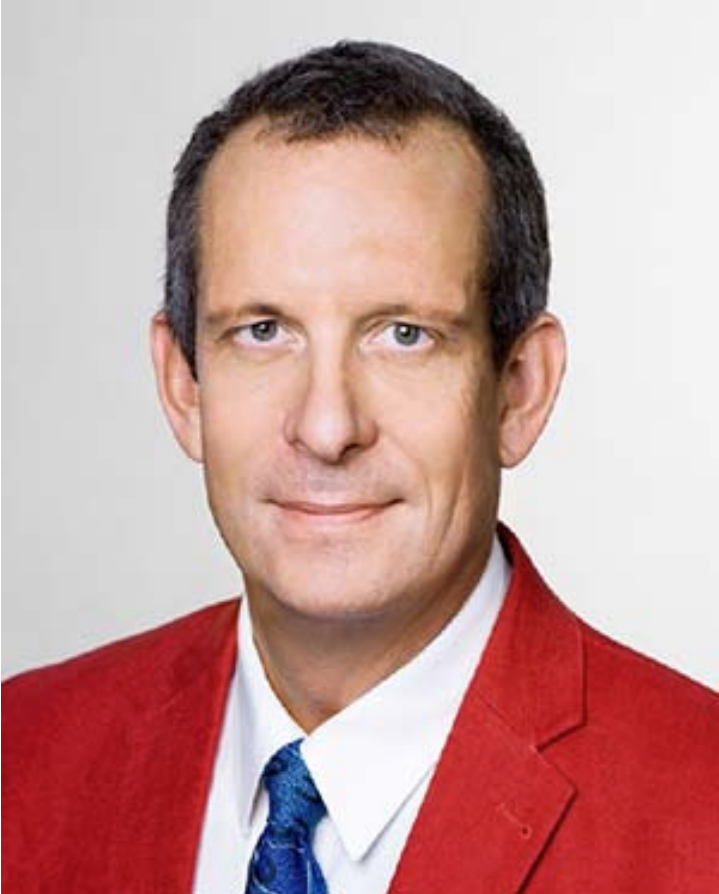}}]{Martin Buss}
		(Fellow, IEEE)
		received the diploma engineering degree in electrical engineering from the Technische Universität Darmstadt, Darmstadt, Germany, in 1990, and the doctor of engineering degree in electrical engineering from The University of Tokyo, Tokyo, Japan, in 1994.
		
		Since 2003, he has been a Full Professor (Chair) with the Chair of Automatic Control Engineering, Faculty of Electrical Engineering and Information Technology, Technical University of Munich. His research interests include automatic control, mechatronics, multimodal human system interfaces, optimization, nonlinear, and hybrid discrete-continuous systems.
		
		Dr. Buss was the recipient of the ERC Advanced Grant SHRINE.
	\end{IEEEbiography}
	\vspace{-30pt}
	
	\begin{IEEEbiography}[{\includegraphics[width=1in,height=1.25in,clip,keepaspectratio]{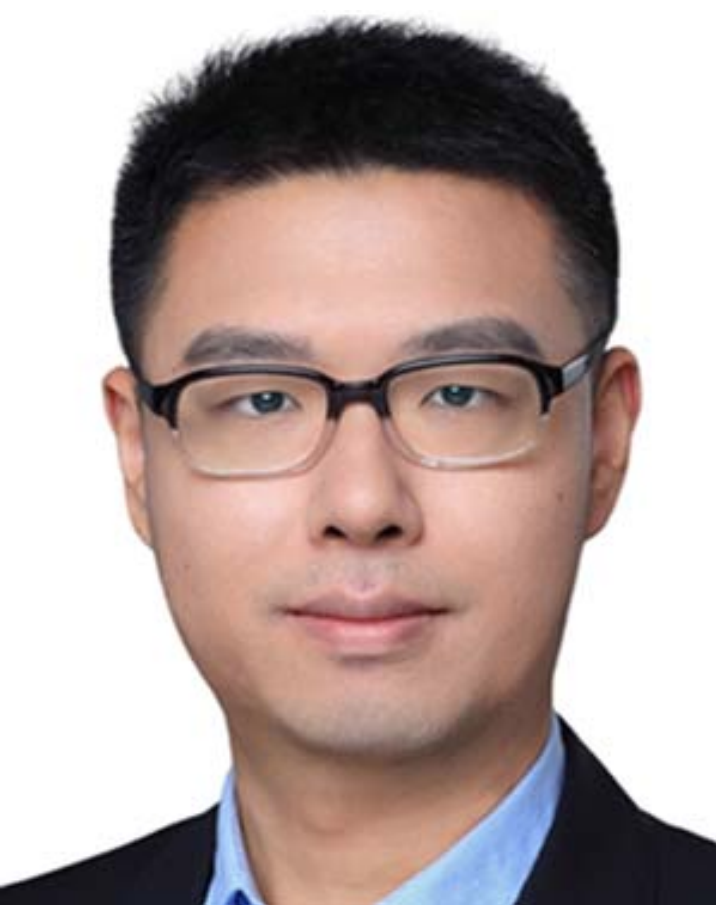}}]{Fangzhou Liu}
		(Member, IEEE)
		received the Doktor-Ingenieur degree in electrical engineering from the Technical University of Munich, Munich, Germany, in 2019.
		
		He was a Lecturer and a Research Fellow with the Chair of Automatic Control Engineering, Technical University of Munich. He is currently a Full Professor with the School of Astronautics, Harbin Institute of Technology, Harbin, China. His current research interests include networked control systems, reinforcement learning, and their applications.
	\end{IEEEbiography}
	\vfill
	
\end{document}